\documentclass{article}
\usepackage[utf8]{inputenc}
\usepackage{amsmath,amssymb,amsthm,pxfonts,listings,color}
\usepackage[colorlinks]{hyperref}

\usepackage{mathrsfs}
\usepackage{stmaryrd}

\usepackage{etex}

\usepackage{amssymb}
\usepackage{amsmath}

\usepackage{mathrsfs}
\usepackage{mathtools}

\usepackage{stmaryrd}

\usepackage{verbatim}
\usepackage{url}
\usepackage{booktabs}
\usepackage{hyperref}

\usepackage{pgfplots}
%\pgfplotsset{width=7cm,compat=1.5.1}

\usepackage{tikz}
\usetikzlibrary{backgrounds, snakes,shapes.multipart,chains}
\tikzset{
    extra padding/.style={
        show background rectangle,
        inner frame sep=#1,
        background rectangle/.style={
            draw=none
        }
    },
    extra padding/.default=0.5cm
}
\tikzset{
  token/.style={
    rectangle split,
    rectangle split parts=4,
    rectangle split part fill=none,
    rectangle split ignore empty parts,
    draw=none}}

\makeatletter
\tikzset{north left/.code =\tikz@lib@place@handle@{#1}{north east}{-1}{0}{north west}{1}}
\tikzset{north right/.code=\tikz@lib@place@handle@{#1}{north west}{1}{0}{north east}{1}}
\tikzset{south left/.code =\tikz@lib@place@handle@{#1}{south east}{-1}{0}{south west}{1}}
\tikzset{south right/.code=\tikz@lib@place@handle@{#1}{south west}{1}{0}{south east}{1}}

\tikzset{west above/.code=\tikz@lib@place@handle@{#1}{south west}{0}{1}{north west}{1}}
\tikzset{west below/.code=\tikz@lib@place@handle@{#1}{north west}{0}{-1}{south west}{1}}
\tikzset{east above/.code=\tikz@lib@place@handle@{#1}{south east}{0}{1}{north east}{1}}
\tikzset{east below/.code=\tikz@lib@place@handle@{#1}{north east}{0}{-1}{south east}{1}}
\makeatother

\pgfplotsset{width=4cm}

\usepackage[all]{xy}
\usepackage{enumerate}
\usepackage{subcaption}
\usepackage[rightcaption]{sidecap}

\usepackage{syntax}
\usepackage{fancyvrb}
\usepackage{algpseudocode}
\usepackage{algorithm}

\usepackage{todo}

\algrenewcommand\algorithmicindent{1.0em}

%       Aliases for theorems, propositions, lemmas
%       corollaries, definitions, conjectures,
%       remark, and examples.

\newtheorem{theorem}{Theorem}[section]

\newtheorem{remark}[theorem]{Remark}
 
\newtheorem{corollary}[theorem]{Corollary}
\newtheorem{lemma}[theorem]{Lemma}
\newtheorem{proposition}[theorem]{Proposition}

\theoremstyle{definition}

\newtheorem{definition}[theorem]{Definition}

%% SMT %%

%% Generic and Logic %%
\newcommand{\pair}[2]{\mbox{$\langle #1,\; #2 \rangle$}}
\newcommand{\tuple}[1]{\mbox{$\langle #1 \rangle$}}
\newcommand{\powerset}[1]{\mbox{$\mathcal{P}(#1)$}}

\newcommand{\set}[1]{\mbox{$\{ #1 \}$}}
\newcommand{\alt}{\mathrel{|}}

\newcommand{\imp}{\Rightarrow}
\newcommand{\bimp}{\Leftrightarrow}
\newcommand{\deq}{\mathrel{\mathop:}=}
\newcommand{\ptag}[1]{\tag*{\{#1\}}}

%% Linear Logic and Category Theory %%
     %Linear hom arrow 
   %Lin hom back-arrow 

        % in-text adjunction symbol

%% Analysis %%

\newcommand{\nats}{\mathbb{N}}

%% Modal Logic and Semantics %%

%% Domain Theory %%

\newcommand{\sP}{\mathsf{P}}

\newcommand{\cP}{\mathcal{P}}
\newcommand{\cQ}{\mathcal{Q}}

\newcommand{\cU}{\mathcal{U}}
\newcommand{\cV}{\mathcal{V}}

\newcommand{\cX}{\mathcal{X}}
\newcommand{\cY}{\mathcal{Y}}
\newcommand{\cZ}{\mathcal{Z}}

\newcommand{\mfrL}{\mathfrak{L}}

\newcommand{\mfrS}{\mathfrak{S}}

\newcommand{\bbB}{\mathbb{B}}

\newcommand{\bbP}{\mathbb{P}}

\newcommand{\defn}[1]{\textbf{#1}}

\usepackage{bussproofs}
  {\gdef\scalefactor{#1}\begin{center}\proofSkipAmount \leavevmode}%
  {\scalebox{\scalefactor}{\DisplayProof}\proofSkipAmount \end{center} }

\date{\today}

\title{Concurrent Kleene Algebra of Partial Strings}
\date{\today}

\author{Alex Horn and Jade Alglave}

\begin{document}

\maketitle

\begin{abstract}
Concurrent Kleene Algebra (CKA) is a recently proposed algebraic structure by Hoare and collaborators that unifies the laws of concurrent programming. The unifying power of CKA rests largely on the so-called exchange law that describes how concurrent and sequential composition operators can be interchanged. Based on extensive theoretical work on true concurrency in the past, this paper extends Gischer's pomset model with least fixed point operators and formalizes the program refinement relation by \'{E}sik's monotonic bijective morphisms to construct a partial order model of CKA. The existence of such a model is relevant when we want to prove and disprove properties about concurrent programs with loops. In particular, it gives a foundation for the analysis of programs that concurrently access relaxed memory as shown in subsequent work.
\end{abstract}

\section{Introduction}

Concurrency-related bugs are unquestionably one of the most notorious kinds of program errors because most concurrent systems are inherently nondeterministic and their behaviour may therefore not be reproducible. Recent technological advances such as weak memory architectures and highly available distributed services further exacerbate the problem and have renewed interest in the formalization of concurrency semantics.

A recent development in this area includes Concurrent Kleene Algebra (CKA) by Tony Hoare et al.~\cite{HMSW2011}. It is an algebraic semantics of programs that combines the familiar laws of sequential program operators with a new operator for concurrent composition. A distinguishing feature of CKA is its exchange law $(\cU \parallel \cV) ; (\cX \parallel \cY) \subseteq (\cU ; \cX) \parallel (\cV ; \cY)$ that describes how sequential ($;$) and concurrent ($\parallel$) composition operators can be interchanged. Intuitively, the exchange law expresses a divide-and-conquer mechanism for how concurrency may be sequentially implemented on a machine. The exchange law, together with a uniform treatment of programs and their specifications, is key to unifying existing theories of concurrency~\cite{HvS2014}. CKA is such a unifying theory whose universal laws of programming make it well-suited for program correctness proofs. Conversely, however, pure algebra cannot refute that a program is correct or that certain properties about every program always hold~\cite{HvS2012,HvS2014,HvSMSVZOH2014}. This is problematic for theoretical reasons but also in practice because todays software complexity requires a diverse set of program analysis tools that range from proof assistants to automated testing. The solution is to accompany CKA with a mathematical model which satisfies its laws so that we can \emph{prove} as well as \emph{disprove} properties about programs --- the thrust behind this paper.

One well-known model-theoretic foundation for CKA is Gischer's~\cite{G1986} and Pratt's~\cite{P1986} work on modelling concurrency as labelled partially ordered multisets (pomsets). Pomsets generalize the familiar concept of a string in finite automata theory by relaxing the occurrence of alphabet symbols within a string from a total to a partial order. This gives a natural way to not only define sequential but also concurrent composition. The former is a generalization of string concatenation whereas the latter is defined by a form of disjoint union. In addition to their theoretical appeal, partial orders have been shown to be practically useful for distributed systems engineering (e.g.~\cite{F1988,M1989}) and formal software verification (e.g.~\cite{AMSS2010,BOSSW2011}).

This paper therefore adopts Gischer's pomset model to construct a model of CKA that acts as a denotational semantics (due to Scott and Strachey) of concurrent programs. Our construction proceeds in two steps: Section~\ref{section:partial-strings} starts by introducing the general concept of \emph{partial strings} --- similar to partial words~\cite{G1981} and pomsets~\cite{P1986,G1986} --- and Section~\ref{section:programs} then lifts many results of partial strings to downward-closed sets of partial strings, a Hoare powerdomain construction.

One defining characteristic of the partial string model of CKA is particularly worth pointing out. Traditionally, partial words~\cite{G1981} and pomsets~\cite{P1986,G1986} are purely defined in terms of isomorphism classes. In contrast, partial strings are grounded on the concept of \'{E}sik's \emph{monotonic bijective morphisms}~\cite{E2002}. This difference matters for three main reasons: firstly, isomorphisms are about sameness whereas the exchange law on partial strings is an inequation; secondly, our partial string model features least fixed point operators which would render the usual arguments about disjoint representatives of isomorphic classes more subtle because the set of partial strings may be uncountable; lastly, the concept of monotonic bijective morphisms appeals to formalizations with tools that can automatically reason about relations as shown in subsequent work.

We therefore opt for monotonic bijective morphisms as we construct step-by-step a partial string model of CKA. Along the way we leverage the concept of \emph{coproducts} as a means to define partial string operators irrespective of a representative in an isomorphism class, cf.~\cite{G1981,P1986,G1986}. These constructions intentionally shift the emphasis from what we prove about partial strings to \emph{how} we prove these facts. We believe that this can further shape the emerging model-theoretic outlook on CKA and inform its ongoing and future development~\cite{HvSMSVZOH2014}. More concretely, in subsequent work we show how the partial string model of CKA serves as a foundation for the refinement checking of truly concurrent programs with \emph{Satisfiability Modulo Theories} (SMT) solvers.

\paragraph{Related Work} The significance of partial orders for the modelling of concurrency was early on recognized and has extensively flourished ever since in the vast theoretical computer science literature on this topic, e.g.~\cite{P1966,L1978,G1981,NPW1979,G1986,P1986}. The closest work to ours is Gischer's pomset model~\cite{G1986} which strictly generalize Mazurkiewicz traces~\cite{BK2992}. The decidability of inclusion problems for pomset languages with star operators has been most recently established~\cite{LS2014}. More traditionally, recursion and pomsets were treated in the context of ultra-metric spaces~\cite{BW1990}. Winskel's event structures~\cite{W1982} are pomsets enriched with a conflict relation subject to certain conditions. Our partial order abstraction of programs is firmly grounded on \'{E}sik's recent work on infinite partial strings and their monotonic bijective morphisms~\cite{E2002}. The fact that all these works use partial orders to describe the dependency between events means that there is a close connection to the unfolding of petri nets to occurrence nets, an active research area throughout the last four decades, e.g.~\cite{cGP2009}.

\section{Preliminaries}
\label{section:prelim}

Readers who are familiar with lattice and order theory may wish to skip this section. There are comprehensive introductory texts on the subject,~e.g.~\cite{DP2002}.

Denote the set of \defn{natural numbers} by $\nats = \set{1, 2, \ldots}$. The \defn{powerset} of a set $P$ is the set of all subsets of $P$, denoted by $\powerset{P}$. The \defn{empty set} is denoted by $\emptyset$. We write ``$\deq$'' for definitional equality. The \defn{Cartesian product} $X \deq X_1 \times \ldots \times X_n$ of sets $X_1, \ldots, X_n$ is defined to be the set of all ordered $n$-tuples $\tuple{x_1, \ldots, x_n}$ with $x_1 \in X_1, \ldots, x_n \in X_n$. Two elements $\tuple{x_1, \ldots, x_n}$ and $\tuple{y_1, \ldots, y_n}$ of $X$ are defined to be \defn{point-wise equal} whenever the coordinates $x_i$ and $y_i$ are equal for each $1 \leq i \leq n$.

Let $P$ be a set. A \defn{binary relation on $P$} is a subset of $P \times P$. A \defn{preorder} is a binary relation $\preceq$ on $P$ that is \defn{reflexive} ($\forall x \in P \colon x \preceq x$) and \defn{transitive} ($\forall x,y,z \in P \colon (x \preceq y \land y \preceq z) \imp (x \preceq z)$). We write $x \not\preceq y$ when $x \preceq y$ is false. For all $x, y \in P$, $x \prec y$ is called \defn{strict} and is equivalent to $x \preceq y$ and $y \not\preceq x$. A \defn{partial order} is a preorder $\leq$ that is \defn{antisymmetric} ($\forall x,y \in P \colon (x \leq y \land y \leq x) \imp (x = y)$). By reflexivity, partial orders satisfy the converse of the antisymmetry law, i.e. $\forall x,y \in P \colon (x \leq y \land y \leq x) \bimp (x = y)$.

A \defn{partially ordered set}, denoted by $\pair{P}{\leq}$, consists of a set $P$ and a partial order $\leq$. Every logical statement about a partial order $\pair{P}{\leq}$ has a \defn{dual} that is obtained by using $\ge$ instead of $\leq$. A \defn{minimal element} $x \in P$ satisfies $\forall y \in P \colon y \leq x \imp x = y$. Dually, a maximal element $x \in P$ satisfies $\forall y \in P \colon x \leq y \imp x = y$. For all $Q \subseteq P$, $\uparrow_\leq Q \deq \set{y \in P \alt \exists x \in Q \colon x \leq y}$ is the \defn{upward-closed set} of $Q$ in $\pair{P}{\leq}$. As expected, the dual is called the \defn{downward-closed set} of $Q$, denoted by $\downarrow_\leq Q$. Usually, we write $\uparrow Q$ instead of $\uparrow_\leq Q$ when the ordering is clear. Abbreviate $\uparrow x \deq\ \uparrow \set{x}$. For all $H \subseteq P$, $x \in P$ is an \defn{upper bound} of $H$ if $y \leq x$ for all $y \in H$; $x$ is called the \defn{least upper bound} (or \defn{supremum}) of $H$, denoted by $\bigvee H$, if $x$ is an upper bound of $H$ and if, for every upper bound $y$ of $H$, $x \leq y$. By antisymmetry, $\bigvee H$ is unique, if it exists. The \defn{lower bound} and \defn{greatest lower bound}, written as $\bigwedge H$ where $H \subseteq P$, are defined dually. The (unique) \defn{least element} in $P$, if it exists, is $\bot \deq \bigvee \emptyset$ whose dual, if it exists, is $\top \deq \bigwedge \emptyset$. A \defn{lattice} is a partial order $\pair{L}{\leq}$ where every two elements have a (necessarily unique) least upper bound and greatest lower bound: for all $x, y \in L$, these are denoted by $x \vee y$ and $x \wedge y$, respectively. A \defn{complete lattice} $\tuple{L, \leq, \wedge, \vee, \bot, \top}$ is a lattice where $\bigvee S$ and $\bigwedge S$ exists for every $S \subseteq L$.

Let $\pair{P}{\leq}$ and $\pair{Q}{\sqsubseteq}$ be partial orders. A function $f \colon \pair{P}{\leq} \to \pair{Q}{\sqsubseteq}$ is called \defn{monotonic} exactly if $\forall x,y \in P \colon x \leq y \imp f(x) \sqsubseteq f(y)$. Given three sets $P$, $Q$ and $R$, the \defn{composition} of two functions $f \colon P \to Q$ and $g \colon Q \to R$, denoted by $g \circ f$, is a function from $P$ to $R$ such that $g \circ f(x) \deq g(f(x))$ for all $x \in P$.

\section{Partial strings}
\label{section:partial-strings}

We start by abstracting the notion of control flow in concurrent programs as a concept that is similar to partial words~\cite{G1981} and labelled partially ordered multisets (pomsets)~\cite{P1986,G1986}. These concepts generalize the notion of a string by relaxing the total order of alphabet symbols within a string to a partial order. The following definition therefore is fundamental to everything that follows:

\begin{definition}
\label{def:partial-string}
Let $E$ be a nonempty set of \defn{events} and $\Gamma$ be an \defn{alphabet}. Define a \defn{partial string} to be a triple $p = \tuple{E_p, \alpha_p, \preceq_p}$ where $E_p$ is a subset of $E$, $\alpha_p \colon E_p \to \Gamma$ is a function that maps each event in $E_p$ to an alphabet symbol in $\Gamma$, and $\preceq_p$ is a partial order on $E_p$. Two partial strings $p$ and $q$ are said to be \defn{disjoint} whenever $E_p \cap E_q = \emptyset$, $p$ is called \defn{empty} whenever $E_p = \emptyset$, and $p$ is said to be \defn{finite} whenever $E_p$ is finite. Let $\sP$ be the set of all partial strings, and denote with $\sP_f$ the set of all finite partial strings in $\sP$.
\end{definition}

\begin{SCfigure}[100][b]
\xymatrix@R=1.2em@C=1em{
                  &            e_0              &                  \\
  e_1\ar@{<-}[ur] &                             &  e_2\ar@{<-}[ul] \\
                  & e_3\ar@{<-}[lu]\ar@{<-}[ru] &
}
\caption{
Upside down Hasse diagram of a partial string $\tuple{E_p, \alpha_p, \preceq_p}$. Assume alphabet $\Gamma = \set{\texttt{read}, \texttt{write}} \times \set{\texttt{x}, \texttt{y}}$ is the Cartesian product of labels that distinguish reads from writes on two different shared memory locations \texttt{x} and \texttt{y}. The ordering of events includes $e_0 \preceq_p e_3$, but $e_1$ and $e_2$ are incomparable.
}
\label{fig:partial-string-example}
\end{SCfigure}

Each event in the universe $E$ should be thought of as an occurrence of a computational step, whereas the alphabet $\Gamma$ could be seen as a way to label events. Typically we denote partial strings in $\sP$ (whether finite or not) by $p$ or $q$, or letters from $u$ through $z$. In essence, a partial string $p$ is a partially ordered set $\pair{E_p}{\preceq_p}$ equipped with a function $\alpha_p$ that maps every event in $E_p$ to an alphabet symbol in $\Gamma$. For reasons that become clearer in subsequent work, it is convenient to draw finite partial strings as upside down Hasse diagrams (e.g. Figure~\ref{fig:partial-string-example}), where the ordering between events should be interpreted as a causality relation such as the \emph{sequenced-before} relation in C++11~\cite{BOSSW2011}. For example, $e_0 \preceq_p e_3$ in partial string $p$ means that $e_0$ (top) is sequenced-before $e_3$ (bottom), whereas $e_1$ and $e_2$ are unsequenced (i.e. happen concurrently) because neither $e_1 \preceq_p e_2$ nor $e_2 \preceq_p e_3$. The alphabet $\Gamma$, in turn, gives a secondary-level interpretation of events. For example, the alphabet in Figure~\ref{fig:partial-string-example} describes the computational effects of events in terms of shared memory accesses. More abstractly, if we see the alphabet $\Gamma$ as a set of labels, then $\alpha_p$ for a partial string $p$ is like the labelling function of partial words~\cite{G1981} and pomsets~\cite{P1986,G1986}.

Unlike partial words and pomsets, however, partial strings retain the identity of events which means that operators on partial strings can be defined irrespective of a representative in an isomorphism class, cf.~\cite{G1981,P1986,G1986}. Therefore two partial strings are equal whenever they are point-wise equal. Of course, point-wise equality is too coarse for many practical purposes and so we later introduce the concept of a monotonic bijective morphism (Definition~\ref{def:partial-string-isomorphism}). Until we do so, however, we can still make two useful observations.

\begin{proposition}
\label{proposition:unique-empty-partial-string}
The empty partial string, denoted by $\bot$, is unique.
\end{proposition}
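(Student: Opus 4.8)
The plan is to unfold Definition~\ref{def:partial-string} and invoke the convention, stated just above the proposition, that two partial strings are equal whenever they are point-wise equal. So suppose $p = \tuple{E_p, \alpha_p, \preceq_p}$ and $q = \tuple{E_q, \alpha_q, \preceq_q}$ are both empty. By definition of \emph{empty} this means $E_p = \emptyset = E_q$, so the first coordinates already agree; it then suffices to check that the labelling functions and the orders agree, after which point-wise equality yields $p = q$, and we are entitled to give this common value a name, $\bot$.

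For the second coordinate: $\alpha_p$ is a function $E_p \to \Gamma$ with $E_p = \emptyset$, and, identifying a function with its graph, $\alpha_p \subseteq E_p \times \Gamma = \emptyset \times \Gamma = \emptyset$, hence $\alpha_p = \emptyset = \alpha_q$ — this is just the standard fact that there is exactly one function out of the empty set into any set. For the third coordinate: $\preceq_p$ is a partial order on $E_p = \emptyset$, hence $\preceq_p \subseteq E_p \times E_p = \emptyset$, so $\preceq_p = \emptyset = \preceq_q$. (It is worth noting in passing that the empty relation on the empty carrier is vacuously reflexive, transitive and antisymmetric, so it is a legitimate partial order; this guarantees that an empty partial string exists, so that the definite article in ``the empty partial string'' is justified.)

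I do not expect any genuine obstacle here: the whole content is that once the event set is pinned down to $\emptyset$, there is nothing left to choose in the other two components, and the conclusion is a one-line consequence of point-wise equality together with the set-theoretic reading of functions and relations as sets of ordered pairs. I would finish by remarking that existence plus uniqueness is precisely what licenses the notation $\bot$, anticipating the later development in which the empty partial string turns out to be the least element under the refinement preorder on partial strings.
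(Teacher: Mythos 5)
Your proof is correct and follows essentially the same route as the paper's: both arguments fix two empty partial strings, observe that the empty event set forces the order and the labelling function to be the empty relation and the empty function respectively, and conclude by point-wise (coordinate-wise) equality. Your treatment is slightly more explicit about identifying functions and relations with their graphs, and about the existence half, which the paper dismisses as trivial, but there is no substantive difference.
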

\begin{proof}
The existence of $\bot$ is entirely trivial. Assume $\bot'$ is another empty string. By Definition~\ref{def:partial-string}, $E_\bot = E_{\bot'} = \emptyset$. Therefore, the partial orders of both $\bot$ and $\bot'$ are empty, i.e. $\preceq_\bot = \preceq_{\bot'} = \emptyset$. And trivially $\alpha_\bot(e) = \alpha_{\bot'}(e)$ for all $e \in E$. By coordinate-wise equality, $\bot = \bot'$.
\end{proof}

It is good to be aware of the cardinality of the set of partial strings.

\begin{proposition}
\label{proposition:cardinality}
If the alphabet $\Gamma$ and universe of events $E$ is countably infinite, then $\sP$ has the cardinality of the continuum.
\end{proposition}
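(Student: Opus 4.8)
The plan is to establish the two inequalities $|\sP| \le 2^{\aleph_0}$ and $|\sP| \ge 2^{\aleph_0}$ separately and then conclude by the Cantor--Schr\"oder--Bernstein theorem, since $2^{\aleph_0}$ is by definition the cardinality of the continuum.

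For the upper bound I would use that a partial string $p = \tuple{E_p, \alpha_p, \preceq_p}$ is completely determined by its three coordinates, so it suffices to bound the number of choices for each. As $E$ is countably infinite, $E_p$ ranges over $\powerset{E}$, a set of cardinality $2^{\aleph_0}$. Having fixed a (necessarily at most countable) $E_p$, the labelling function $\alpha_p$ lies in $\Gamma^{E_p}$, and since $|\Gamma| \le \aleph_0$ and $|E_p| \le \aleph_0$ there are at most $\aleph_0^{\aleph_0} = 2^{\aleph_0}$ such functions; similarly $\preceq_p$ is a subset of the countable set $E_p \times E_p$, so there are at most $2^{\aleph_0}$ candidates for it (the actual partial orders forming a subset of these). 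Multiplying the three bounds gives $|\sP| \le 2^{\aleph_0} \cdot 2^{\aleph_0} \cdot 2^{\aleph_0} = 2^{\aleph_0}$.

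For the lower bound I would exhibit an injection $\powerset{E} \to \sP$. Fix any symbol $a \in \Gamma$ and send each $S \subseteq E$ to the partial string $p_S \deq \tuple{S, \alpha_S, \Delta_S}$, where $\alpha_S \colon S \to \Gamma$ is the constant function with value $a$ and $\Delta_S \deq \set{\tuple{e,e} \alt e \in S}$ is the discrete order on $S$; this $\Delta_S$ is indeed a partial order, being reflexive, transitive and antisymmetric. Distinct subsets of $E$ give partial strings with distinct event sets, so by coordinate-wise equality (as used in Proposition~\ref{proposition:unique-empty-partial-string}) the assignment $S \mapsto p_S$ is injective, whence $|\sP| \ge |\powerset{E}| = 2^{\aleph_0}$.

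Combining the two bounds via Cantor--Schr\"oder--Bernstein then yields $|\sP| = 2^{\aleph_0}$, as claimed. The argument is essentially bookkeeping, and I do not expect a genuine obstacle; the only point that calls for a little care is that a partial order must contain the diagonal, so the ``empty order'' is available only on the empty event set, and one should therefore use $\Delta_S$ rather than $\emptyset$ when constructing the witnesses for the lower bound. (Alternatively, one could obtain the lower bound by keeping $E_p = E$ fixed and letting $\alpha_p$ range over $\Gamma^E$, which also has cardinality $2^{\aleph_0}$.)
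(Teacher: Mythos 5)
Your proof is correct and follows essentially the same route as the paper's: bound the three coordinates by powersets of countable sets and invoke the Cantor--Schr\"oder--Bernstein theorem. If anything, you are more careful than the paper, which leaves the lower bound implicit (its remark that ``functions are merely special relations'' only yields an injection into $\powerset{E \times \Gamma}$, i.e.\ an upper bound), whereas you supply an explicit injection $S \mapsto p_S$ witnessing $\abs{\sP} \ge 2^{\aleph_0}$.
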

\begin{proof}
Assume $\Gamma$ and $E$ are countably infinite. The set of partial orders on $E$ is a subset of $\powerset{E \times E}$, whereas the set of labelling functions is $E \to \Gamma$. By assumption, $\powerset{E \times E}$ has the same cardinality as $\powerset{E}$, and $E \to \Gamma$ has the same cardinality as $\powerset{E \times \Gamma}$ because functions are merely special relations. By assumption, $\powerset{E \times \Gamma}$ has the same cardinality as $\powerset{E}$. The fact that the cardinality of the powerset of a countably infinite set has the same cardinality as the continuum goes back to the Cantor-Schr\"{o}der-Bernstein Theorem.
\end{proof}

\begin{remark}
Since the concept of a partial string is closely related to that of a partial order, it is interesting that Roscoe notes that the set of strict partial orders over a fixed universe forms a dcpo~\cite[p. 474]{RHB1997}.\footnote{Recall that a nonempty subset $D$ of a partially ordered set is called \defn{directed} if each finite subset $F$ of $D$ has an upper bound in $D$, i.e. $\exists y \in D \colon \forall x \in F \colon x \leq y$. For example, the set of all finite subsets of natural numbers, ordered by subset inclusion, is directed. A \defn{directed-complete partial order} (often abbreviated \defn{dcpo}) is a partial order with a bottom element and in which every directed set has a least upper bound. An example of a dcpo is the set of all partial functions ordered by subset inclusion of their respective graphs~\cite[pp. 180f]{DP2002}.} He also mentions that it is far from trivial to show that the maximal elements in this dcpo are the total orders on the universe, a fact that requires the Axiom of Choice.
\end{remark}

The purpose of Proposition~\ref{proposition:cardinality} is to caution us concerning the treatment of partial strings in the infinite case. We will see shortly how this precaution plays out in the definitions and proofs about partial strings where we purposefully avoid relying on a representative in an isomorphism class.

Given a partial string $x$, recall that its set of events is denoted by $E_x$ where $E_x \subseteq E$ because the events in every partial string are always drawn from the universe of events $E$. Similarly, the other two tuple components in a partial string $x$ are identified with a subscript, i.e. $\alpha_x$ is $x$'s labelling function whereas $\preceq_x$ is a partial order on $E_x$. We are about to use these three tuple components to define a binary relation on the set of partial strings, $\sP$, that is very similar to ``subsumption'' in the equational theory of pomsets~\cite{G1986} except that ours disregards the identity of events. Informed by our precaution mentioned above, this is formalized by monotonic bijective morphisms~\cite{E2002} that generalize the concept of isomorphisms as follows:

\begin{definition}
\label{def:partial-string-isomorphism}
Let $x,y \in \sP$ be partial strings such that $x = \tuple{E_x, \alpha_x \preceq_x}$ and $y = \tuple{E_y, \alpha_y, \preceq_y}$. Then $x$ and $y$ are \defn{isomorphic}, denoted by $x \cong y$, if there exists an order-isomorphism between $x$ and $y$ that preserves their labeling, i.e. there exists a one-to-one and onto function (\defn{bijection}) $f \colon E_x \to E_y$ such that, for all $e, e' \in E_x$, $e \preceq_x e' \bimp f(e) \preceq_y f(e')$ and $\alpha_x(e) = \alpha_y(f(e))$. Define a \defn{monotonic bijective morphism}, written $f \colon x \to y$, to be a bijection $f$ from $E_x$ to $E_y$ such that, for all $e, e' \in E_x$, $e \preceq_x e' \imp f(e) \preceq_y f(e')$ and $\alpha_x(e) = \alpha_y(f(e))$. We write $x \sqsubseteq y$ whenever there exists a monotonic bijective morphism $f \colon y \to x$ from $y$ to $x$.
\end{definition}

\begin{SCfigure}[100][t]
\xymatrix@R=1.2em@C=1em{
  e_0            & e_1            \\
  e_2\ar@{<-}[u] & e_3\ar@{<-}[u]
}
\caption{
A partial string $p = \tuple{E_p, \alpha_p, \preceq_p}$ with $E_p \deq \set{e_0, e_1, e_2, e_3}$. For alphabet $\Gamma = \set{a,b}$, assume the labeling function is defined by $\alpha_p(e_0) = \alpha_p(e_1) = a$ and $\alpha_p(e_2) = \alpha_p(e_3) = b$.
}
\label{fig:almost-N-partial-string}
\end{SCfigure}

\begin{SCfigure}[100][b]
\xymatrix@R=1.2em@C=1em{
  e_0\ar@{->}[dr]& e_1           \\
  e_2\ar@{<-}[u] & e_3\ar@{<-}[u]
}
\caption{
Define $N(a, a, b, b) \deq \tuple{E_p, \alpha_p, \preceq_p}$ with $E_p = \set{e_0, e_1, e_2, e_3}$ and $\preceq_p\ \deq \set{\pair{e_0}{e_0}, \pair{e_1}{e_1}, \pair{e_2}{e_2}, \pair{e_3}{e_3}, \pair{e_0}{e_2}, \pair{e_0}{e_3}, \pair{e_1}{e_3}}$ such that $\alpha_p(e_0) = \alpha_p(e_1) = a$ and $\alpha_p(e_2) = \alpha_p(e_3) = b$ where $a, b \in \Gamma$.
}
\label{fig:N-partial-string}
\end{SCfigure}

Intuitively, $\sqsubseteq$ orders partial strings according to the sequenced-before relation between events. In other words, $x \sqsubseteq y$ for partial strings $x$ and $y$ could be interpreted as saying that all events ordered in $y$ have the same order in $x$. This way $x \sqsubseteq y$ acts as a form of refinement ordering between partial strings where $x$ refines $y$, or $x$ is more deterministic than $y$.

\begin{remark}
For arbitrary partial strings, the refinement ordering $\sqsubseteq$ from Definition~\ref{def:partial-string-isomorphism} is different from the containment of two pomset languages, cf.~\cite{FKL1993}. For example, the languages of the partial strings $(a ; b) \parallel (a ; b)$ and $N(a, a, b, b)$ --- as shown in Figure~\ref{fig:almost-N-partial-string}~and~\ref{fig:N-partial-string}, respectively --- are contained in each other but $(a ; b) \parallel (a ; b) \not\sqsubseteq N(a, a, b, b)$, see also Pratt~\cite[p. 13]{P1986}. More accurately, Definition~\ref{def:partial-string-isomorphism} implies pomset language containment, but not vice versa.
\end{remark}

The following important fact about $\sqsubseteq$ is clear since every finite number of function compositions preserves monotonicity and bijectivity.

\begin{proposition}
\label{proposition:sqsubseteq-preorder}
$\pair{\sP}{\sqsubseteq}$ is a preorder.
\end{proposition}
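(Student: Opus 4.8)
The plan is to verify the two defining conditions of a preorder --- reflexivity and transitivity --- directly from Definition~\ref{def:partial-string-isomorphism}, which reduces $x \sqsubseteq y$ to the existence of a monotonic bijective morphism $f \colon y \to x$. So the whole proof amounts to exhibiting such morphisms and checking they have the required properties.

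For reflexivity, given any $x \in \sP$, I would take the identity function $\mathrm{id}_{E_x} \colon E_x \to E_x$. It is trivially a bijection; it satisfies $e \preceq_x e' \imp \mathrm{id}(e) \preceq_x \mathrm{id}(e')$ since the implication is just $e \preceq_x e' \imp e \preceq_x e'$; and $\alpha_x(e) = \alpha_x(\mathrm{id}(e))$ holds by definition. Hence $\mathrm{id}_{E_x} \colon x \to x$ is a monotonic bijective morphism, so $x \sqsubseteq x$.

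For transitivity, suppose $x \sqsubseteq y$ and $y \sqsubseteq z$. Unwinding the definition, there are monotonic bijective morphisms $f \colon y \to x$ and $g \colon z \to y$. I would form the composite $f \circ g \colon E_z \to E_x$. A composition of bijections is a bijection. For monotonicity: if $e \preceq_z e'$ then $g(e) \preceq_y g(e')$ since $g$ is a monotonic bijective morphism, and then $f(g(e)) \preceq_x f(g(e'))$ since $f$ is one; so $(f \circ g)(e) \preceq_x (f \circ g)(e')$. For label preservation: $\alpha_z(e) = \alpha_y(g(e)) = \alpha_x(f(g(e))) = \alpha_x((f\circ g)(e))$, using the label-preservation clause for $g$ and then for $f$. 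Thus $f \circ g \colon z \to x$ is a monotonic bijective morphism, giving $x \sqsubseteq z$.

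There is no real obstacle here; the statement is essentially a bookkeeping exercise, and the only thing to be careful about is the contravariant direction in the definition of $\sqsubseteq$ (the morphism for $x \sqsubseteq y$ goes \emph{from} $y$ \emph{to} $x$), so that when composing one must take $f \circ g$ rather than $g \circ f$. This is precisely the point flagged in the sentence preceding the proposition, namely that finitely many compositions preserve monotonicity and bijectivity. Note also that antisymmetry is \emph{not} claimed --- and indeed fails, since two distinct but isomorphic partial strings refine each other --- which is why the statement says preorder rather than partial order.
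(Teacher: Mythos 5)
Your proof is correct and follows the same route as the paper: reflexivity via the identity function and transitivity via the composite $f \circ g \colon z \to x$ of the two witnessing morphisms, with the same attention to the contravariant direction of $\sqsubseteq$. You simply spell out the monotonicity and label-preservation checks that the paper leaves implicit.
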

\begin{proof}
Let $x,y,z \in \sP$ be partial strings. Since the identity function is bijective, monotonic and label-preserving, $x \sqsubseteq x$, proving reflexivity. Assume $x \sqsubseteq y$ and $y \sqsubseteq z$. By Definition~\ref{def:partial-string-isomorphism}, there exist two monotonic bijective morphisms $f \colon y \to x$ and $g \colon z \to y$. By function composition, $f \circ g \colon z \to x$ is a monotonic bijective morphism, proving transitivity.
\end{proof}

Even though $\sqsubseteq$ is a preorder, it is generally \emph{not} a partial order unless we impose further restrictions on partial strings~\cite{E2002}. In particular, the following proposition allows us to treat all \emph{finite} partial strings as a partially ordered set because the refinement order $\sqsubseteq$ on $\sP_f$ is antisymmetric, cf. Proposition~3.5's proof~in~\cite{E2002}:

\begin{proposition}[Transitivity]
\label{proposition:sqsubseteq-partial-order}
For all $x, y \in \sP_f$, if $x \sqsubseteq y$ and $y \sqsubseteq x$, then $x \cong y$.
\end{proposition}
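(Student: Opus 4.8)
The plan is to turn the two hypotheses into an explicit pair of monotonic bijective morphisms, compose them into a self-morphism of $y$, and then use the finiteness of the event set to upgrade that self-morphism into an order-automorphism; one of the original two morphisms will then be forced to be an isomorphism, which is exactly what $x \cong y$ asks for. (This is really the antisymmetry-up-to-$\cong$ property of $\sqsubseteq$ on $\sP_f$, mentioned in the remark preceding the statement.)

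First I would invoke Definition~\ref{def:partial-string-isomorphism} to obtain from $x \sqsubseteq y$ a monotonic bijective morphism $f \colon y \to x$, and from $y \sqsubseteq x$ a monotonic bijective morphism $g \colon x \to y$. As in the proof of Proposition~\ref{proposition:sqsubseteq-preorder}, the composite $h \deq g \circ f \colon y \to y$ is again a monotonic bijective morphism; in particular $h$ is a bijection of the \emph{finite} set $E_y$ onto itself, i.e. a permutation of $E_y$ that is monotone for $\preceq_y$ and preserves $\alpha_y$.

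The key step is to show that such an $h$ is an order-automorphism of $y$. Since $E_y$ is finite, the permutation $h$ has finite order: the sequence $h, h^2, h^3, \ldots$ cannot be injective, and $h^i = h^j$ with $i < j$ yields $h^{\,j-i} = \mathrm{id}_{E_y}$, so $h^n = \mathrm{id}_{E_y}$ for some $n \geq 1$. Hence $h^{-1} = h^{\,n-1}$, and $h^{\,n-1}$ is monotonic, being a composition of copies of the monotonic map $h$ (for $n=1$ it is $\mathrm{id}_{E_y}$). Therefore $h$ not only preserves but also reflects $\preceq_y$: from $h(e) \preceq_y h(e')$, applying the monotonic map $h^{-1}$ gives $e \preceq_y e'$. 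Together with the fact that $h$ preserves labels (a composition of label-preserving maps), this makes $h$ an order-automorphism of $y$ compatible with $\alpha_y$.

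Finally I would transfer this back to $f$. Monotonicity of $f$ already gives $e \preceq_y e' \imp f(e) \preceq_x f(e')$. Conversely, if $f(e) \preceq_x f(e')$, then applying $g$ yields $h(e) = g(f(e)) \preceq_y g(f(e')) = h(e')$, and since $h$ reflects $\preceq_y$ we get $e \preceq_y e'$. So $f \colon E_y \to E_x$ is a label-preserving bijection with $e \preceq_y e' \bimp f(e) \preceq_x f(e')$, which is precisely an order-isomorphism witnessing $y \cong x$; taking $f^{-1} \colon E_x \to E_y$ gives $x \cong y$. The only real content here is the finiteness argument producing $h^n = \mathrm{id}_{E_y}$ and hence the monotonicity of $h^{-1}$ — everything else is routine bookkeeping with bijections and composition — and it is exactly this step that fails for infinite partial strings, which is why $\sqsubseteq$ is only a preorder on all of $\sP$ but a partial order (up to $\cong$) on $\sP_f$.
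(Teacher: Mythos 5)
Your proof is correct and follows essentially the same route as the paper: compose the two monotonic bijective morphisms into a self-map $h$ of a finite event set, use finiteness to obtain $h^n = \mathrm{id}$, and then use monotonicity of the iterate $h^{n-1}$ to conclude that one of the original morphisms reflects (not merely preserves) the order, hence is an isomorphism. You even supply the pigeonhole justification for the existence of $n$ with $h^n = \mathrm{id}$, a step the paper asserts without proof.
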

\begin{proof}
Let $x, y \in \sP_f$ be finite partial strings. Let $f \colon x \to y$ and $g \colon y \to x$ be monotonic bijective morphisms as witnesses for $y \sqsubseteq x$ and $x \sqsubseteq y$, respectively. Let $e, e' \in E_x$ be events. By Definition~\ref{def:partial-string-isomorphism}, it suffices to show $f(e) \precsim_y f(e')$ implies $e \precsim_x e'$. Assume $f(e) \precsim_y f(e')$. Let $h \deq g \circ f$. Define $h^1 \deq h$ and $h^{n+1} \deq h^n \circ h$ for all $n \in \nats$. Since $E_x$ and $E_y$ are finite, there exists $n$ such that $h^n$ is the identity function on $E_x$. Fix $n$ to be the smallest such natural number. Assume $n = 1$. Then $g(f(e)) \precsim_x g(f(e'))$ by assumption and $g$'s monotonicity. Since $h$ is the identity function, it follows $e \precsim_x e'$, as required. Now assume $n > 1$. Since every finite number of function compositions preserve monotonicity, $h^{n-1}(g(f(e))) \precsim_x h^{n-1}(g(f(e)))$. Since $h^n$ is the identity function, $e \precsim_x e'$, proving that $f$ is an isomorphism. We conclude that $x \cong y$.
\end{proof}

Figure~\ref{fig:sqsubseteq-partial-string} illustrates the essence of Proposition~\ref{proposition:sqsubseteq-partial-order}. And clearly its converse also holds (even for infinite partial strings) because a bijective monotonic morphism generalizes the concept of an isomorphism.

\begin{proposition}
\label{proposition:converse-sqsubseteq-partial-order}
For all $x, y \in \sP$, if $x \cong y$, then $x \sqsubseteq y$ and $y \sqsubseteq x$.
\end{proposition}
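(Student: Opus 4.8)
The plan is to extract from the definition of $\cong$ the very bijection that witnesses it, and to observe that this bijection is already a monotonic bijective morphism in one direction while its inverse serves in the other.

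First I would unfold the hypothesis $x \cong y$: by Definition~\ref{def:partial-string-isomorphism} there is a bijection $f \colon E_x \to E_y$ with $e \preceq_x e' \bimp f(e) \preceq_y f(e')$ and $\alpha_x(e) = \alpha_y(f(e))$ for all $e, e' \in E_x$. Since a biconditional entails each of its two implications, $f$ in particular satisfies $e \preceq_x e' \imp f(e) \preceq_y f(e')$ together with label preservation, so $f \colon x \to y$ is a monotonic bijective morphism. By the last clause of Definition~\ref{def:partial-string-isomorphism}, a monotonic bijective morphism $x \to y$ is exactly what is required to conclude $y \sqsubseteq x$.

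Next I would turn to the inverse bijection $f^{-1} \colon E_y \to E_x$, which exists because $f$ is a bijection. For $d, d' \in E_y$, write $d = f(e)$ and $d' = f(e')$; the reverse direction of the biconditional gives $f^{-1}(d) \preceq_x f^{-1}(d')$ whenever $d \preceq_y d'$, and label preservation transfers along $f^{-1}$ since $\alpha_y(d) = \alpha_y(f(f^{-1}(d))) = \alpha_x(f^{-1}(d))$. Hence $f^{-1} \colon y \to x$ is a monotonic bijective morphism, which by Definition~\ref{def:partial-string-isomorphism} witnesses $x \sqsubseteq y$. Together with the previous paragraph this yields both $x \sqsubseteq y$ and $y \sqsubseteq x$, as claimed.

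There is essentially no hard step here: this is the easy converse of Proposition~\ref{proposition:sqsubseteq-partial-order}, and the only point demanding care is bookkeeping the direction of the arrows, since $x \sqsubseteq y$ is defined through a morphism $y \to x$ rather than $x \to y$. One could equally avoid naming $f^{-1}$ by first remarking that $\cong$ is symmetric (the inverse of an order-isomorphism preserving labels is again one), and then invoking the one-direction argument twice. Either way, the reasoning never appeals to finiteness of $E_x$ or $E_y$, so the statement indeed holds for all of $\sP$, as the parenthetical in the surrounding text already anticipates.
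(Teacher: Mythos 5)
Your proof is correct and follows exactly the paper's argument: the isomorphism $f$ is itself a monotonic bijective morphism witnessing $y \sqsubseteq x$, and its inverse $f^{-1} \colon y \to x$ witnesses $x \sqsubseteq y$. You simply spell out the label-preservation and direction bookkeeping that the paper leaves implicit.
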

\begin{proof}
Let $x$ and $y$ be partial strings. Assume $x \cong y$. Fix $f$ to be an isomorphism from $x$ to $y$. By Definition~\ref{def:partial-string-isomorphism}, $f$ is a bijective monotonic morphism, and so is its inverse $f^{-1} \colon y \to x$, proving $y \sqsubseteq x$ and $x \sqsubseteq y$, respectively.
\end{proof}

\begin{figure}
\centering
\subcaptionbox{\label{fig:sqsubseteq-partial-order-1}}{
\xy
  \xymatrix "M"@C=1.2em@R=1em{
    b_p\ar@{.>}@/^1pc/[rrr]             & d_p\ar@{.>}@/^1pc/[rrr]             &              & b_q            & d_q           \\
    a_p\ar@{<-}[u]\ar@{.>}@/^-1pc/[rrr] & c_p\ar@{<-}[u]\ar@{.>}@/^-1pc/[rrr] &   \ar@{~}[u] & a_q\ar@{<-}[u] & c_q\ar@{<-}[u] \\
&&&&\\
  }
\POS"M3,1"."M3,2"!C*\frm{_\}},+D*++!U\txt{$p$}
    ,"M3,4"."M3,5"!C*\frm{_\}},+D*++!U\txt{$q$}
\endxy
}
\qquad
\qquad
\qquad
\subcaptionbox{\label{fig:sqsubseteq-partial-order-2}}{
\xy
  \xymatrix "M"@C=1.2em@R=1em{
    b_q\ar@{.>}@/^1pc/[rrrr]             & d_q\ar@{.>}@/^0.5pc/[rr]             &              & b_p            & d_p           \\
    a_q\ar@{<-}[u]\ar@{.>}@/^-1pc/[rrrr] & c_q\ar@{<-}[u]\ar@{.>}@/^-0.5pc/[rr] &   \ar@{~}[u] & a_p\ar@{<-}[u] & c_p\ar@{<-}[u] \\
&&&&\\
  }
\POS"M3,1"."M3,2"!C*\frm{_\}},+D*++!U\txt{$q$}
    ,"M3,4"."M3,5"!C*\frm{_\}},+D*++!U\txt{$p$}
\endxy
}
\caption{
The left and right show a monotonic bijective morphism $f \colon p \to q$ and $g \colon q \to p$, respectively. Thus $q \sqsubseteq p$ and $p \sqsubseteq q$. Let $h \deq g \circ f$. Then $h^2 = h \circ h$ is the identity function on $E_p$. Both $f$ and $g$ are isomorphisms, whence $p \cong q$.
}
\label{fig:sqsubseteq-partial-string}
\end{figure}

We purposefully did not define $\sP$ as a family of disjoint partial strings because the resulting structure with the point-wise join operator would not generally preserve disjointness --- a problem since the union of overlapping partial orders is generally not antisymmetric. Moreover, we would like to avoid having to choose disjoint representatives of isomorphism classes from a (possibly uncountable) set of partial strings. Yet we would like somehow to be able to compose partial strings to form new ones. For this purpose, we define compositions as coproducts, thereby guaranteeing disjointness by construction.

\begin{definition}
\label{def:partial-string-composition}
Define $\bbB \deq \set{0,1}$ to be integers zero and one. For every set $S$ and $T$, define the \defn{coproduct} of $S$ and $T$, written $S + T$, to be the set union of $S \times \set{0}$ and $T \times \set{1}$. Given two partial strings $x$ and $y$ in $\sP$, define their \defn{concurrent} and \defn{strongly sequential composition} by $x \parallel y \deq \tuple{E_{x \parallel y}, \alpha_{x \parallel y}, \precsim_{x \parallel y}}$ and $x ; y \deq \tuple{E_{x ; y}, \alpha_{x ; y}, \precsim_{x ; y}}$, respectively, where $E_{x \parallel y} = E_{x ; y} \deq E_x + E_y$ are coproducts such that, for all $e, e' \in E_x \cup E_y$ and $i, j \in \bbB$, the following holds:
\begin{itemize}
\item $\pair{e}{i} \preceq_{x \parallel y} \pair{e'}{j}$ exactly if ($i = j = 0$ and $e \preceq_x e'$) or ($i = j = 1$ and $e \preceq_y e'$),
\item $\pair{e}{i} \preceq_{x ; y} \pair{e'}{j}$ exactly if $i < j$ or $\pair{e}{i} \preceq_{x \parallel y} \pair{e'}{j}$,
\item $\alpha_{x \parallel y}(\pair{e}{i}) = \alpha_{x ; y}(\pair{e}{i}) \deq
\begin{cases}
  \alpha_x(e) &\text{if } i = 0 \\
  \alpha_y(e) &\text{if } i = 1.
\end{cases}$
\end{itemize}
\end{definition}

For the concept of coproducts to be applicable here, we require the set of events $E$ to be infinite, cf.~Proposition~\ref{proposition:cardinality}. Given two sets $S$ and $T$, their coproduct $S + T \deq (S \times \set{0}) \cup (T \times \set{1})$ is like a `constructive disjoint union' in the sense that it explicitly identifies the elements in $S$ and $T$ through the integers $0$ and $1$ in $\bbB$, respectively. This significantly shapes the sort of proofs we get about sequential ($;$) and concurrent ($\parallel$) composition throughout the rest of this section as we shall start to see soon. We first establish that the set of partial strings is closed under both newly defined binary operators.

\begin{proposition}[$\sP$ is closed under partial string operators]
For every partial string $x$ and $y$ in $\sP$, $x \parallel y$ and $x ; y$ are also in $\sP$.
\end{proposition}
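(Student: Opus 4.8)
The plan is to check directly that $x \parallel y$ and $x ; y$ satisfy all three requirements of Definition~\ref{def:partial-string}: for a triple $\tuple{E_p, \alpha_p, \preceq_p}$ to be a partial string one needs $E_p \subseteq E$, one needs $\alpha_p$ to be a total function from $E_p$ into $\Gamma$, and one needs $\preceq_p$ to be a partial order on $E_p$. Fix $x, y \in \sP$ and, following Definition~\ref{def:partial-string-composition}, write $F \deq E_x + E_y = (E_x \times \set{0}) \cup (E_y \times \set{1})$ for the event set shared by $x \parallel y$ and $x ; y$.

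First I would dispatch the event-set clause. Here the standing assumption that $E$ is infinite (cf.~Proposition~\ref{proposition:cardinality}) is precisely what is needed: fixing once and for all an injection $E \times \bbB \hookrightarrow E$ and identifying each pair $\pair{e}{i}$ with its image yields $F \subseteq E$. Next, for the labelling clause, I would note that because $\bbB = \set{0,1}$ the two summands of $F$ are disjoint (the second coordinate separates them), so every element of $F$ has exactly one of the forms $\pair{e}{0}$ with $e \in E_x$ or $\pair{e}{1}$ with $e \in E_y$; hence the case split defining $\alpha_{x \parallel y} = \alpha_{x ; y}$ is exhaustive and unambiguous, and on each branch the prescribed value ($\alpha_x(e)$ or $\alpha_y(e)$) is defined because $e$ lies in the appropriate domain. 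So $\alpha_{x \parallel y}$ and $\alpha_{x ; y}$ are total functions $F \to \Gamma$.

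The real content is showing $\preceq_{x \parallel y}$ and $\preceq_{x ; y}$ are partial orders, and I would do this by case analysis on the $\bbB$-coordinates. For $\parallel$, the relation is just $\preceq_x$ transported to the $0$-copy together with $\preceq_y$ transported to the $1$-copy, and any instance $\pair{e}{i} \preceq_{x \parallel y} \pair{e'}{j}$ forces $i = j$; so reflexivity, antisymmetry and transitivity each reduce, after reading off the common second coordinate, to the corresponding property of $\preceq_x$ or of $\preceq_y$. For $;$, I would first record that $\preceq_{x ; y}$ extends $\preceq_{x \parallel y}$ by the extra pairs $\pair{e}{0} \preceq \pair{e'}{1}$ with $e \in E_x$ and $e' \in E_y$, so that $\preceq_{x \parallel y} \subseteq \preceq_{x ; y}$ and reflexivity is inherited; for antisymmetry and transitivity I would split on whether each individual comparison holds ``by $i < j$'' or ``by the $\parallel$-clause'', discarding the mixed cases with the observation that in the two-element chain $\bbB$ one can have neither $i < j$ together with $j \leq i$ nor $i < j$ together with $j < k$. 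What then remains is either a case already settled for $\parallel$ or a chain collapsing to a single step of the form $\pair{\cdot}{0} \preceq_{x ; y} \pair{\cdot}{1}$.

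I expect the only genuinely delicate point to be the event-set clause --- reconciling a coproduct, which is literally a set of ordered pairs, with the requirement that an event set be a \emph{subset} of $E$ --- which is exactly why infinitude of $E$ is assumed. Everything else is bookkeeping; within it, the case analysis that rules out the mixed comparisons in the antisymmetry and transitivity of $\preceq_{x ; y}$ is the spot that most rewards care, and organizing the whole argument around the containment $\preceq_{x \parallel y} \subseteq \preceq_{x ; y}$ keeps the two composites from being treated twice.
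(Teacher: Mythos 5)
Your proof is correct and takes essentially the same route as the paper's: a case analysis on the $\bbB$-coordinates that reduces each partial-order axiom for $\preceq_{x \parallel y}$ to the corresponding axiom of $\preceq_x$ or $\preceq_y$, followed by a treatment of $\preceq_{x ; y}$ as an extension of $\preceq_{x \parallel y}$. You are, however, more careful than the paper in two places, and both are worth keeping. First, the paper's proof checks only the partial-order clause of Definition~\ref{def:partial-string} and silently skips the requirements that $E_{x \Join y} \subseteq E$ and that the labelling is a well-defined total function; your observation that the coproduct is literally a set of ordered pairs and must be reconciled with $E$ via an injection $E \times \bbB \hookrightarrow E$ (available because $E$ is assumed infinite) makes explicit something the paper only gestures at in a remark following the proof. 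Second, the paper asserts that $\preceq_{x ; y}$ is reflexive \emph{and transitive} because it contains $\preceq_{x \parallel y}$; reflexivity is indeed inherited by supersets, but transitivity is not, so the paper's stated justification for transitivity is not a valid inference. Your explicit case split on whether each comparison holds by $i < j$ or by the $\parallel$-clause, discarding the impossible two-strict-steps case in the two-element chain $\bbB$, supplies exactly the argument the paper's proof is missing.
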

\begin{proof}
Show that $\precsim_{x \parallel y}$ and $\precsim_{x ; y}$ are partial orders. By assumption, $\precsim_x$ and $\precsim_y$ are partial orders. By case analysis, $\precsim_{x \parallel y}$ is reflexive, transitive and antisymmetric. Since $\precsim_{x \parallel y}$ is a subset of $\precsim_{x ; y}$, it follows that $\precsim_{x ; y}$ is reflexive and transitive. Let $e,e' \in E_x \cup E_y$ and $i,i' \in \bbB$. Assume $\pair{e}{i} \precsim_{x ; y} \pair{e'}{i'}$ and $\pair{e'}{i'} \precsim_{x ; y} \pair{e}{i}$. Definition~\ref{def:partial-string-composition} implies $i = i'$. From antisymmetry of $\precsim_x$ and $\precsim_y$ follows $e = e'$. By point-wise equality, $\pair{e}{i} = \pair{e'}{i'}$. We conclude that $x \parallel y$ and $x ; y$ are partial strings in $\sP$ according to Definition~\ref{def:partial-string}.
\end{proof}

We can therefore speak of $;$ and $\parallel$ as \emph{partial string operators}. Before we begin our study of these operators, it helps to develop some intuition for them. Figure~\ref{fig:partial-string-composition} illustrates the sequential and concurrent composition of two simple partial strings as coproducts. Note that $p ; q \sqsubseteq p \parallel q$ in Figure~\ref{fig:partial-string-composition}. Semantically, the operators $\parallel$ and $;$ are identical to concurrent and sequential composition on pomsets, respectively, as defined by Gisher~\cite{G1986}, except of the aforementioned role of coproducts. When clear from the context, we construct partial strings directly from labels. For example, as already seen $(a ; b) \parallel (a ; b)$ for $a, b \in \Gamma$ corresponds to a partial string that is isomorphic to the one shown in Figure~\ref{fig:almost-N-partial-string}.

The significance of the three basic definitions given so far is best illustrated through proofs about properties of $\parallel$ and $;$. So we start with a simple proof of the fact that $x \parallel y$ and $y \parallel x$ are isomorphic for every partial string $x$ and $y$ (whether finite or not).

\begin{figure}
\centering
\subcaptionbox{\label{fig:partial-string-composition-1}}{
\xymatrix@R=1.2em@C=1em{
  e_0                                     \\
  e_1\ar@{<-}[u]
}
}
\qquad
\qquad
\subcaptionbox{\label{fig:partial-string-composition-2}}{
\xymatrix@R=1.2em@C=1em{
  e_2                                     \\
}
}
\qquad
\qquad
\subcaptionbox{\label{fig:partial-string-composition-3}}{
\xymatrix@R=1.2em@C=1em{
  \pair{e_0}{0}           &                 \\
  \pair{e_1}{0}\ar@{<-}[u] & \pair{e_2}{1}  \\
}
}
\qquad
\qquad
\subcaptionbox{\label{fig:partial-string-composition-4}}{
\xymatrix@R=1.2em@C=1em{
  \pair{e_0}{0}\ar@{->}[d]                \\
  \pair{e_1}{0}\ar@{->}[d]                \\
  \pair{e_2}{1}                           \\
}
}
\caption{
Let $p$ and $q$ be finite partial strings as shown in~\ref{fig:partial-string-composition-1}~and~\ref{fig:partial-string-composition-2}, respectively. Then~\ref{fig:partial-string-composition-3}~and~\ref{fig:partial-string-composition-2} illustrate the coproduct $p \parallel q$ and $p ; q$, respectively.
}
\label{fig:partial-string-composition}
\end{figure}

\begin{proposition}[$\parallel$-commutativity]
\label{proposition:partial-string-concurrent-commutativity}
For all $x, y \in \sP$, $x \parallel y \cong y \parallel x$.
\end{proposition}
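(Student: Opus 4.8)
The plan is to exhibit an explicit isomorphism $f \colon x \parallel y \to y \parallel x$ and then appeal directly to Definition~\ref{def:partial-string-isomorphism}. The only sensible candidate is the ``tag-swapping'' bijection: since $E_{x \parallel y} = E_x + E_y = (E_x \times \set{0}) \cup (E_y \times \set{1})$ while $E_{y \parallel x} = E_y + E_x = (E_y \times \set{0}) \cup (E_x \times \set{1})$, define $f(\pair{e}{0}) \deq \pair{e}{1}$ for $e \in E_x$ and $f(\pair{e}{1}) \deq \pair{e}{0}$ for $e \in E_y$. Its inverse is the analogous swap on $E_{y \parallel x}$, so $f$ is a bijection essentially by construction; this is where the coproduct formulation pays off, since we never have to worry about $E_x$ and $E_y$ overlapping.

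First I would check that $f$ preserves labels. For $e \in E_x$, Definition~\ref{def:partial-string-composition} gives $\alpha_{x \parallel y}(\pair{e}{0}) = \alpha_x(e)$, and on the other side the second coordinate $1$ selects the \emph{second} operand of $y \parallel x$, which is $x$, so $\alpha_{y \parallel x}(f(\pair{e}{0})) = \alpha_{y \parallel x}(\pair{e}{1}) = \alpha_x(e)$; the case $e \in E_y$ is symmetric. Next I would check that $f$ both preserves and reflects the order, which is what distinguishes an isomorphism from a mere monotonic bijective morphism. Unwinding Definition~\ref{def:partial-string-composition}, $\pair{e}{i} \preceq_{x \parallel y} \pair{e'}{j}$ holds exactly when $(i = j = 0 \text{ and } e \preceq_x e')$ or $(i = j = 1 \text{ and } e \preceq_y e')$. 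Applying $f$ flips every tag $0 \leftrightarrow 1$, and the defining condition for $\preceq_{y \parallel x}$ is the same disjunction with the roles of $x$ and $y$ interchanged, so the equivalence goes through in both directions by a short case split on whether the events come from $E_x$ or $E_y$.

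Having verified bijectivity, label preservation, and the order equivalence, Definition~\ref{def:partial-string-isomorphism} yields $x \parallel y \cong y \parallel x$ directly, completing the proof. I do not anticipate a genuine obstacle here: the argument is entirely a matter of bookkeeping with the tags in $\bbB = \set{0,1}$, and the mild care needed is simply to keep straight that tag $0$ in $y \parallel x$ refers to $y$ while tag $1$ refers to $x$. It is worth noting that commutativity is only up to isomorphism and not up to point-wise equality, precisely because the coproduct builds disjointness in by relabelling events with tags, so one cannot expect $E_{x \parallel y}$ and $E_{y \parallel x}$ to coincide on the nose.
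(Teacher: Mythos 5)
Your proposal is correct and follows essentially the same route as the paper, which also uses the tag-swapping bijection $f(\pair{e}{i}) = \pair{e}{1-i}$ and verifies that it is a label-preserving order-isomorphism via Definition~\ref{def:partial-string-composition}. Nothing further is needed.
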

\begin{proof}
To show that concurrent composition of partial strings is commutative, let $f \colon E_{x \parallel y} \to E_{y \parallel x}$ be a function such that, for all $e \in E_x \cup E_y$ and $i \in \bbB$, $f(\pair{e}{i}) = \pair{e}{1 - i}$. Clearly $f$ is bijective. It remains to show that $f$ is a label-preserving order-isomorphism. Let $e' \in E_x \cup E_y$ and $i' \in \bbB$. Then
\begin{align}
\pair{e}{i} \precsim_{x \parallel y} \pair{e'}{i'}
&\bimp \pair{e}{1 - i} \precsim_{y \parallel x} \pair{e'}{1 - i'} \ptag{Definition~\ref{def:partial-string-composition} of $\parallel$ with $i = i'$} \\
&\bimp f(\pair{e}{i}) \precsim_{y \parallel x} f(\pair{e'}{i'}). \ptag{Definition of $f$}
\end{align}
Moreover $\alpha_{x \parallel y}(\pair{e}{i}) = \alpha_{y \parallel x}(\pair{e}{1 - i}) = \alpha_{y \parallel x}(f(\pair{e}{i}))$. Hence $x \parallel y \cong y \parallel x$.
\end{proof}

We have given the details of the proof to draw attention to the fact that the label-preserving order-isomorphism acts as a witness for the truth of the statement. We call this a \emph{constructive proof}. Along similar lines, it is not difficult to constructively prove (Proposition~\ref{proposition:partial-string-identity}) that $\bot$ is the identity element (up to isomorphism) for both the sequential and concurrent partial string composition operators. In fact, since it is a recurring theme that an algebraic property holds for both operators, it is convenient to define the following:

\begin{definition}[Bow tie]
\label{def:bowtie}
For all partial strings $x$ and $y$, denote with $x \Join y$ either concurrent or sequential composition of $x$ and $y$. That is, a statement about $\Join$ is the same as two statements where $\Join$ is replaced by $\parallel$ and $;$, respectively.
\end{definition}

Here is a proof of the identity element so that the reader may get familiar with our use of the ``bow tie'' placeholder.

\begin{proposition}[Identity]
\label{proposition:partial-string-identity}
For all $x \in \sP$, $x \Join \bot \cong \bot \Join x \cong x$.
\end{proposition}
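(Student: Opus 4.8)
The plan is to exhibit explicit label-preserving order-isomorphisms, exploiting that $\bot$ is empty so that the coproducts defining $x \Join \bot$ and $\bot \Join x$ collapse to a single tagged copy of $\pair{E_x}{\precsim_x}$. Concretely, since $E_\bot = \emptyset$ (Proposition~\ref{proposition:unique-empty-partial-string} and Definition~\ref{def:partial-string}), the coproduct $E_{x \Join \bot} = E_x + E_\bot$ consists precisely of the pairs $\pair{e}{0}$ with $e \in E_x$, and likewise $E_{\bot \Join x} = E_\bot + E_x$ consists precisely of the pairs $\pair{e}{1}$ with $e \in E_x$. So I would define $g \colon E_x \to E_{x \Join \bot}$ by $g(e) \deq \pair{e}{0}$ and $h \colon E_x \to E_{\bot \Join x}$ by $h(e) \deq \pair{e}{1}$; both are evidently bijections.

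First I would check that $g$ is a label-preserving order-isomorphism from $x$ to $x \Join \bot$. By Definition~\ref{def:partial-string-composition}, for $\parallel$ we have $\pair{e}{0} \precsim_{x \parallel \bot} \pair{e'}{0}$ exactly when $e \precsim_x e'$ (the $i = j = 1$ disjunct is vacuous), and for $;$ the additional disjunct ``$0 < 0$'' is false, so the same equivalence holds for $\precsim_{x ; \bot}$; hence in either reading of $\Join$, $e \precsim_x e' \bimp g(e) \precsim_{x \Join \bot} g(e')$. Moreover $\alpha_{x \Join \bot}(\pair{e}{0}) = \alpha_x(e)$ directly from the labelling clause of Definition~\ref{def:partial-string-composition}. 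An entirely symmetric argument on the index $1$ shows $h$ is a label-preserving order-isomorphism from $x$ to $\bot \Join x$. Thus $x \Join \bot \cong x$ and $\bot \Join x \cong x$.

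Finally I would chain these two isomorphisms: $\cong$ is an equivalence relation (the identity is an isomorphism, and compositions and inverses of label-preserving order-isomorphisms are again such, as already used implicitly in the proofs of Propositions~\ref{proposition:sqsubseteq-partial-order} and~\ref{proposition:converse-sqsubseteq-partial-order}), so $x \Join \bot \cong x$ together with $x \cong \bot \Join x$ gives $x \Join \bot \cong \bot \Join x \cong x$. For $\parallel$ one could instead derive $x \parallel \bot \cong \bot \parallel x$ from Proposition~\ref{proposition:partial-string-concurrent-commutativity}, but that shortcut is unavailable for $;$, so treating both indices directly is uniform and cleaner. I do not expect a genuine obstacle here; the only point demanding care is the coproduct bookkeeping --- spelling out that an empty component contributes nothing, so that what remains is just a relabelled copy of $x$ carrying $\alpha_x$ and $\precsim_x$ unchanged.
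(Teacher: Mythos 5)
Your proposal is correct and follows essentially the same route as the paper: it exhibits the explicit bijection $e \mapsto \pair{e}{0}$ (resp.\ $e \mapsto \pair{e}{1}$), observes that the empty component contributes nothing to the coproduct, and checks label- and order-preservation. The paper compresses the verification into ``clearly $f$ is a label-preserving order-isomorphism'' and a ``similarly'' for the other side, whereas you spell out the vacuity of the $i < j$ disjunct and the chaining of isomorphisms --- same argument, just more explicit.
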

\begin{proof}
Let $f \colon E_x \to E_{x \Join \bot}$ be a function such that, for all $e \in E_x$, $f(e) = \pair{e}{0}$. By Definition~\ref{def:partial-string-composition}, $E_{x \Join \bot} = E_x \times \set{0}$ because $E_\bot \times \set{1} = \emptyset$. Clearly $f$ is a label-preserving order-isomorphism, whence $x \Join \bot \cong x$. Similarly, $\bot \Join x \cong x$.
\end{proof}

Equivalently, by Definition~\ref{def:bowtie}, $x ; \bot \cong \bot ; x \cong x$ and $x \parallel \bot \cong \bot \parallel x \cong x$ for all $x \in \sP$. Moreover, the two binary operators on partial strings are related in the expected way as already witnessed in the example of Figure~\ref{fig:partial-string-composition}:

\begin{proposition}[Basic refinement]
\label{proposition:partial-string-operator-chain}
For all $x, y \in \sP$, $x ; y \sqsubseteq x \parallel y$.
\end{proposition}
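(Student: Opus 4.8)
The plan is to exhibit the identity function as the required monotonic bijective morphism. Unfolding Definition~\ref{def:partial-string-isomorphism}, the claim $x ; y \sqsubseteq x \parallel y$ asks for a monotonic bijective morphism $f \colon x \parallel y \to x ; y$ (note the reversed direction in the definition of $\sqsubseteq$). By Definition~\ref{def:partial-string-composition}, both composites share the very same event set, namely the coproduct $E_{x \parallel y} = E_{x ; y} = E_x + E_y$, so the natural candidate is $f \deq \mathrm{id}_{E_x + E_y}$, and this is the only non-obvious choice one needs to make.

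\textbf{Key steps.} First I would record that $f$ is a bijection, since the identity on a set is a bijection, and that both event sets literally coincide by Definition~\ref{def:partial-string-composition}. Second, I would check label preservation: for every $\pair{e}{i} \in E_x + E_y$ we have $\alpha_{x ; y}(f(\pair{e}{i})) = \alpha_{x ; y}(\pair{e}{i}) = \alpha_{x \parallel y}(\pair{e}{i})$, because Definition~\ref{def:partial-string-composition} defines $\alpha_{x \parallel y}$ and $\alpha_{x ; y}$ by the very same case split on $i \in \bbB$. Third, I would verify monotonicity: if $\pair{e}{i} \precsim_{x \parallel y} \pair{e'}{j}$, then by the clause for $\precsim_{x ; y}$ in Definition~\ref{def:partial-string-composition} — which holds as soon as $i < j$ \emph{or} $\pair{e}{i} \precsim_{x \parallel y} \pair{e'}{j}$ — we immediately get $\pair{e}{i} \precsim_{x ; y} \pair{e'}{j}$, i.e.\ $f(\pair{e}{i}) \precsim_{x ; y} f(\pair{e'}{j})$. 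In other words, $\precsim_{x \parallel y} \subseteq \precsim_{x ; y}$ as relations on the common carrier, which is exactly what monotonicity of the identity demands. Concluding, $f$ witnesses $x ; y \sqsubseteq x \parallel y$.

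\textbf{Main obstacle.} There is essentially no hard step here; the proof is a direct unfolding of the definitions, and the only place to be careful is the contravariant direction in the definition of $\sqsubseteq$ (a morphism $x \parallel y \to x ; y$ yields $x ; y \sqsubseteq x \parallel y$, not the reverse). This is a good illustration of the paper's stated theme that building composition via coproducts on a shared event universe makes such refinements fall out by construction, with the witnessing morphism — here the identity — made explicit.
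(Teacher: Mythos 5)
Your proof is correct and takes essentially the same route as the paper: both exhibit the identity map on the shared coproduct carrier $E_x + E_y$ as a monotonic bijective morphism $f \colon x \parallel y \to x ; y$, with monotonicity following from $\precsim_{x \parallel y} \subseteq \precsim_{x ; y}$. Your explicit remark about the contravariant direction of $\sqsubseteq$ is a helpful clarification but does not change the argument.
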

\begin{proof}
Let $e, e' \in E_x \cup E_y$, $i, i' \in \bbB$, and $f \colon E_{x \parallel y} \to E_{x ; y}$ is a function such that $f(\pair{e}{i}) = \pair{e}{i}$. Clearly $f$ is a bijection. By Definition~\ref{def:partial-string-composition}, $f$ preserves labels because $\alpha_{x \parallel y}(\pair{e}{i}) = \alpha_{x ; y}(\pair{e}{i}) = \alpha_{x ; y}(f(\pair{e}{i}))$, and $\pair{e}{i} \precsim_{x \parallel y} \pair{e'}{i'}$ implies $f(\pair{e}{i}) \precsim_{x ; y} f(\pair{e'}{i'})$, proving its monotonicity. Therefore, by Definition~\ref{def:partial-string-isomorphism}, $f \colon x \parallel y \to x ; y$ is a monotonic bijective morphism, proving the claim.
\end{proof}

We have given the detailed proof of the previous proposition because it is a template for more complicated ones. In particular, it illustrates how the concept of a monotonic bijective morphism lends itself for constructive proofs about \emph{inequalities} between partial strings. This way we can carry forward the simplicity and ingenuity of the pomset model to inequational reasoning. However, this added flexibility also means that proofs about partial strings may end up sometimes more combinatorial, as illustrated next.

\begin{proposition}[Monotonicity]
\label{proposition:partial-string-monotonicity}
For all $x, y, z \in \sP$, if $x \sqsubseteq y$, then $x \Join z \sqsubseteq y \Join z$ and $z \Join x \sqsubseteq z \Join y$.
\end{proposition}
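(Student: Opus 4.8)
The plan is to exhibit an explicit monotonic bijective morphism in each case. Suppose $x \sqsubseteq y$, so by Definition~\ref{def:partial-string-isomorphism} there is a monotonic bijective morphism $f \colon y \to x$. For the first claim I would define $g \colon E_{y \Join z} \to E_{x \Join z}$ by $g(\pair{e}{0}) \deq \pair{f(e)}{0}$ for $e \in E_y$ and $g(\pair{e}{1}) \deq \pair{e}{1}$ for $e \in E_z$. Since $f$ is a bijection $E_y \to E_x$ and the identity is a bijection on $E_z$, the coproduct map $g$ is a bijection between $E_{y \Join z} = E_y + E_z$ and $E_{x \Join z} = E_x + E_z$. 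Label preservation is immediate from Definition~\ref{def:partial-string-composition}: on the left component it reduces to $\alpha_y(e) = \alpha_x(f(e))$, which holds because $f$ preserves labels, and on the right component both sides equal $\alpha_z(e)$.

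The core of the argument is showing $g$ is monotonic, and here the bow-tie placeholder forces a short case analysis. Assume $\pair{e}{i} \precsim_{y \Join z} \pair{e'}{i'}$. When $\Join$ is $\parallel$, Definition~\ref{def:partial-string-composition} gives $i = i'$, and either $i = i' = 0$ with $e \precsim_y e'$ --- in which case $f(e) \precsim_x f(e')$ by monotonicity of $f$, so $g(\pair{e}{0}) = \pair{f(e)}{0} \precsim_{x \parallel z} \pair{f(e')}{0} = g(\pair{e'}{0})$ --- or $i = i' = 1$ with $e \precsim_z e'$, in which case $g$ is the identity on that component and $g(\pair{e}{1}) \precsim_{x \parallel z} g(\pair{e'}{1})$ holds directly. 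When $\Join$ is $;$, either $\pair{e}{i} \precsim_{y \parallel z} \pair{e'}{i'}$, handled exactly as above since $\precsim_{x \parallel z} \subseteq \precsim_{x ; z}$, or $i < i'$, i.e. $i = 0$ and $i' = 1$; but $g$ never changes the second coordinate, so $g(\pair{e}{0})$ still has second coordinate $0$ and $g(\pair{e'}{1})$ still has second coordinate $1$, whence $g(\pair{e}{0}) \precsim_{x ; z} g(\pair{e'}{1})$ by the first clause of the definition of $\precsim_{x;z}$. Thus $g \colon y \Join z \to x \Join z$ is a monotonic bijective morphism and $x \Join z \sqsubseteq y \Join z$.

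For the second claim, $z \Join x \sqsubseteq z \Join y$, I would run the mirror-image construction, defining $g' \colon E_{z \Join y} \to E_{z \Join x}$ by $g'(\pair{e}{0}) \deq \pair{e}{0}$ for $e \in E_z$ and $g'(\pair{e}{1}) \deq \pair{f(e)}{1}$ for $e \in E_y$; bijectivity, label preservation, and monotonicity follow from the same case split with the roles of the two coproduct injections swapped. (For $\parallel$ one could instead invoke Proposition~\ref{proposition:partial-string-concurrent-commutativity}, but since commutativity fails for $;$ it is cleaner to handle both cases uniformly by hand.)

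The main obstacle I anticipate is purely bookkeeping: keeping the two coproduct tags straight through the bow-tie case split, and noting that in the $;$ case the ``$i < j$'' clause is preserved for free because $g$ (and $g'$) leave the second coordinate untouched. There is no real mathematical difficulty --- monotonicity of $f$ does all the work on the modified component, and the other component is covered by the identity.
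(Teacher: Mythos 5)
Your proposal is correct and follows essentially the same route as the paper: the coproduct of the given morphism with the identity on $E_z$, verified by the same case split on the tags (the paper even uses $\parallel$-commutativity for the second claim exactly as you mention parenthetically, and leaves the $;$ cases to ``similar''). Your write-up is simply more explicit about the $i < i'$ clause and the mirror-image construction, both of which check out.
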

\begin{proof}
Assume $x \sqsubseteq y$. Show $x \parallel z \sqsubseteq y \parallel z$. Let $g \colon y \to x$ be a monotonic bijective morphism as a witness for the assumption. Let $f \colon E_{y \parallel z} \to E_{x \parallel z}$ be a function such that, for all $e \in E_y \cup E_z$ and $i \in \bbB$, $$f(\pair{e}{i}) =
\begin{cases}
  \pair{g(e)}{0} &\text{if } i = 0 \\
  \pair{e}{1} &\text{if } i = 1
\end{cases}.$$ Clearly $f$ is bijective and it preserves labels. Let $e' \in E_y \cup E_z$ and $i' \in \bbB$. Assume $\pair{e}{i} \precsim_{y \parallel z} \pair{e}{i'}$. By Definition~\ref{def:partial-string-composition} and the last assumption, there are only two cases to consider: either $i = i' = 0$ or $i = i' = 1$. If $i = i' = 0$, then $e \precsim_y e'$ and $\pair{g(e)}{0} = f(\pair{e}{0}) \precsim_{x \parallel z} f(\pair{e'}{0}) = \pair{g(e')}{0}$ because $g(e) \precsim_x g(e')$; otherwise, $\pair{e}{1} = f(\pair{e}{1}) \precsim_{x \parallel z} f(\pair{e'}{1}) = \pair{e'}{1}$ because $e \precsim_z e'$, proving that $x \sqsubseteq y$ implies $x \parallel z \sqsubseteq y \parallel z$. By $\parallel$-commutativity, if $x \sqsubseteq y$, then $z \parallel x \sqsubseteq z \parallel y$. The proof for $;$-monotonicity is similar except that there are three cases to consider.
\end{proof}

In addition to being monotonic, both sequential and concurrent composition are associative up to isomorphism.

\begin{proposition}[Associativity]
\label{proposition:partial-string-associativity}
For all $x, y, z \in \sP$, $(x \Join y) \Join z \cong x \Join (y \Join z)$.
\end{proposition}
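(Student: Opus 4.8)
The plan is to exhibit a single explicit bijection $f$ between the two event sets and verify that it is a label-preserving order-isomorphism, treating $\parallel$ and $;$ uniformly via the bow-tie convention of Definition~\ref{def:bowtie}.

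First I would unfold the iterated coproducts. The events of $(x \Join y) \Join z$ are $\pair{\pair{e}{0}}{0}$ for $e \in E_x$, $\pair{\pair{e}{1}}{0}$ for $e \in E_y$, and $\pair{e}{1}$ for $e \in E_z$, while the events of $x \Join (y \Join z)$ are $\pair{e}{0}$ for $e \in E_x$, $\pair{\pair{e}{0}}{1}$ for $e \in E_y$, and $\pair{\pair{e}{1}}{1}$ for $e \in E_z$ (this uses, as throughout Definition~\ref{def:partial-string-composition}, that $E$ is large enough to contain such nested pairs). I would then define $f$ on these three classes of events in the evident way, sending $\pair{\pair{e}{0}}{0} \mapsto \pair{e}{0}$, $\pair{\pair{e}{1}}{0} \mapsto \pair{\pair{e}{0}}{1}$, and $\pair{e}{1} \mapsto \pair{\pair{e}{1}}{1}$. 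Since these classes partition both event sets and $f$ carries each class bijectively onto the corresponding class, $f$ is a bijection; reading off the $\alpha$-clause of Definition~\ref{def:partial-string-composition} on each class shows that $f$ is label-preserving.

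The substance is the equivalence $p \preceq_{(x \Join y) \Join z} q \bimp f(p) \preceq_{x \Join (y \Join z)} f(q)$, which I would settle by case analysis on the classes of $p$ and $q$. When $p$ and $q$ lie in the same class, stripping the coproduct tags reduces both sides, by Definition~\ref{def:partial-string-composition}, to the same one of $\preceq_x$, $\preceq_y$, $\preceq_z$. When $p$ and $q$ lie in different classes: for $\parallel$ both sides are false, because a concurrent composition relates only equally-tagged events and nesting preserves this; for $;$ both sides reduce to ``$p$ precedes $q$ iff the class of $p$ comes before the class of $q$ in the list $x, y, z$'' --- for instance an $x$-event lies below a $y$-event on the left because its outer tag is the smaller of $0, 1$, and on the right because the outer tags are $0$ and $1$, whereas a $y$-event lies below an $x$-event on neither side. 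Running through the six ordered pairs of distinct classes completes the verification, so $(x \Join y) \Join z \cong x \Join (y \Join z)$ by Definition~\ref{def:partial-string-isomorphism}.

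I do not expect a genuine obstacle; the argument is pure bookkeeping of nested coproduct tags. The only point that needs care is to keep the $\parallel$ and $;$ cases visibly in step while recording that the ``different class'' subcase for $;$ carries the extra content --- that the component order $x < y < z$ is the same however the composition is bracketed --- which is the same $;$-versus-$\parallel$ asymmetry already seen in the proof of Proposition~\ref{proposition:partial-string-monotonicity}, where $;$ required one more case than $\parallel$.
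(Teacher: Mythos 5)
Your proof is correct and takes essentially the same approach as the paper: both exhibit the evident re-bracketing bijection on the nested coproduct tags and check that it is a label-preserving order-isomorphism. The only difference is that the paper explicitly leaves the verification of the order-preservation cases ``as an exercise,'' whereas you carry out that case analysis (correctly, including the extra cross-class cases that $;$ requires over $\parallel$).
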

\begin{proof}
Let $f \colon E_{(x \parallel y) \parallel z} \to E_{x \parallel (y \parallel z)}$ be a function such that, for all $e \in E_{(x \parallel y) \parallel z}$, $$f(e) =
\begin{cases}
  \pair{\pair{e'}{0}}{0} &\text{if } \exists e' \in E_x \colon \pair{e'}{0} = e \\
  \pair{\pair{e'}{1}}{0} &\text{if } \exists e' \in E_y \colon \pair{\pair{e'}{0}}{1} = e \\
  \pair{e'}{1} &\text{if } \exists e' \in E_z \colon \pair{\pair{e'}{1}}{1} = e
\end{cases}.$$ We leave it as an exercise to the reader to verify that $f$ is a partial string isomorphism, proving $\parallel$-associativity. Similarly for $;$-associativity.
\end{proof}

\begin{definition}
\label{def:weakly-sequential-composition}
For all $x, y \in \sP$, \defn{weakly sequential composition}, written $x \fatsemi y$, is any coproduct according to Definition~\ref{def:partial-string-composition} that satisfies $x ; y \sqsubseteq x \fatsemi y \sqsubseteq x \parallel y$.
\end{definition}

Usually, the weakly sequential composition of partial strings is expected to be strictly more deterministic than their concurrent composition. Of course, it need not always be the case that $x \parallel y$, $x \fatsemi y$ and $x ; y$ are non-isomorphic, particularly since $\bot$ is the left and right identity element of both strongly and weakly sequential compositions of finite partial strings.

\begin{corollary}[Weakly sequential identity]
\label{corollary:partial-string-identity}
For all $x \in \sP_f$, $x \fatsemi \bot \cong \bot \fatsemi x \cong x$.
\end{corollary}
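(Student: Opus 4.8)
The plan is to chain together the identity results for the two primitive composition operators and then invoke antisymmetry of $\sqsubseteq$ on finite partial strings. First I would observe that $x \fatsemi \bot$ is itself a finite partial string: by Definition~\ref{def:weakly-sequential-composition} it is a coproduct in the sense of Definition~\ref{def:partial-string-composition}, so its event set is $E_{x \fatsemi \bot} \deq E_x + E_\bot = E_x + \emptyset$, which is in bijection with $E_x$ and hence finite since $x \in \sP_f$. Thus both $x \fatsemi \bot$ and $x$ lie in $\sP_f$, which is what we will need at the end.

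Next I would unfold Definition~\ref{def:weakly-sequential-composition} to get $x ; \bot \sqsubseteq x \fatsemi \bot \sqsubseteq x \parallel \bot$. By Proposition~\ref{proposition:partial-string-identity} (Identity) we have $x ; \bot \cong x$ and $x \parallel \bot \cong x$, and then Proposition~\ref{proposition:converse-sqsubseteq-partial-order} turns these isomorphisms into the refinements $x \sqsubseteq x ; \bot$ and $x \parallel \bot \sqsubseteq x$. Stringing these four facts together with transitivity of $\sqsubseteq$ (Proposition~\ref{proposition:sqsubseteq-preorder}) yields $x \sqsubseteq x \fatsemi \bot$ and $x \fatsemi \bot \sqsubseteq x$ simultaneously. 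Since both partial strings are finite, Proposition~\ref{proposition:sqsubseteq-partial-order} gives $x \fatsemi \bot \cong x$. The argument for $\bot \fatsemi x \cong x$ is symmetric: $\bot \fatsemi x$ is finite for the same reason, it satisfies $\bot ; x \sqsubseteq \bot \fatsemi x \sqsubseteq \bot \parallel x$, and Propositions~\ref{proposition:partial-string-identity}, \ref{proposition:converse-sqsubseteq-partial-order}, \ref{proposition:sqsubseteq-preorder}, and~\ref{proposition:sqsubseteq-partial-order} close it out exactly as before.

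I do not anticipate a genuine obstacle here; the corollary is essentially a squeeze argument. The two points that require a little care are (i) checking that finiteness is preserved by $\fatsemi$ with $\bot$, since antisymmetry of $\sqsubseteq$ (Proposition~\ref{proposition:sqsubseteq-partial-order}) is only available on $\sP_f$ — this is exactly why the hypothesis $x \in \sP_f$ appears and why the analogous statement for arbitrary $x \in \sP$ is not claimed — and (ii) keeping the direction of the monotonic bijective morphism straight, recalling that $a \sqsubseteq b$ means there is a morphism $b \to a$, so that the refinements extracted from the isomorphisms point the way needed to sandwich $x \fatsemi \bot$ between $x$ and $x$.
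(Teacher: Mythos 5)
Your proposal is correct and follows essentially the same squeeze argument as the paper: sandwich $x \fatsemi \bot$ between $x ; \bot \cong x$ and $x \parallel \bot \cong x$ via Definition~\ref{def:weakly-sequential-composition}, then apply the antisymmetry-up-to-isomorphism of $\sqsubseteq$ on $\sP_f$ (Proposition~\ref{proposition:sqsubseteq-partial-order}). You merely make explicit two steps the paper leaves tacit --- converting the isomorphisms to two-way refinements via Proposition~\ref{proposition:converse-sqsubseteq-partial-order} and verifying that $x \fatsemi \bot$ is itself finite --- which is a harmless (indeed welcome) elaboration rather than a different route.
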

\begin{proof}
By Definition~\ref{def:weakly-sequential-composition} and Proposition~\ref{proposition:partial-string-identity}, $x \cong x ; \bot \sqsubseteq x \fatsemi \bot \sqsubseteq x \parallel \bot \cong x$. By Proposition~\ref{proposition:sqsubseteq-partial-order}, $x \fatsemi \bot \cong x$. Similarly, $\bot \fatsemi x \cong x$.
\end{proof}

The concept of weakly sequential composition is relevant for the discussion of fences which are, however, beyond the scope of this paper. Instead we focus on strongly sequential composition and especially its interplay with concurrent composition. For this, we prove that concurrent and sequential composition of partial strings abides to CKA's exchange law which is modeled after the interchange law in two-category theory~\cite{HMSW2011}. For the theory of concurrency, the exchange law is important because it shows how concurrent and sequential composition can be interchanged. Operationally, it could be seen as a divide-and-conquer mechanism for how concurrent composition may be sequentially implemented on a machine.

\begin{proposition}
\label{proposition:exchange-law}
For all $u, v, x, y \in \sP$, $(u \parallel v) ; (x \parallel y) \sqsubseteq (u ; x) \parallel (v ; y)$.
\end{proposition}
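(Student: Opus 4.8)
The plan is to exhibit an explicit monotonic bijective morphism $f \colon (u;x) \parallel (v;y) \to (u \parallel v);(x \parallel y)$, since by Definition~\ref{def:partial-string-isomorphism} that is exactly what witnesses $(u \parallel v);(x \parallel y) \sqsubseteq (u;x)\parallel(v;y)$. First I would unwind both event sets using Definition~\ref{def:partial-string-composition}. On the left, $E_{(u \parallel v);(x \parallel y)} = (E_u + E_v) + (E_x + E_y)$, so a typical element has one of the four shapes $\pair{\pair{e}{0}}{0}$ with $e \in E_u$, $\pair{\pair{e}{1}}{0}$ with $e \in E_v$, $\pair{\pair{e}{0}}{1}$ with $e \in E_x$, $\pair{\pair{e}{1}}{1}$ with $e \in E_y$. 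On the right, $E_{(u;x)\parallel(v;y)} = (E_u + E_x) + (E_v + E_y)$, with typical elements $\pair{\pair{e}{0}}{0}$ ($e \in E_u$), $\pair{\pair{e}{1}}{0}$ ($e \in E_x$), $\pair{\pair{e}{0}}{1}$ ($e \in E_v$), $\pair{\pair{e}{1}}{1}$ ($e \in E_y$). The bijection $f$ is the obvious "reassociation/commutation of tags": it sends the $E_u$-block to the $E_u$-block, the $E_v$-block to the $E_v$-block, the $E_x$-block to the $E_x$-block, and the $E_y$-block to the $E_y$-block, adjusting the two tag bits accordingly (e.g. $\pair{\pair{e}{1}}{0} \mapsto \pair{\pair{e}{0}}{1}$ for $e \in E_v$, and $\pair{\pair{e}{0}}{1} \mapsto \pair{\pair{e}{1}}{0}$ for $e \in E_x$). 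Bijectivity is immediate since $f$ is a block-wise relabelling of tags, and label-preservation is immediate since in every case $f$ leaves the underlying event $e$ untouched and $\alpha$ depends only on $e$ and on which of $E_u,E_v,E_x,E_y$ it lies in.

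The substantive step is checking monotonicity: if $p \precsim_{(u;x)\parallel(v;y)} p'$ then $f(p) \precsim_{(u \parallel v);(x \parallel y)} f(p')$. By Definition~\ref{def:partial-string-composition}, the order on the right-hand (domain) string is the $\parallel$ of the two strings $u;x$ and $v;y$, so $p \precsim p'$ forces $p,p'$ to lie in the same outer block: either both in $E_u + E_x$ with $p \precsim_{u;x} p'$, or both in $E_v + E_y$ with $p \precsim_{v;y} p'$. Take the first case (the second is symmetric). Then $p \precsim_{u;x} p'$ means either the inner tags satisfy $0 < 1$ (i.e. $p$ is from the $E_u$-block, $p'$ from the $E_x$-block — always related), or the inner tags are equal and $p,p'$ are related within $u$ (resp. within $x$). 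Now I examine where $f$ sends these. If $p$ is from $E_u$ and $p'$ from $E_x$, then $f(p) = \pair{\pair{e}{0}}{0}$ and $f(p') = \pair{\pair{e'}{0}}{1}$, whose outer tags satisfy $0 < 1$, so they are $\precsim_{(u \parallel v);(x \parallel y)}$-related by the "$i<j$" clause. If instead $p,p'$ both come from $E_u$ with $e \preceq_u e'$, then $f(p) = \pair{\pair{e}{0}}{0}$, $f(p') = \pair{\pair{e'}{0}}{0}$, both in the left $\parallel$-component with equal tags, so relatedness in $u$ lifts directly; similarly for the $E_x$–$E_x$ subcase in the right $\parallel$-component. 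All other subcases are handled the same way, and the $E_v + E_y$ outer block is symmetric. Hence $f$ is monotonic, and by Definition~\ref{def:partial-string-isomorphism} the proof is complete.

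I do not expect a real obstacle here; the only mild annoyance is bookkeeping the nested pairs $\pair{\pair{e}{i}}{j}$ and keeping straight which tag encodes the outer ($;$ or $\parallel$) split versus the inner one. It is worth remarking that the converse inclusion fails in general (this is essentially the $N$-shape phenomenon from the earlier remark on $N(a,a,b,b)$), so a symmetric argument producing an isomorphism is not available — the asymmetry of $\precsim_{x;y}$ via the "$i<j$" clause is exactly what makes the morphism go only one way.
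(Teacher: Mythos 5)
Your proposal is correct and matches the paper's own proof: the block-wise relabelling you describe is exactly the paper's map $f(\pair{\pair{e}{i}}{j}) = \pair{\pair{e}{j}}{i}$, pointing in the same direction (from $(u;x)\parallel(v;y)$ to $(u\parallel v);(x\parallel y)$), and your case analysis for monotonicity is the same as the paper's disjunction-rearrangement argument. No gaps.
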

\begin{proof}
Let $f \colon E_{(u ; x) \parallel (v ; y)} \to E_{(u \parallel v) ; (x \parallel y)}$ such that $f(\pair{\pair{e}{i}}{j}) \deq \pair{\pair{e}{j}}{i}$ for all events $e \in E_u \cup E_v \cup E_x \cup E_y$ and $i, j \in \bbB$. Clearly $f$ is bijective. Moreover, the following equalities hold by Definition~\ref{def:partial-string-composition} of the labelling function:
\begin{align}
\alpha_{(u ; x) \parallel (v ; y)}(\pair{\pair{e}{i}}{j}) &=
\begin{cases}
  \alpha_{u ; x}(\pair{e}{i}) &\text{if } j = 0 \ptag{Definition of $\parallel$} \\
  \alpha_{v ; y}(\pair{e}{i}) &\text{if } j = 1
\end{cases} \\
&=\begin{cases}
  \alpha_u(e) &\text{if } i = 0 \text{ and } j = 0 \\
  \alpha_v(e) &\text{if } i = 0 \text{ and } j = 1 \\
  \alpha_x(e) &\text{if } i = 1 \text{ and } j = 0 \\
  \alpha_y(e) &\text{if } i = 1 \text{ and } j = 1
\end{cases} \ptag{Definition of $;$} \\
&=\begin{cases}
  \alpha_{u \parallel v}(\pair{e}{j}) &\text{if } i = 0 \\
  \alpha_{x \parallel y}(\pair{e}{j}) &\text{if } i = 1
\end{cases} \ptag{Definition of $\parallel$} \\
&=\alpha_{(u \parallel v) ; (x \parallel y)}(\pair{\pair{e}{j}}{i}) \ptag{Definition of $;$} \\
&=\alpha_{(u \parallel v) ; (x \parallel y)}(f(\pair{\pair{e}{i}}{j})) \ptag{Definition of $f$}.
\end{align}
In short, $f$ preserves the labelling of events. Let $e' \in E_u \cup E_v \cup E_x \cup E_y$ and $i', j' \in \bbB$. Assume $\pair{\pair{e}{i}}{j} \precsim_{(u ; x) \parallel (v ; y)} \pair{\pair{e'}{i'}}{j'}$. We must show $f(\pair{\pair{e}{i}}{j}) \precsim_{(u \parallel v) ; (x \parallel y)} f(\pair{\pair{e'}{i'}}{j'})$. By definition of $f$, it suffices to show $\pair{\pair{e}{j}}{i}) \precsim_{(u \parallel v) ; (x \parallel y)} \pair{\pair{e'}{j'}}{i'}$. By assumption and Definition~\ref{def:partial-string-composition} of concurrent composition, ($j = j' = 0$ and $\pair{e}{i} \precsim_{(u ; x)} \pair{e'}{i'}$) or ($j = j' = 1$ and $\pair{e}{i} \precsim_{(v ; y)} \pair{e'}{i'}$). By Definition~\ref{def:partial-string-composition} of strongly sequential composition, it follows
\begin{align*}
\bigg(j = j' = 0 \text{ and } \big(i < i' \text{ or } (i = i' = 0 \text{ and } e \precsim_u e') \text{ or } (i = i' = 1 \text{ and } e \precsim_x e')\big)\bigg) \text{ or } \\
\bigg(j = j' = 1 \text{ and } \big(i < i' \text{ or } (i = i' = 0 \text{ and } e \precsim_v e') \text{ or } (i = i' = 1 \text{ and } e \precsim_y e')\big)\bigg).
\end{align*}
From propositional logic follows
\begin{align*}
i < i' \text{ or } \bigg(i = i' = 0 \text{ and } \big((j = j' = 0 \text{ and } e \precsim_u e') \text{ or } (j = j' = 1 \text{ and } e \precsim_v e')\big)\bigg) \text{ or } \\
\bigg(i = i' = 1 \text{ and } \big((j = j' = 0 \text{ and } e \precsim_x e') \text{ or } (j = j' = 1 \text{ and } e \precsim_y e')\big)\bigg).
\end{align*}
By Definition~\ref{def:partial-string-composition} of concurrent composition, $i < i'$ or $\pair{e}{j} \precsim_{u \parallel v} \pair{e'}{j'}$ or $\pair{e}{j} \precsim_{x \parallel y} \pair{e'}{j'}$. Thus, by definition of strongly sequential composition, $\pair{\pair{e}{j}}{i} \precsim_{(u \parallel v) ; (x \parallel y)} \pair{\pair{e'}{j'}}{i'}$, whence $f(\pair{\pair{e}{i}}{j}) \precsim_{(u \parallel v) ; (x \parallel y)} f(\pair{\pair{e'}{i'}}{j'})$ by definition of $f$. Therefore $f \colon (u ; x) \parallel (v ; y) \to (u \parallel v) ; (x \parallel y)$ is a monotonic bijective morphism, proving the claim by Definition~\ref{def:partial-string-isomorphism}.
\end{proof}

The exchange law is originally postulated by Gisher in his thesis where it is called ``subsumption axiom''~\cite[p. 22]{G1985}. Here we prove that his exchange law holds for partial strings with respect to the refinement order $\sqsubseteq$. It is also interesting to compare the previous constructive proof with \'{E}sik's argument why the exchange law is morally true~\cite[Proposition~3.7]{E2002}. In the context of formal verification tools, however, we prefer the previous construction because it retains more of the computational aspect of the problem.

Last but not least, it is not difficult to convince ourselves that the exchange law for partial strings is not an order-isomorphism. In fact, if it were, the Eckmann-Hilton argument about any two monoid structures would imply that sequential and concurrent composition coincide --- something we usually would not want because it would conflate concepts that have typically different program semantics.

\begin{proposition}
The exchange law for partial strings is not an isomorphism.
\end{proposition}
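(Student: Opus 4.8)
The plan is to \emph{disprove} the statement by exhibiting a concrete counterexample: to see that the refinement in the exchange law cannot in general be strengthened to an isomorphism, it suffices to produce partial strings $u, v, x, y \in \sP$ with $(u \parallel v) ; (x \parallel y) \not\cong (u ; x) \parallel (v ; y)$. I would take $u$ and $y$ to be one-event partial strings carrying labels $a$ and $b$ from $\Gamma$, respectively, and set $v \deq x \deq \bot$. The argument will work whether or not $a = b$.

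First I would compute both sides up to isomorphism using Proposition~\ref{proposition:partial-string-identity} together with the fact that $\cong$ is a congruence for $;$ and $\parallel$ on finite partial strings --- which follows from Proposition~\ref{proposition:partial-string-monotonicity}, Proposition~\ref{proposition:converse-sqsubseteq-partial-order} and Proposition~\ref{proposition:sqsubseteq-partial-order} --- or, if one prefers, by directly unfolding the coproducts of Definition~\ref{def:partial-string-composition}. On the left, $u \parallel v \cong u$ and $x \parallel y \cong y$, so $(u \parallel v) ; (x \parallel y) \cong u ; y$, a two-event partial string in which the $a$-event strictly precedes the $b$-event (a two-element chain). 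On the right, $u ; x \cong u$ and $v ; y \cong y$, so $(u ; x) \parallel (v ; y) \cong u \parallel y$, a two-event partial string in which the $a$-event and the $b$-event are incomparable (a two-element antichain).

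It then remains to observe that a two-element chain and a two-element antichain are not order-isomorphic, hence the two partial strings above are not isomorphic in the sense of Definition~\ref{def:partial-string-isomorphism}: any candidate bijection $f$ between their event sets would have to satisfy $e \preceq e' \bimp f(e) \preceq f(e')$, but $u ; y$ contains two distinct comparable events whereas $u \parallel y$ contains none, a contradiction. This already proves the proposition. As the paragraph preceding the statement notes, one may also read this as an instance of the Eckmann--Hilton phenomenon: were the exchange law always an isomorphism, $;$ and $\parallel$ would be forced to coincide up to isomorphism, which the counterexample rules out.

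I expect no serious obstacle here; the only point requiring a little care is to recall that isomorphism of partial strings is an order-\emph{isomorphism} (the biconditional in Definition~\ref{def:partial-string-isomorphism}), so that it reflects as well as preserves the ordering --- this is exactly what makes the chain-versus-antichain distinction decisive, and is also why label-preservation, and thus the question of whether $a = b$, plays no essential role in the argument.
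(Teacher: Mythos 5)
Your proof is correct and takes essentially the same route as the paper: both exhibit a concrete counterexample built from atomic partial strings and observe that the two sides of the exchange law have non-isomorphic underlying orders. The paper uses four singleton partial strings (two disjoint $2$-chains versus the full bipartite order on four events), whereas your choice $v = x = \bot$ yields an even smaller instance (a $2$-chain versus a $2$-antichain), with the reduction to $u ; y$ and $u \parallel y$ legitimately justified for finite partial strings by the congruence argument you cite.
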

\begin{proof}
Let  $u$, $v$, $x$ and $y$ be partial strings that only consist of a single event $e_u$, $e_v$, $e_x$ and $e_y$, respectively. Then $(u ; x) \parallel (v ;  y)$ is the following partial string:
\begin{equation*}
  \xymatrix@R=1.2em{
    e_u\ar@{->}[d] & e_v\ar@{->}[d] \\
    e_x            & e_y
  }
\end{equation*}
In contrast, $(u \parallel v) ;  (x \parallel y)$ is the following partial string:
\begin{equation*}
  \xymatrix@R=1.2em{
    e_u\ar@{->}[d]\ar@{->}[dr] & e_v\ar@{->}[d]\ar@{->}[dl] \\
    e_x                        & e_y
  }
\end{equation*}
It is now easy to see that $(u \parallel v) ; (x \parallel y) \not\cong (u ; x) \parallel (v ; y)$.
\end{proof}

The following corollary is directly Lemma 6.8~in~\cite{HMSW2011}:

\begin{corollary}
\label{corollary:exchange-law}
For all $x,y,z \in \sP$, the following inequalities hold:
\begin{itemize}
\item $(x \parallel y); z\sqsubseteq x \parallel (y ;  z)$,
\item $x ;  (y \parallel z) \sqsubseteq (x ;  y) \parallel z$.
\end{itemize}
\end{corollary}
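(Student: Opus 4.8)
The plan is to obtain both inequalities as specialisations of the exchange law (Proposition~\ref{proposition:exchange-law}) in which one of the four arguments is taken to be the empty partial string $\bot$, and then to simplify the resulting expressions using the identity law (Proposition~\ref{proposition:partial-string-identity}), the fact that $\cong$ implies $\sqsubseteq$ in both directions (Proposition~\ref{proposition:converse-sqsubseteq-partial-order}), monotonicity of $\parallel$ and $;$ (Proposition~\ref{proposition:partial-string-monotonicity}), and transitivity of $\sqsubseteq$ (Proposition~\ref{proposition:sqsubseteq-preorder}). No new monotonic bijective morphism needs to be constructed.

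For the first inequality $(x \parallel y); z \sqsubseteq x \parallel (y; z)$, I would instantiate Proposition~\ref{proposition:exchange-law} with $u := x$, $v := y$, and the remaining two arguments taken as $\bot$ and $z$, yielding $(x \parallel y); (\bot \parallel z) \sqsubseteq (x; \bot) \parallel (y; z)$. By Proposition~\ref{proposition:partial-string-identity} we have $\bot \parallel z \cong z$ and $x; \bot \cong x$; converting these isomorphisms into $\sqsubseteq$-steps via Proposition~\ref{proposition:converse-sqsubseteq-partial-order} and pushing them through the outer operators via Proposition~\ref{proposition:partial-string-monotonicity} gives the chain
\[
(x \parallel y); z \;\sqsubseteq\; (x \parallel y); (\bot \parallel z) \;\sqsubseteq\; (x; \bot) \parallel (y; z) \;\sqsubseteq\; x \parallel (y; z),
\]
and transitivity closes it. The second inequality $x; (y \parallel z) \sqsubseteq (x; y) \parallel z$ is handled symmetrically: instantiate Proposition~\ref{proposition:exchange-law} with $u := x$, $v := \bot$, and the remaining arguments $y$ and $z$, obtaining $(x \parallel \bot); (y \parallel z) \sqsubseteq (x; y) \parallel (\bot; z)$, then rewrite $x \parallel \bot \cong x$ and $\bot; z \cong z$ in the same way and chain.

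There is no real obstacle here; the corollary is a bookkeeping consequence of the exchange law. The only points that need attention are choosing the correct direction of each $\cong$-to-$\sqsubseteq$ conversion so the chain of $\sqsubseteq$'s actually composes — e.g. on the first inequality we need $z \sqsubseteq \bot \parallel z$ at the left end and $x; \bot \sqsubseteq x$ at the right end, both available since $\cong$ yields $\sqsubseteq$ in both directions — and invoking the appropriate half (left- versus right-argument) of Proposition~\ref{proposition:partial-string-monotonicity} at each step.
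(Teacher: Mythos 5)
Your proof is correct and is essentially the paper's own argument: both specialise Proposition~\ref{proposition:exchange-law} by setting one argument to $\bot$ and then eliminate it via Proposition~\ref{proposition:partial-string-identity}. The only difference is cosmetic --- the paper writes the end steps as isomorphisms $(x \parallel y); z \cong (x \parallel y); (\bot \parallel z)$ directly, whereas you spell out the conversion of $\cong$ into $\sqsubseteq$ and the use of monotonicity to push it through the outer operator, which is a slightly more careful justification of the same chain.
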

\begin{proof}
By Proposition~\ref{proposition:partial-string-identity}~and~\ref{proposition:exchange-law}, $(x \parallel y) ; z \cong (x \parallel y) ; (\bot \parallel z) \sqsubseteq (x; \bot) \parallel (y; z) \cong x \parallel (y; z)$. An analogous argument proves $x ;  y \parallel z \sqsubseteq x ;  (y \parallel z)$.
\end{proof}

It is not difficult to see that Corollary~\ref{corollary:exchange-law} implies Proposition~\ref{proposition:partial-string-operator-chain}. We have nevertheless given the direct proof of the latter because it served as an introductory exemplar for more complicated proofs about the refinement ordering between partial strings in terms of monotonic bijective morphisms.

This ends this section on partial strings. We next consider the algebraic properties of a family of partial strings.

\section{Programs}
\label{section:programs}

So far we have considered individual partial strings. This is about to change as we embark on the study of programs. In fact, programs are motivation to study \emph{sets} of partial strings as pioneered by Gischer~\cite{G1985} because concurrent programs emit two kinds of different nondeterminism that cannot be modelled by a single partial string alone. To see this, consider the simple program \texttt{if * then P else Q}. If the semantics of a program was a single partial string, then we need to find exactly one partial string that represents the fact \texttt{P} executes or \texttt{Q} executes, but never both. However, a single partial string is not expressive enough for this. Thus, we resort to sets, in fact \emph{downward-closed} sets with respect to $\sqsubseteq$ (Definition~\ref{def:partial-string-isomorphism}) as explained shortly.

\begin{definition}
\label{def:program}
Define a \defn{program} to be a downward-closed set of finite partial strings with respect to $\sqsubseteq$; equivalently $\cX \subseteq \sP_f$ is a program if $\downarrow_\sqsubseteq \cX = \cX$ where $\downarrow_\sqsubseteq \cX \deq \set{y \in \sP_f \alt \exists x \in \cX \colon y \sqsubseteq x}$. Denote with $\bbP$ the family of programs.
\end{definition}

The intuition is that each partial string in a program describes one of possibly many control flows including concurrently executing instructions. Semantically, the downward-closure corresponds to the set of all potential implementations \emph{per} run through the system, explaining the existential quantifier. In other words --- as with Gischer's subsumption order~\cite{G1986} --- the downward-closure over-approximates the behaviour of a concurrent system. And since we only consider systems that terminate, each partial string in this over-approximation is in fact finite.

The transition to powerset of partial strings induces more algebraic laws. Before we study these, we specifically designate the following two sets of finite partial strings:

\begin{definition}
\label{def:program-zero-one}
Define $0 \deq \emptyset$ and $1 \deq \set{\bot}$ where $\bot$ is the empty partial string.
\end{definition}

It is easy to see that both $0$ and $1$ are closed under the downward-closure with respect to $\sqsubseteq$, and therefore they are programs (Definition~\ref{def:program}).

\begin{proposition}
$\downarrow_\sqsubseteq 0 = 0$ and $\downarrow_\sqsubseteq 1 = 1$ in $\bbP$. So $0$ and $1$ are programs.
\end{proposition}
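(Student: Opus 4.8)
The plan is to verify directly from Definition~\ref{def:program-zero-one} and the definition of the downward-closure operator that each of $0$ and $\set{\bot}$ is a fixed point of $\downarrow_\sqsubseteq$. Both directions of each equality are needed, but one direction in each case is immediate. Since $\downarrow_\sqsubseteq \cX \deq \set{y \in \sP_f \alt \exists x \in \cX \colon y \sqsubseteq x}$ and $\sqsubseteq$ is reflexive on $\sP$ (Proposition~\ref{proposition:sqsubseteq-preorder}), every $\cX \subseteq \sP_f$ satisfies $\cX \subseteq\ \downarrow_\sqsubseteq \cX$; in particular this gives $0 \subseteq\ \downarrow_\sqsubseteq 0$ and $1 \subseteq\ \downarrow_\sqsubseteq 1$ for free. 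So the work is entirely in the reverse inclusions.

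For $0 = \emptyset$: the set $\set{y \in \sP_f \alt \exists x \in \emptyset \colon y \sqsubseteq x}$ is empty because the existential quantifier over the empty set is never satisfied, so $\downarrow_\sqsubseteq 0 \subseteq 0$, hence $\downarrow_\sqsubseteq 0 = 0$. For $1 = \set{\bot}$: suppose $y \in\ \downarrow_\sqsubseteq 1$, so there is $x \in \set{\bot}$ with $y \sqsubseteq x$, i.e. $y \sqsubseteq \bot$. By Definition~\ref{def:partial-string-isomorphism} there is a monotonic bijective morphism $f \colon \bot \to y$, i.e. a bijection $f \colon E_\bot \to E_y$. Since $E_\bot = \emptyset$, the domain of $f$ is empty, so its codomain $E_y$ must be empty as well, i.e. $y$ is the empty partial string. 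By Proposition~\ref{proposition:unique-empty-partial-string} the empty partial string is unique, so $y = \bot$, giving $\downarrow_\sqsubseteq 1 \subseteq \set{\bot} = 1$ and therefore $\downarrow_\sqsubseteq 1 = 1$. Finally, since $0, 1 \subseteq \sP_f$ and both are downward-closed, they are programs by Definition~\ref{def:program}.

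There is essentially no obstacle here: the only mildly non-routine point is the observation that a bijection from the empty set forces an empty codomain, which is what pins down the downward-closure of $1$; everything else is unwinding definitions. One could alternatively cite Definition~\ref{def:partial-string} directly (a partial string with $E_y = \emptyset$ is empty, hence equals $\bot$ by Proposition~\ref{proposition:unique-empty-partial-string}) to reach the same conclusion.
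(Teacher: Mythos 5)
Your proposal is correct and follows essentially the same route as the paper: the downward-closure of $\emptyset$ is trivially empty, and for $1$ you unwind $y \sqsubseteq \bot$ to a monotonic bijective morphism $f \colon \bot \to y$, conclude $E_y = \emptyset$, and invoke Proposition~\ref{proposition:unique-empty-partial-string} to get $y = \bot$. You simply spell out a couple of steps (reflexivity giving the easy inclusion, and the bijection-from-the-empty-set argument) that the paper leaves implicit.
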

\begin{proof}
Clearly $\downarrow_\sqsubseteq 0 = \downarrow_\sqsubseteq \emptyset = \emptyset = 0$. By Definition~\ref{def:program-zero-one}~and~\ref{def:program}, $1 = \set{y \in \sP_f \alt y \sqsubseteq \bot}$. Let $y \in \sP_f$ such that $y \sqsubseteq \bot$. By Definition~\ref{def:partial-string} of the empty partial string and Definition~\ref{def:partial-string-isomorphism}, there exists a bijective monotonic morphism $f \colon \bot \to y$. By Proposition~\ref{proposition:unique-empty-partial-string}, $y = \bot$. We conclude that $\downarrow_\sqsubseteq 1 = 1$.
\end{proof}

Since $\bbP$ is the family of of $\sqsubseteq$-downward-closed sets of finite partial strings, $\bbP$ clearly forms a complete lattice ordered by subset inclusion where meet, denoted by $\cap$, corresponds to set intersection and join, denoted by $\cup$, is set union with the empty set as bottom, $0$, and the whole of $\sP_f$ as top, i.e. $\top \deq \sP_f$.

\begin{proposition}
\label{proposition:program-lattice}
$\tuple{\bbP, \subseteq, \cap, \cup, 0, \top}$ forms a complete lattice.
\end{proposition}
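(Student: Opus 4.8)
The plan is to check directly that $\bbP$, ordered by set inclusion, meets the definition of a complete lattice from Section~\ref{section:prelim}: that $\pair{\bbP}{\subseteq}$ is a partial order in which every subfamily has a least upper bound and a greatest lower bound. Since $\subseteq$ is reflexive, transitive and antisymmetric on any collection of sets, $\pair{\bbP}{\subseteq}$ is a partial order with no further work, so the real content is the existence of arbitrary suprema and infima, which I would produce as unions and intersections.

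First I would show $\bbP$ is closed under arbitrary unions and intersections. Let $S \subseteq \bbP$ be any subfamily. For the union, if $y \in \sP_f$ and $y \sqsubseteq x$ for some $x \in \bigcup S$, then $x \in \cX$ for some $\cX \in S$, and downward-closedness of $\cX$ gives $y \in \cX \subseteq \bigcup S$; so $\bigcup S$ is downward-closed, hence a program. For the intersection, if $y \sqsubseteq x$ with $x \in \bigcap S$, then $x \in \cX$ for every $\cX \in S$, so $y \in \cX$ for every such $\cX$, and thus $y \in \bigcap S$. I would fix the standard convention $\bigcap \emptyset = \sP_f$; this is a program because $\downarrow_\sqsubseteq \sP_f = \sP_f$, using that any $y \sqsubseteq x$ with $x$ finite is itself finite (a monotonic bijective morphism $x \to y$ puts $E_y$ in bijection with the finite set $E_x$). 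The same observation shows that the downward closures above never leave $\sP_f$, so these operations stay inside the intended universe.

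Next I would identify the closures with the lattice operations. For $S \subseteq \bbP$, the program $\bigcup S$ is an upper bound of $S$ in $\pair{\bbP}{\subseteq}$, and every program containing all members of $S$ must contain $\bigcup S$, so $\bigvee S = \bigcup S$; dually $\bigwedge S = \bigcap S$ is the greatest lower bound, with $\bigvee \emptyset = \emptyset = 0$ and $\bigwedge \emptyset = \sP_f = \top$. Specializing to two-element families yields the binary meet $\cX \cap \cY$ and join $\cX \cup \cY$, and $0 = \emptyset$ is the least element while $\top = \sP_f$ is the greatest, so $\tuple{\bbP, \subseteq, \cap, \cup, 0, \top}$ is a complete lattice.

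I do not expect a genuine obstacle here; the only points that need a moment's care are the bookkeeping that downward-closure together with union and intersection all keep us within the finite partial strings $\sP_f$, and the convention fixing $\bigcap \emptyset$ as $\top = \sP_f$ so that even the empty subfamily has an infimum.
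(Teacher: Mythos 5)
Your proposal is correct and fills in exactly the argument the paper leaves implicit: the paper states this proposition without a proof, merely remarking beforehand that $\bbP$ ``clearly'' forms a complete lattice under inclusion with $\cap$, $\cup$, $0 = \emptyset$ and $\top = \sP_f$, and your verification that downward-closed subsets of $\sP_f$ are closed under arbitrary unions and intersections (with the convention $\bigcap \emptyset = \sP_f$) is the standard way to justify that claim. No gaps.
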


In the theory of denotational semantics, our complete lattice of programs is called a \emph{Hoare powerdomain} where the ordering $\cP \subseteq \cQ$ for programs $\cP$ and $\cQ$ says that $\cP$ is more deterministic than $\cQ$, or that $\cP$ \emph{refines} $\cQ$. Semantically, the operator $\cP \cup \cQ$ could therefore be seen as the nondeterministic choice of either $\cP$ or $\cQ$. It follows that the equality of programs $\cP$ and $\cQ$ is the same as their subset inclusion both ways, i.e. $\cP = \cQ$ is equivalent to $\cP \subseteq \cQ$ and $\cQ \subseteq \cP$. This means the following holds:

\begin{proposition}
\label{proposition:powerdomain}
For all $\cX, \cY \in \bbP$, $\cX \subseteq \cY$ exactly if $\forall x \in \cX \colon \exists y \in \cY \colon x \sqsubseteq y$.
\end{proposition}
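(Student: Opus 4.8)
The plan is to prove the two directions of the biconditional separately, drawing only on reflexivity of $\sqsubseteq$ (Proposition~\ref{proposition:sqsubseteq-preorder}) and on the fact that every program is downward-closed with respect to $\sqsubseteq$ (Definition~\ref{def:program}), i.e. $\downarrow_\sqsubseteq \cY = \cY$.

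For the forward direction I would assume $\cX \subseteq \cY$ and fix an arbitrary $x \in \cX$. Then $x \in \cY$ by the inclusion, and since $\pair{\sP}{\sqsubseteq}$ is a preorder we have $x \sqsubseteq x$; hence $y \deq x$ witnesses the existential, which establishes $\forall x \in \cX \colon \exists y \in \cY \colon x \sqsubseteq y$.

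For the reverse direction I would assume $\forall x \in \cX \colon \exists y \in \cY \colon x \sqsubseteq y$ and fix $x \in \cX$. Since $\cX \subseteq \sP_f$, the element $x$ is a finite partial string, and by hypothesis there is some $y \in \cY$ with $x \sqsubseteq y$. By the definition of $\downarrow_\sqsubseteq$ this places $x$ in $\downarrow_\sqsubseteq \cY$, and because $\cY$ is a program, $\downarrow_\sqsubseteq \cY = \cY$, so $x \in \cY$. As $x$ was arbitrary, $\cX \subseteq \cY$, completing the equivalence.

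I do not expect any genuine obstacle: the statement is essentially the unfolding of the definition of a downward-closed set combined with reflexivity of the refinement preorder. The only point requiring a little care is to keep the ambient set $\sP_f$ explicit, since $\downarrow_\sqsubseteq$ is defined relative to $\sP_f$, so one should note that members of $\cX$ are indeed finite partial strings before concluding membership in $\downarrow_\sqsubseteq \cY$.
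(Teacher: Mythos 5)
Your proof is correct and follows essentially the same route as the paper's: the forward direction uses reflexivity of $\sqsubseteq$ with $x$ as its own witness, and the reverse direction uses downward-closedness of the program $\cY$. The extra remark about the ambient set $\sP_f$ is a harmless clarification.
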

\begin{proof}
Assume $\cX \subseteq \cY$. Let $x \in \cX$. By assumption, $x \in \cY$. By reflexivity of $\sqsubseteq$ (Proposition~\ref{proposition:sqsubseteq-preorder}), $x \sqsubseteq x$. Thus, $\forall x \in \cX \colon \exists y \in \cY \colon x \sqsubseteq y$.

Conversely, assume $\forall x \in \cX \colon \exists y \in \cY \colon x \sqsubseteq y$. Let $x \in \cX$. By assumption, there exists $y \in \cY$ such that $x \sqsubseteq y$. Since programs are downward-closed sets with respect to $\sqsubseteq$ (Definition~\ref{def:program}), $x \in \cY$. Thus, $\cX \subseteq \cY$.
\end{proof}

Next we lift concurrent and sequential composition of partial strings (Definition~\ref{def:bowtie}) to programs as follows:

\begin{definition}
\label{def:program-operator}
For every program $\cX$ and $\cY$, and partial string operator $\Join$, define $\cX \Join \cY \deq\ \downarrow_\sqsubseteq \set{x \Join y \alt x \in \cX \text{ and } y \in \cY}$ where $\cX \parallel \cY$ and $\cX ; \cY$ are called \defn{concurrent} and \defn{sequential program composition}, respectively.
\end{definition}

The meaning of program composition operators derive from their counterpart in the partial string model from Section~\ref{section:partial-strings}. As in the case of partial string operators, we use the ``bow tie'' operator, denoted by $\Join$, as placeholder for either concurrent or sequential program composition, cf.~Definition~\ref{def:bowtie}.

\begin{proposition}[Annihilator]
\label{proposition:program-operator-zero}
For every program $\cX \in \bbP$, $\cX \Join 0 = 0 \Join \cX = 0$.
\end{proposition}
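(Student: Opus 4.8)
The plan is simply to unfold Definition~\ref{def:program-operator} and observe that an empty index set yields an empty comprehension. Concretely, I would argue as follows. Fix a program $\cX \in \bbP$ and let $\Join$ stand for either $\parallel$ or $;$ (cf. Definition~\ref{def:bowtie}). By Definition~\ref{def:program-zero-one}, $0 = \emptyset$, so by Definition~\ref{def:program-operator},
\[
\cX \Join 0 \;=\; \downarrow_\sqsubseteq \set{x \Join y \alt x \in \cX \text{ and } y \in \emptyset}.
\]
Since there is no $y \in \emptyset$, the comprehension $\set{x \Join y \alt x \in \cX \text{ and } y \in \emptyset}$ has no members, i.e. it equals $\emptyset$.

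Next I would invoke the fact (used already in the proof that $\downarrow_\sqsubseteq 0 = 0$) that $\downarrow_\sqsubseteq \emptyset = \emptyset$: there is no $x \in \sP_f$ with $x \sqsubseteq z$ for some $z \in \emptyset$. Hence $\cX \Join 0 = \downarrow_\sqsubseteq \emptyset = \emptyset = 0$. The argument for $0 \Join \cX$ is identical, with the empty index set on the left of $\Join$ instead of the right; alternatively, for the concurrent case one could appeal to $\parallel$-commutativity (Proposition~\ref{proposition:partial-string-concurrent-commutativity}) lifted to programs, but the direct unfolding is shorter and uniform in $\Join$. Combining the two gives $\cX \Join 0 = 0 \Join \cX = 0$.

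There is no real obstacle here: the statement is an immediate consequence of the shape of Definition~\ref{def:program-operator} together with $0 = \emptyset$. The only point worth stating explicitly is that the downward closure operator $\downarrow_\sqsubseteq$ sends $\emptyset$ to $\emptyset$, which is why the result is $0$ rather than merely ``contained in $0$''. If desired, one can note in passing that this shows $0$ is a two-sided annihilator for both program composition operators, matching the corresponding Kleene-algebra axiom.
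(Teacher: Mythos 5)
Your proof is correct and is exactly the argument the paper compresses into ``Immediate from Definition~\ref{def:program-zero-one} and~\ref{def:program-operator}'': unfold $0=\emptyset$, note the comprehension over an empty index set is empty, and observe $\downarrow_\sqsubseteq \emptyset = \emptyset$. Nothing further is needed.
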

\begin{proof}
Immediate from Definition~\ref{def:program-zero-one}~and~\ref{def:program-operator}.
\end{proof}

Equivalently, by our ``bow tie'' convention, we get the following two statements: $\cX \parallel 0 = 0 \parallel \cX = 0$ and $\cX ; 0 = 0 ; \cX = 0$ for every program $\cX$.

\begin{proposition}[Identity]
\label{proposition:program-operator-one}
For every program $\cX \in \bbP$, $\cX \Join 1 = 1 \Join \cX = \cX$.
\end{proposition}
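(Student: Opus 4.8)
The plan is to establish the two set equalities $\cX \Join 1 = \cX$ and $1 \Join \cX = \cX$ by double inclusion, using only the partial-string identity law (Proposition~\ref{proposition:partial-string-identity}), the fact that isomorphic partial strings refine each other in both directions (Proposition~\ref{proposition:converse-sqsubseteq-partial-order}), transitivity of $\sqsubseteq$ (Proposition~\ref{proposition:sqsubseteq-preorder}), and the downward-closure of programs (Definition~\ref{def:program}).

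First I would unfold the definitions. Since $1 = \set{\bot}$ and, by Definition~\ref{def:program-operator}, $\cX \Join 1 =\ \downarrow_\sqsubseteq \set{x \Join \bot \alt x \in \cX}$, I note that each $x \Join \bot$ is finite whenever $x$ is (its event set is $E_x + E_\bot$, hence finite), so this is a legitimate subset of $\sP_f$. For the inclusion $\cX \subseteq \cX \Join 1$: given $x \in \cX$, Proposition~\ref{proposition:partial-string-identity} gives $x \cong x \Join \bot$, so $x \sqsubseteq x \Join \bot$ by Proposition~\ref{proposition:converse-sqsubseteq-partial-order}; since $x \Join \bot$ belongs to $\set{x' \Join \bot \alt x' \in \cX}$, this witnesses $x \in\ \downarrow_\sqsubseteq \set{x' \Join \bot \alt x' \in \cX} = \cX \Join 1$. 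For the reverse inclusion $\cX \Join 1 \subseteq \cX$: given $z \in \cX \Join 1$, there is some $x \in \cX$ with $z \sqsubseteq x \Join \bot$; since $x \Join \bot \cong x$ gives $x \Join \bot \sqsubseteq x$ (again Proposition~\ref{proposition:converse-sqsubseteq-partial-order}), transitivity of $\sqsubseteq$ yields $z \sqsubseteq x$, and then downward-closure of $\cX$ (Definition~\ref{def:program}) gives $z \in \cX$. The two inclusions together give $\cX \Join 1 = \cX$.

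The argument for $1 \Join \cX = \cX$ is the mirror image, using $\bot \Join x \cong x$ from the same Proposition~\ref{proposition:partial-string-identity}. There is no real obstacle here; the only points requiring care are the orientation of the refinement relation (a monotonic bijective morphism witnessing $a \sqsubseteq b$ runs from $b$ to $a$, so one must invoke Proposition~\ref{proposition:converse-sqsubseteq-partial-order} in the appropriate direction) and remembering that the inclusion \emph{into} $\cX$ is exactly where downward-closure of programs is needed. Equivalently, and slightly more slickly, one could observe that $\cX \Join 1$ is a program by construction and then apply Proposition~\ref{proposition:powerdomain} to reduce each inclusion to an $\exists$-statement about $\sqsubseteq$, but the direct double-inclusion proof is already short.
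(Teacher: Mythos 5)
Your proof is correct and is essentially the paper's argument spelled out in full: the paper's proof of Proposition~\ref{proposition:program-operator-one} simply says ``Immediate from Proposition~\ref{proposition:partial-string-identity},'' and your double-inclusion argument via Proposition~\ref{proposition:converse-sqsubseteq-partial-order}, transitivity of $\sqsubseteq$, and downward-closure is exactly the elaboration that one-liner presupposes. The orientation of the morphisms and the use of downward-closure for the inclusion into $\cX$ are handled correctly.
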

\begin{proof}
Immediate from Proposition~\ref{proposition:partial-string-identity}.
\end{proof}

\begin{proposition}[Distributivity]
\label{proposition:program-composition-distribute-arbitrary-join}
For every program $\cX$ and family of programs $\set{\cY_n \alt n \in \nats}$, the following equalities holds:
\begin{align*}
\cX \Join \left(\bigcup_{n \geq 0} \cY_n\right) &= \bigcup_{n \geq 0}(\cX \Join \cY_n) \\
\left(\bigcup_{n \geq 0} \cY_n\right) \Join \cX &= \bigcup_{n \geq 0} (\cY_n \Join \cX)
\end{align*}
\end{proposition}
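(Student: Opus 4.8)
The plan is to reduce the claim to two elementary observations that between them dispatch both displayed equalities by the same computation: first, that the downward-closure operator $\downarrow_\sqsubseteq$ commutes with arbitrary set unions; and second, that forming the set $\set{x \Join y \mid x \in \cX,\ y \in \cY}$ is, for fixed $\cX$, distributive over unions in its second argument (and symmetrically in its first). Nothing about the internal structure of $\Join$ is needed beyond Definition~\ref{def:program-operator}, so the argument is uniform in whether $\Join$ is $\parallel$ or $;$.

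First I would record the \emph{downward-closure lemma}: for any family $\set{S_i \mid i \in I}$ of subsets of $\sP_f$ one has $\downarrow_\sqsubseteq \bigcup_{i} S_i = \bigcup_i \downarrow_\sqsubseteq S_i$. This is a one-line element chase: $y \in \downarrow_\sqsubseteq \bigcup_i S_i$ iff there is $z \in \bigcup_i S_i$ with $y \sqsubseteq z$, iff there is an index $i$ and some $z \in S_i$ with $y \sqsubseteq z$, iff $y \in \bigcup_i \downarrow_\sqsubseteq S_i$. Two side conditions should be checked here: the union $\bigcup_{n \geq 0} \cY_n$ is itself a program, being a union of $\sqsubseteq$-downward-closed sets, so Definition~\ref{def:program-operator} does apply to it; and each $x \Join y$ with $x,y$ finite is finite, since $E_x + E_y$ is a coproduct of finite sets, so all the sets in sight are subsets of $\sP_f$ and $\downarrow_\sqsubseteq$ is being used in the sense of Definition~\ref{def:program}.

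Then for the first equality I would set $S_n \deq \set{x \Join y \mid x \in \cX,\ y \in \cY_n}$ and verify the set identity $\set{x \Join y \mid x \in \cX,\ y \in \bigcup_{n \geq 0}\cY_n} = \bigcup_{n \geq 0} S_n$: an element of the left side has the form $x \Join y$ with $x \in \cX$ and $y \in \cY_n$ for \emph{some} $n$, hence lies in $S_n \subseteq \bigcup_n S_n$, and the reverse inclusion is immediate. Applying the downward-closure lemma then gives $\cX \Join \bigl(\bigcup_{n \geq 0} \cY_n\bigr) = \downarrow_\sqsubseteq \bigcup_{n \geq 0} S_n = \bigcup_{n \geq 0} \downarrow_\sqsubseteq S_n = \bigcup_{n \geq 0} (\cX \Join \cY_n)$. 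The second equality is proved identically, distributing over the union in the first coordinate of the product set instead (for $\parallel$ one could alternatively deduce it from the first via Proposition~\ref{proposition:partial-string-concurrent-commutativity}, but the direct argument handles $;$ as well).

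There is essentially no hard step: the proposition is a purely set-theoretic distributivity fact once the closure-commutes-with-union lemma is in hand. The only thing to be careful about — the closest thing to an obstacle — is the bookkeeping of keeping every intermediate set inside $\sP_f$ and confirming that $\bigcup_{n \geq 0}\cY_n$ is a program, so that each invocation of $\downarrow_\sqsubseteq$ and of Definition~\ref{def:program-operator} is legitimate.
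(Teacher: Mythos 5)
Your proposal is correct and follows essentially the same route as the paper: the paper's proof is a single chain of biconditionals on an arbitrary $z \in \sP_f$ that unwinds Definition~\ref{def:program-operator}, the definition of $\downarrow_\sqsubseteq$, and the definition of set union, which is exactly your two observations (closure commutes with unions, and the product set distributes over unions in either coordinate) composed into one element chase. The only difference is that you explicitly verify the side conditions (that $\bigcup_{n}\cY_n$ is a program and that $x \Join y$ stays in $\sP_f$), which the paper leaves implicit.
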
 
\begin{proof}
We prove the claim by showing set inclusion both ways. Let $z \in \sP_f$. By expanding definitions, we get the following equivalences:
\begin{align*}
&z \in \cX \Join \left(\bigcup_{n \geq 0} \cY_n\right)\\
&\bimp z \in\ \downarrow_\sqsubseteq \set{x \Join y \alt x \in \cX \land y \in \bigcup_{n \geq 0} \cY_n} \ptag{Definition~\ref{def:program-operator}} \\
&\bimp \exists x, y \in \sP_f \colon x \in \cX \land y \in \bigcup_{n \geq 0} \cY_n \land z \sqsubseteq x \Join y \ptag{Definition of $\downarrow_\sqsubseteq$} \\
&\bimp \exists n \geq 0 \colon \exists x, y \in \sP_f \colon x \in \cX \land y \in \cY_n \land z \sqsubseteq x \Join y \ptag{Definition of set union} \\
&\bimp \exists n \geq 0 \colon z \in\ \downarrow_\sqsubseteq \set{x \Join y \alt x \in \cX \land y \in \cY_n} \ptag{Definition of $\downarrow_\sqsubseteq$} \\
&\bimp \exists n \geq 0 \colon z \in \cX \Join \cY_n \ptag{Definition~\ref{def:program-operator}} \\
&\bimp z \in \bigcup_{n \geq 0}(\cX \Join \cY_n) \ptag{Definition of set union}
\end{align*}
An analogous argument proves the second equation.
\end{proof}

As a corollary, it follows that the sequential and concurrent program composition operators are monotonic in both their arguments.

\begin{corollary}[Monotonicity]
\label{corollary:program-composition-monotonicity}
For every program $\cX$, $\cY$ and $\cZ$ in $\bbP$, if $\cX \subseteq \cY$, then $\cX \Join \cZ \subseteq \cY \Join \cZ$ and $\cZ \Join \cX \subseteq \cZ \Join \cY$.
\end{corollary}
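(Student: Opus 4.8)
The plan is to obtain the corollary as a formal consequence of Proposition~\ref{proposition:program-composition-distribute-arbitrary-join} (distributivity of $\Join$ over countable unions), exploiting the elementary fact that $\cX \subseteq \cY$ is equivalent to $\cX \cup \cY = \cY$. The only bookkeeping needed is to recast a binary union as an $\nats$-indexed family, since that is the shape in which distributivity is stated.

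Concretely, I would assume $\cX \subseteq \cY$, fix $\cZ \in \bbP$, and define the family $\set{\cY_n \alt n \in \nats}$ by $\cY_0 \deq \cX$ and $\cY_n \deq \cY$ for every $n \geq 1$. Each $\cY_n$ is a program, and $\bigcup_{n \geq 0}\cY_n = \cX \cup \cY = \cY$ by the hypothesis. Applying the second equation of Proposition~\ref{proposition:program-composition-distribute-arbitrary-join} then gives
\[
\cY \Join \cZ = \left(\bigcup_{n \geq 0}\cY_n\right) \Join \cZ = \bigcup_{n \geq 0}(\cY_n \Join \cZ) = (\cX \Join \cZ) \cup (\cY \Join \cZ),
\]
so $\cX \Join \cZ \subseteq \cY \Join \cZ$. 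Symmetrically, the first equation of the same proposition applied to $\cZ \Join \bigl(\bigcup_{n \geq 0}\cY_n\bigr)$ yields $\cZ \Join \cY = (\cZ \Join \cX) \cup (\cZ \Join \cY)$, hence $\cZ \Join \cX \subseteq \cZ \Join \cY$. By the bow-tie convention (Definition~\ref{def:bowtie}) this settles all four required inclusions at once, for both $\parallel$ and $;$.

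I do not expect any genuine obstacle: the statement is essentially immediate once distributivity is in hand, and the padding of a two-term union into a family indexed by $\nats$ is the sole technical wrinkle. Should one wish to avoid even that, an equally short alternative is a direct argument via Proposition~\ref{proposition:powerdomain}: given $z \in \cX \Join \cZ$, unfolding Definition~\ref{def:program-operator} produces $x \in \cX$ and $w \in \cZ$ with $z \sqsubseteq x \Join w$; since $\cX \subseteq \cY$ we get $x \in \cY$, so $z$ belongs to the downward closure defining $\cY \Join \cZ$, and likewise on the other side. I would nonetheless present the distributivity-based proof as the main one, to match the ``As a corollary'' framing in the surrounding text.
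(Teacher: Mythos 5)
Your proof is correct and follows essentially the same route as the paper's: both rewrite the hypothesis as $\cX \cup \cY = \cY$ and then invoke the distributivity of $\Join$ over unions (Proposition~\ref{proposition:program-composition-distribute-arbitrary-join}) to extract the inclusion. Your handling of the $\nats$-indexed family is slightly more pedantic than the paper's (which applies distributivity to the binary union directly), but the substance is identical.
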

\begin{proof}
Assume $\cX \subseteq \cY$. Equivalently, $\cX \cup \cY = \cY$. Thus, by Proposition~\ref{proposition:program-composition-distribute-arbitrary-join}, $(\cZ \Join \cX) \cup (\cZ \Join \cY) \subseteq \cZ \Join (\cX \cup \cY) = \cZ \Join \cY$. Hence, $\cZ \Join \cX \subseteq \cZ \Join \cY$, proving that concurrent and sequential composition are monotonic in their first argument. Similarly, for the second argument.
\end{proof}

By the Knaster-Tarski fixed point theorem and the fact that $\bbP$ is a complete lattice, it follows that `iterative' concurrent and sequential composition of programs have a (necessarily unique) least fixed point solution. We denote this fixed point by two forms of Kleene star operators.

\begin{proposition}
\label{proposition:program-least-fixed-point}
Let $\cP$ be a program and $F_\cP \colon \bbP \to \bbP$ such that, for all $\cX \in \bbP$, $F_\cP(\cX) = 1 \cup (\cP \Join \cX)$. Then $F_\cP$ has a least fixed point that we denote by $\cP^{\Join}$.
\end{proposition}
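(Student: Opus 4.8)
The plan is to invoke the Knaster--Tarski fixed point theorem, whose hypotheses are exactly that $\bbP$ is a complete lattice and that $F_\cP$ is monotonic. The first fact is Proposition~\ref{proposition:program-lattice}. So the real work is to verify monotonicity of $F_\cP$, and then to observe that Knaster--Tarski delivers not just a fixed point but the \emph{least} one, which justifies the notation $\cP^{\Join}$.

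First I would show $F_\cP$ is well-defined as a map $\bbP \to \bbP$: given a program $\cX$, the set $\cP \Join \cX$ is a program by Definition~\ref{def:program-operator} (it is explicitly a $\downarrow_\sqsubseteq$-closure), and $1$ is a program by the preceding proposition; the union of two downward-closed subsets of $\sP_f$ is again downward-closed, so $F_\cP(\cX) \in \bbP$. Next, monotonicity: assume $\cX \subseteq \cY$ in $\bbP$. By Corollary~\ref{corollary:program-composition-monotonicity}, $\cP \Join \cX \subseteq \cP \Join \cY$. Taking union with $1$ on both sides preserves the inclusion, so $F_\cP(\cX) = 1 \cup (\cP \Join \cX) \subseteq 1 \cup (\cP \Join \cY) = F_\cP(\cY)$. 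Hence $F_\cP$ is monotonic.

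Now I would apply Knaster--Tarski: a monotonic endofunction on a complete lattice has a least fixed point, namely $\bigcap \set{\cX \in \bbP \alt F_\cP(\cX) \subseteq \cX}$ (the meet of all pre-fixed points, which in $\bbP$ is set intersection). Call this program $\cP^{\Join}$. It satisfies $F_\cP(\cP^{\Join}) = \cP^{\Join}$, i.e. $\cP^{\Join} = 1 \cup (\cP \Join \cP^{\Join})$, and it is contained in every other fixed point; by antisymmetry of $\subseteq$ it is the unique least fixed point. Since $\Join$ is a placeholder for either $\parallel$ or $;$ (Definition~\ref{def:bowtie}), this simultaneously defines $\cP^{\parallel}$ and $\cP^{;}$.

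I do not expect a serious obstacle here: everything reduces to checking the hypotheses of a standard theorem, and the monotonicity of $F_\cP$ follows immediately from the already-established monotonicity of program composition (Corollary~\ref{corollary:program-composition-monotonicity}). The only point requiring a modicum of care is confirming that $F_\cP$ really lands in $\bbP$ rather than merely in $\powerset{\sP_f}$ --- that is, that the downward-closure is respected --- but this is built into the definition of $\Join$ on programs and the fact that a union of downward-closed sets is downward-closed.
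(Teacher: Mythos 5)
Your proposal is correct and follows exactly the same route as the paper, whose proof is a one-line appeal to Proposition~\ref{proposition:program-lattice} (complete lattice), Corollary~\ref{corollary:program-composition-monotonicity} (monotonicity), and the Knaster--Tarski fixed point theorem. Your additional checks that $F_\cP$ is well-defined and that the least fixed point is the intersection of pre-fixed points are sound elaborations of the same argument.
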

\begin{proof}
The conclusion follows from Proposition~\ref{proposition:program-lattice}, Corollary~\ref{corollary:program-composition-monotonicity} and Knaster-Tarski fixed point theorem.
\end{proof}

Therefore, given any program $\cP$ in $\bbP$ according to Definition~\ref{def:program}, we have that $1 \cup (\cP; \cP^;) = \cP^;$. In addition, for every program $\cQ$ in $\bbP$, if $1 \cup (\cP;\cQ) = \cQ$, then $\cP^; \subseteq \cQ$, and similarly for $\parallel$. By the Kleene fixed point theorem, $\cP^\parallel$ and $\cP^;$ for a program $\cP$ can be computed as follows:

\begin{proposition}
\label{proposition:compute-program-least-fixed-point}
Let $\cP$ be a program in $\bbP$. Let $F_\cP \colon \bbP \to \bbP$ be a function such that, for all programs $\cX \in \bbP$, $F_\cP(\cX) = 1 \cup (\cP \Join \cX)$. Then $\cP^{\Join} = \bigcup_{n \geq 1} F^n_\cP(0)$ where $0 \in \bbP$ and $F^1_\cP \deq F$ and $F^{j+1}_\cP \deq F_\cP \circ F^j_\cP$ for all $j \in \nats$.
\end{proposition}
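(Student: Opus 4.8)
The plan is to invoke the Kleene fixed point theorem, for which two ingredients are required: that $\bbP$ is a pointed dcpo (in fact a complete lattice, already established) and that $F_\cP$ is Scott-continuous, i.e.\ preserves suprema of directed sets. Since the least fixed point is already known to exist and to equal $\cP^{\Join}$ by Proposition~\ref{proposition:program-least-fixed-point}, the only new work is to verify continuity of $F_\cP$ and then apply the standard formula $\bigsqcup_{n \geq 1} F_\cP^n(\bot)$ with $\bot = 0$.

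First I would observe that $F_\cP$ is monotonic: this is immediate from Corollary~\ref{corollary:program-composition-monotonicity} together with the fact that $\cX \mapsto 1 \cup \cX$ is monotonic. Next I would establish continuity. Let $\set{\cD_k \alt k \in K}$ be a directed family of programs; since $\bbP$ is ordered by subset inclusion, its supremum is $\bigcup_{k} \cD_k$. Then
\begin{align*}
F_\cP\left(\bigcup_{k} \cD_k\right)
&= 1 \cup \left(\cP \Join \bigcup_{k} \cD_k\right) \\
&= 1 \cup \bigcup_{k}(\cP \Join \cD_k) \\
&= \bigcup_{k}\left(1 \cup (\cP \Join \cD_k)\right) \\
&= \bigcup_{k} F_\cP(\cD_k),
\end{align*}
where the second equality is the distributivity law Proposition~\ref{proposition:program-composition-distribute-arbitrary-join}, and the third holds because the union is over a nonempty index set (directed sets are nonempty), so $1$ can be pushed inside. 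Thus $F_\cP$ is Scott-continuous.

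Then the Kleene fixed point theorem gives that the least fixed point of $F_\cP$ equals $\bigcup_{n \geq 0} F_\cP^n(0)$, and since $F_\cP^1(0) = 1 \cup (\cP \Join 0) = 1 \supseteq 0 = F_\cP^0(0)$ in the usual indexing, the chain is increasing from $n = 1$ onward, so the supremum over $n \geq 1$ coincides with the supremum over $n \geq 0$; combined with Proposition~\ref{proposition:program-least-fixed-point} this yields $\cP^{\Join} = \bigcup_{n \geq 1} F_\cP^n(0)$. The one subtlety worth care — the \emph{main obstacle}, such as it is — is the step that pushes $1$ through the union: this needs the directed (hence nonempty) index set, and one should note that the ascending chain $\set{F_\cP^n(0) \alt n \geq 1}$ used in the final formula is itself a directed set, so the continuity hypothesis applies to exactly the family we need. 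Everything else is a direct citation of the distributivity and monotonicity results already proved.
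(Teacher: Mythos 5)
Your proposal is correct and follows essentially the same route as the paper's own proof, which likewise derives continuity of $F_\cP$ from Proposition~\ref{proposition:program-composition-distribute-arbitrary-join} and then invokes the Kleene fixed point theorem. You simply spell out the details (pushing $1$ through the union, reconciling the $n \geq 0$ versus $n \geq 1$ indexing) that the paper leaves implicit.
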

\begin{proof}
By Proposition~\ref{proposition:program-composition-distribute-arbitrary-join}, $F_\cP$ is continuous. The conclusion follows from the Kleene fixed point theorem.
\end{proof}

The following result uses the transitivity of $\sqsubseteq$ (Proposition~\ref{proposition:sqsubseteq-preorder}) to make clear the connection between partial string and program operators. This is key to transfer our knowledge about partial strings to programs.

\begin{lemma}
\label{lemma:connection}
For all programs $\cU, \cV, \cX, \cY \in \bbP$ and pairs of binary operators $\Join_a$ and $\Join_b$ in $\set{; , \parallel}$, if $\forall x, y \in \sP_f \colon x \Join_a y \sqsubseteq x \Join_b y$, then $ \cX \Join_a \cY \subseteq \cX \Join_b \cY$; similarly, if $\forall u, v, x, y \in \sP_f \colon (u \Join_b v) \Join_a (x \Join_b y) \sqsubseteq (u \Join_a x) \Join_b (v \Join_a y)$ and $\Join_a$ is monotonic, then $(\cU \Join_b \cV) \Join_a (\cX \Join_b \cY) \subseteq (\cU \Join_a \cX) \Join_b (\cV \Join_a \cY)$.
\end{lemma}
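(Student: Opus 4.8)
The plan is to prove both implications by unfolding Definition~\ref{def:program-operator} and then applying Proposition~\ref{proposition:powerdomain} together with transitivity of $\sqsubseteq$ (Proposition~\ref{proposition:sqsubseteq-preorder}).

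For the first implication, assume $\forall x, y \in \sP_f \colon x \Join_a y \sqsubseteq x \Join_b y$. By Proposition~\ref{proposition:powerdomain}, it suffices to take an arbitrary $z \in \cX \Join_a \cY$ and exhibit some $w \in \cX \Join_b \cY$ with $z \sqsubseteq w$. By Definition~\ref{def:program-operator}, $z \in\ \downarrow_\sqsubseteq \set{x \Join_a y \alt x \in \cX \land y \in \cY}$, so there exist $x \in \cX$ and $y \in \cY$ with $z \sqsubseteq x \Join_a y$. The hypothesis gives $x \Join_a y \sqsubseteq x \Join_b y$, and $x \Join_b y$ lies in $\set{x \Join_b y \alt x \in \cX \land y \in \cY} \subseteq\ \downarrow_\sqsubseteq \set{\cdots} = \cX \Join_b \cY$. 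Transitivity of $\sqsubseteq$ then yields $z \sqsubseteq x \Join_b y$, completing this direction.

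For the second implication, assume $\forall u, v, x, y \in \sP_f \colon (u \Join_b v) \Join_a (x \Join_b y) \sqsubseteq (u \Join_a x) \Join_b (v \Join_a y)$ and that $\Join_a$ is monotonic (i.e.\ Proposition~\ref{proposition:partial-string-monotonicity} applies to $\Join_a$, and Corollary~\ref{corollary:program-composition-monotonicity} to the lifted operator). Again using Proposition~\ref{proposition:powerdomain}, pick $z \in (\cU \Join_b \cV) \Join_a (\cX \Join_b \cY)$. Unwinding Definition~\ref{def:program-operator} three times, there are partial strings $p \in \cU \Join_b \cV$ and $q \in \cX \Join_b \cY$ with $z \sqsubseteq p \Join_a q$, and in turn $u \in \cU$, $v \in \cV$ with $p \sqsubseteq u \Join_b v$, and $x \in \cX$, $y \in \cY$ with $q \sqsubseteq x \Join_b y$. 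Monotonicity of $\Join_a$ promotes $p \sqsubseteq u \Join_b v$ and $q \sqsubseteq x \Join_b y$ to $p \Join_a q \sqsubseteq (u \Join_b v) \Join_a (x \Join_b y)$, and the exchange-style hypothesis bounds the right-hand side by $(u \Join_a x) \Join_b (v \Join_a y)$, which is a member of $(\cU \Join_a \cX) \Join_b (\cV \Join_a \cY)$ by Definition~\ref{def:program-operator} (since $u \Join_a x \in \cU \Join_a \cX$ and $v \Join_a y \in \cV \Join_a \cY$). Two applications of transitivity of $\sqsubseteq$ then give $z \sqsubseteq (u \Join_a x) \Join_b (v \Join_a y)$, and Proposition~\ref{proposition:powerdomain} concludes.

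The only delicate point is bookkeeping: in the second part one must descend through the nested downward-closures in the correct order and be careful that the monotonicity step is applied to $\Join_a$ (the outer operator in the source term), which is exactly why the hypothesis explicitly requires $\Join_a$ to be monotonic. The chaining of $\sqsubseteq$-inequalities is otherwise routine given transitivity, and no new combinatorial construction of morphisms is needed — the work was already done at the partial-string level in Propositions~\ref{proposition:partial-string-operator-chain} and~\ref{proposition:exchange-law}.
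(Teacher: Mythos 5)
Your proof is correct and follows essentially the same route as the paper's: unfold Definition~\ref{def:program-operator}, apply the partial-string hypothesis, and close with transitivity of $\sqsubseteq$ and downward-closure (the paper invokes the closure property directly rather than routing through Proposition~\ref{proposition:powerdomain}, but these are interchangeable). Your second part simply spells out the details that the paper compresses into ``analogous except that it also uses the monotonicity of $\Join_a$ on partial strings,'' and does so correctly.
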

\begin{proof}
Let $z \in \sP_f$. Assume $z \in \cX \Join_a \cY$. By Definition~\ref{def:program-operator}, $\cX \Join_a \cY =\ \downarrow_\sqsubseteq \set{x \Join_a y \alt x \in \cX \land y \in \cY}$. So there exists $x \in \cX$ and $y \in \cY$ such that $z \sqsubseteq x \Join_a y$. By hypothesis and transitivity of $\sqsubseteq$ (Proposition~\ref{proposition:sqsubseteq-preorder}), there exists $x \in \cX$ and $y \in \cY$ such that $z \sqsubseteq x \Join_b y$. Thus $z \in \cX \Join_b \cY$ because $\cX \Join_b \cY =\ \downarrow_\sqsubseteq (\cX \Join_b \cY)$, i.e. $\cX \Join_b \cY$ is a program. Since $z$ is arbitrary, $\cX \Join_a \cY \subseteq \cX \Join_b \cY$.

The proof of the second implication is analogous except that it also uses the monotonicity of $\Join_a$ on partial strings.
\end{proof}

From the first implication of Lemma~\ref{lemma:connection} and Proposition~\ref{proposition:partial-string-operator-chain} follows that the chain of composition operators carries over to $\pair{\bbP}{\subseteq}$. The next proposition is the first of three frame laws~\cite{HvSMSVZOH2014}.

\begin{proposition}[Frame I]
\label{proposition:frame-i}
For all programs $\cX, \cY \in \bbP$,  $\cX ; \cY \subseteq \cX \parallel \cY$.
\end{proposition}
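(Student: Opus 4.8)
The plan is to derive this immediately from the machinery already in place, specifically the first implication of Lemma~\ref{lemma:connection} instantiated at $\Join_a = {;}$ and $\Join_b = {\parallel}$. That implication says: if $\forall x, y \in \sP_f \colon x ; y \sqsubseteq x \parallel y$, then $\cX ; \cY \subseteq \cX \parallel \cY$ for all programs $\cX, \cY \in \bbP$. So the entire task reduces to verifying the hypothesis, namely that strongly sequential composition refines concurrent composition at the level of (finite) partial strings.

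That hypothesis is exactly Proposition~\ref{proposition:partial-string-operator-chain}, which establishes $x ; y \sqsubseteq x \parallel y$ for \emph{all} $x, y \in \sP$, and in particular for all $x, y \in \sP_f$. Hence I would simply invoke Proposition~\ref{proposition:partial-string-operator-chain} to discharge the premise and then appeal to Lemma~\ref{lemma:connection} to conclude $\cX ; \cY \subseteq \cX \parallel \cY$. Unwinding what Lemma~\ref{lemma:connection} does under the hood: given $z \in \cX ; \cY$, by Definition~\ref{def:program-operator} there are $x \in \cX$, $y \in \cY$ with $z \sqsubseteq x ; y$; since $x ; y \sqsubseteq x \parallel y$, transitivity of $\sqsubseteq$ (Proposition~\ref{proposition:sqsubseteq-preorder}) gives $z \sqsubseteq x \parallel y$, and because $\cX \parallel \cY$ is downward closed, $z \in \cX \parallel \cY$.

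There is essentially no obstacle here: the proposition is a direct corollary, and the only thing to be mildly careful about is the bookkeeping that the witnesses $x, y$ in Definition~\ref{def:program-operator} are finite partial strings (which they are, since programs are subsets of $\sP_f$), so that Proposition~\ref{proposition:partial-string-operator-chain} — stated for all of $\sP$ — applies without fuss. I would therefore keep the proof to one or two sentences, citing Lemma~\ref{lemma:connection} and Proposition~\ref{proposition:partial-string-operator-chain}, rather than re-deriving the downward-closure argument that Lemma~\ref{lemma:connection} already packages.
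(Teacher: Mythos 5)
Your proof is correct and matches the paper's own proof exactly: the paper also derives Frame I by discharging the hypothesis of the first implication of Lemma~\ref{lemma:connection} via Proposition~\ref{proposition:partial-string-operator-chain}. Your unwinding of the downward-closure argument and the remark about restricting to $\sP_f$ are accurate but not needed beyond the two citations.
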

\begin{proof}
The conclusion follows from Proposition~\ref{proposition:partial-string-operator-chain} and Lemma~\ref{lemma:connection}.
\end{proof}

Noteworthy, the second implication of Lemma~\ref{lemma:connection}, in turn, has as consequence that the exchange law for partial strings (Proposition~\ref{proposition:exchange-law}) generalizes to program composition operators.

\begin{proposition}
\label{proposition:program-exchange-law}
For all $\cU, \cV, \cX, \cY \in \bbP$, $(\cU \parallel \cV) ; (\cX \parallel \cY) \subseteq (\cU ; \cX) \parallel (\cV ; \cY)$.
\end{proposition}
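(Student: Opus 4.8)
The plan is to obtain this statement as an immediate instance of the second implication of Lemma~\ref{lemma:connection}, so that no fresh morphism needs to be constructed from scratch. Concretely, I would instantiate the schematic operators $\Join_a$ and $\Join_b$ of that lemma by taking $\Join_a$ to be strongly sequential composition $;$ and $\Join_b$ to be concurrent composition $\parallel$. Under this substitution the conclusion of Lemma~\ref{lemma:connection} reads exactly $(\cU \parallel \cV) ; (\cX \parallel \cY) \subseteq (\cU ; \cX) \parallel (\cV ; \cY)$, which is the statement to be proved.

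It then remains only to discharge the two hypotheses that the lemma requires for this instantiation. First, the partial-string inequality $\forall u, v, x, y \in \sP_f \colon (u \parallel v) ; (x \parallel y) \sqsubseteq (u ; x) \parallel (v ; y)$ is simply the restriction to finite partial strings of the exchange law established in Proposition~\ref{proposition:exchange-law}, which was proved for \emph{all} partial strings; since $\sP_f \subseteq \sP$, the restricted version holds a fortiori. Second, Lemma~\ref{lemma:connection} asks that $\Join_a$ --- here $;$ --- be monotonic, and this is precisely the $;$-case of Proposition~\ref{proposition:partial-string-monotonicity}. Feeding these two facts into Lemma~\ref{lemma:connection} yields the claim.

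Since both ingredients are already in hand, I do not anticipate a genuine obstacle; the only points that warrant a moment's care are (i) checking that the shape of the exchange inequality matches the template $(u \Join_b v) \Join_a (x \Join_b y) \sqsubseteq (u \Join_a x) \Join_b (v \Join_a y)$ of the lemma verbatim under $\Join_a \mapsto {;}$, $\Join_b \mapsto {\parallel}$ --- it does --- and (ii) noting, implicitly, that finite partial strings are closed under both $;$ and $\parallel$ (the coproduct of two finite event sets is finite), so that all the intermediate terms appearing inside the argument of Lemma~\ref{lemma:connection} genuinely live in $\sP_f$ and hence in the relevant programs. As an alternative route I could inline the proof of Lemma~\ref{lemma:connection}: unfold Definition~\ref{def:program-operator} for all composed programs, pick witnesses $u \in \cU$, $v \in \cV$, $x \in \cX$, $y \in \cY$ for a given $z$ on the left-hand side, apply $;$-monotonicity followed by Proposition~\ref{proposition:exchange-law} to these witnesses, and close up under $\downarrow_\sqsubseteq$ using transitivity of $\sqsubseteq$ (Proposition~\ref{proposition:sqsubseteq-preorder}); but this just re-derives the lemma, so routing through Lemma~\ref{lemma:connection} is the cleaner option.
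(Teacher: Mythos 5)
Your proposal is correct and follows exactly the paper's own route: the paper also derives this proposition by applying the second implication of Lemma~\ref{lemma:connection} with $\Join_a = {;}$ and $\Join_b = {\parallel}$, citing Proposition~\ref{proposition:exchange-law} for the partial-string hypothesis. Your additional checks (monotonicity of $;$ via Proposition~\ref{proposition:partial-string-monotonicity}, and closure of $\sP_f$ under the compositions) are sound and merely make explicit what the paper leaves implicit.
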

\begin{proof}
By Proposition~\ref{proposition:exchange-law} and the second implication of Lemma~\ref{lemma:connection}.
\end{proof}

\begin{lemma}
\label{lemma:sqsubseteq-inherit-properties}
Let $x, y, z, p \in \sP_f$. If $x \Join y \cong y \Join x$, then $p \sqsubseteq x \Join y$ is equivalent to $p \sqsubseteq y \Join x$. Similarly, if $x \Join (y \Join z) \cong (x \Join y) \Join z$, then $p \sqsubseteq x \Join (y \Join z)$ is equivalent to $p \sqsubseteq (x \Join y) \Join z$.
\end{lemma}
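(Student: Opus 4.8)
The plan is to isolate a single general fact and then instantiate it twice. The fact is: \emph{the relation $\sqsubseteq$ is invariant under $\cong$ in its right-hand argument}, i.e. for any $p, q, q' \in \sP$, if $q \cong q'$ then $p \sqsubseteq q$ iff $p \sqsubseteq q'$. Both parts of the lemma are immediate consequences of this, taking $q \deq x \Join y$, $q' \deq y \Join x$ for the first part (the hypothesis $x \Join y \cong y \Join x$ is exactly $q \cong q'$), and $q \deq x \Join (y \Join z)$, $q' \deq (x \Join y) \Join z$ for the second.

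To prove the general fact, first I would assume $q \cong q'$ and invoke Proposition~\ref{proposition:converse-sqsubseteq-partial-order} to obtain $q \sqsubseteq q'$ and $q' \sqsubseteq q$. Then, if $p \sqsubseteq q$, transitivity of $\sqsubseteq$ from Proposition~\ref{proposition:sqsubseteq-preorder} (using $p \sqsubseteq q$ and $q \sqsubseteq q'$) gives $p \sqsubseteq q'$; symmetrically, if $p \sqsubseteq q'$, then from $p \sqsubseteq q'$ and $q' \sqsubseteq q$ we get $p \sqsubseteq q$. This establishes the biconditional.

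Finally I would record the two instantiations explicitly: under the hypothesis $x \Join y \cong y \Join x$, the general fact with $q = x \Join y$ and $q' = y \Join x$ yields that $p \sqsubseteq x \Join y$ is equivalent to $p \sqsubseteq y \Join x$; and under the hypothesis $x \Join (y \Join z) \cong (x \Join y) \Join z$, the general fact with $q = x \Join (y \Join z)$ and $q' = (x \Join y) \Join z$ yields that $p \sqsubseteq x \Join (y \Join z)$ is equivalent to $p \sqsubseteq (x \Join y) \Join z$. There is essentially no obstacle here: the only subtlety worth flagging is simply getting the composition directions of the witnessing morphisms right when chaining through Propositions~\ref{proposition:converse-sqsubseteq-partial-order}~and~\ref{proposition:sqsubseteq-preorder}, and noting that finiteness of $x, y, z, p$ is not actually used — the statement is restricted to $\sP_f$ only because that is the setting in which it will be applied.
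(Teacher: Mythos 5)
Your proposal is correct and follows essentially the same route as the paper's proof: both derive $x \Join y \sqsubseteq y \Join x$ (and its converse) from the isomorphism via Proposition~\ref{proposition:converse-sqsubseteq-partial-order} and then chain with $p \sqsubseteq x \Join y$ by transitivity (Proposition~\ref{proposition:sqsubseteq-preorder}). Factoring the argument into a single general fact about $\cong$-invariance of $\sqsubseteq$ on the right, and your remark that finiteness is never used, are harmless (and accurate) repackagings of the same argument.
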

\begin{proof}
The first hypothesis is $x \Join y \cong y \Join x$. Assume $p \sqsubseteq x \Join y$. By hypothesis and Proposition~\ref{proposition:converse-sqsubseteq-partial-order}, $x \Join y \sqsubseteq y \Join x$. By transitivity of $\sqsubseteq$ (Proposition~\ref{proposition:sqsubseteq-preorder}) and assumption, $p \sqsubseteq y \Join x$, proving the forward implication. The backward implication is proved analogously, as well as the second equivalence.
\end{proof}

\begin{lemma}
\label{lemma:program-concurrent-sequential-property}
If $\forall x,y \in \sP_f \colon x \Join y \cong y \Join x$, then $\forall \cX, \cY \in \bbP \colon \cX \Join \cY = \cY \Join \cX$. Similarly, if $\forall x,y \in \sP_f \colon x \Join (y \Join z) \cong (x \Join y) \Join z$, then $\forall \cX, \cY, \cZ \in \bbP \colon \cX \Join (\cY \Join \cZ) = (\cX \Join \cY) \Join \cZ$.
\end{lemma}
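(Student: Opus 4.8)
The plan is to transfer each equality from the partial-string level to the program level by proving mutual set inclusion, using Lemma~\ref{lemma:sqsubseteq-inherit-properties} as the bridge: that lemma is exactly the ``$\sqsubseteq$-version'' of the two hypotheses, so it converts a hypothesis about $\cong$ into the statement we can actually chase through the downward closures. Throughout I would use that a program equals its own $\sqsubseteq$-downward closure (Definition~\ref{def:program}) and that, by Definition~\ref{def:program-operator}, membership $z \in \cX \Join \cY$ unfolds to ``$\exists x \in \cX,\, y \in \cY \colon z \sqsubseteq x \Join y$''.

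For the commutativity statement, I would assume $x \Join y \cong y \Join x$ for all $x,y \in \sP_f$, fix programs $\cX,\cY \in \bbP$, and show $\cX \Join \cY \subseteq \cY \Join \cX$. Taking $z \in \cX \Join \cY$, the unfolding above gives $x \in \cX$ and $y \in \cY$ with $z \sqsubseteq x \Join y$; the first part of Lemma~\ref{lemma:sqsubseteq-inherit-properties} then yields $z \sqsubseteq y \Join x$, and since $y \in \cY$, $x \in \cX$ this witnesses $z \in \cY \Join \cX$. The reverse inclusion is obtained by swapping the roles of $\cX$ and $\cY$, so $\cX \Join \cY = \cY \Join \cX$.

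For associativity I would run the same scheme but with one extra ingredient — monotonicity of $\Join$ on partial strings (Proposition~\ref{proposition:partial-string-monotonicity}) — to cope with the nested downward closures. Assume $x \Join (y \Join z) \cong (x \Join y) \Join z$ for all finite partial strings, fix $\cX,\cY,\cZ \in \bbP$, and take $w \in \cX \Join (\cY \Join \cZ)$. Unfolding Definition~\ref{def:program-operator} twice produces $x \in \cX$ and $v \in \cY \Join \cZ$ with $w \sqsubseteq x \Join v$, and then $y \in \cY$, $z \in \cZ$ with $v \sqsubseteq y \Join z$. Monotonicity gives $x \Join v \sqsubseteq x \Join (y \Join z)$, so by transitivity of $\sqsubseteq$ (Proposition~\ref{proposition:sqsubseteq-preorder}) we get $w \sqsubseteq x \Join (y \Join z)$; the second part of Lemma~\ref{lemma:sqsubseteq-inherit-properties} then gives $w \sqsubseteq (x \Join y) \Join z$. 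Since $x \Join y \in \cX \Join \cY$ (by reflexivity of $\sqsubseteq$ and downward-closure of the composition) and $z \in \cZ$, this witnesses $w \in (\cX \Join \cY) \Join \cZ$, and the converse inclusion is the mirror image.

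The step I expect to need the most care is exactly this bookkeeping around the nested $\downarrow_\sqsubseteq$ in the associativity case: one has to be sure that pushing the inner refinement $v \sqsubseteq y \Join z$ ``outward'' through $\Join$ is legitimate before one is allowed to apply the $\cong$-hypothesis, and that is precisely what Proposition~\ref{proposition:partial-string-monotonicity} supplies. The commutativity half, by contrast, needs no monotonicity since there is no nesting of composition on either side.
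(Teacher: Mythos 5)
Your proposal is correct and follows essentially the same route as the paper: unfold membership via Definition~\ref{def:program-operator}, use Lemma~\ref{lemma:sqsubseteq-inherit-properties} to transport the $\cong$-hypothesis to a $\sqsubseteq$-statement, and in the associativity case invoke monotonicity of $\Join$ on partial strings together with transitivity and reflexivity of $\sqsubseteq$ to handle the nested downward closures. Your closing remark about where the care is needed matches exactly the steps the paper annotates in its chain of implications.
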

\begin{proof}
Let $\cX, \cY \in \bbP$ be programs and $p \in \sP_f$ be a finite partial string. Assume $\forall x,y \in \sP_f \colon x \Join y \cong y \Join x$. We show $\cX \Join \cY \subseteq \cY \Join \cX$ and $\cX \Join \cY \supseteq \cY \Join \cX$ through the following equivalences:
\begin{align*}
&p \in \cX \Join \cY\\
&\bimp p \in\ \downarrow_\sqsubseteq \set{x \Join y \alt x \in \cX \land y \in \cY} \ptag{Definition~\ref{def:program-operator}} \\
&\bimp \exists x, y \in \sP_f \colon x \in \cX \land y \in \cY \land p \sqsubseteq x \Join y \ptag{Definition of $\downarrow_\sqsubseteq$} \\
&\bimp \exists x, y \in \sP_f \colon x \in \cX \land y \in \cY \land p \sqsubseteq y \Join x \ptag{Assumption, commutativity of $\Join$, Lemma~\ref{lemma:sqsubseteq-inherit-properties}} \\
&\bimp \exists x, y \in \sP_f \colon y \in \cY \land x \in \cX \land p \sqsubseteq y \Join x \ptag{Commutativity of conjunction} \\
&\bimp p \in\ \downarrow_\sqsubseteq \set{y \Join x \alt y \in \cY \land x \in \cX} \ptag{Definition of $\downarrow_\sqsubseteq$} \\
&\bimp p \in \cY \Join \cX \ptag{Definition~\ref{def:program-operator}}
\end{align*}

Similarly, for associativity, it suffices to prove subset inclusion both ways which we show in detail to draw attention to the various properties of $\sqsubseteq$ and $\Join$ that are used in the proof:
\begin{align*}
&p \in \cX \Join (\cY \Join \cZ)\\
&\bimp p \in\ \downarrow_\sqsubseteq \set{x \Join q \alt x \in \cX \land q \in (\cY \Join \cZ)} \ptag{Definition~\ref{def:program-operator}} \\
&\bimp \exists x, q \in \sP_f \colon x \in \cX \land q \in (\cY \Join \cZ) \land p \sqsubseteq x \Join q \ptag{Definition of $\downarrow_\sqsubseteq$} \\
&\bimp \exists x, q \in \sP_f \colon x \in \cX \land q \in\ \downarrow_\sqsubseteq \set{y \Join z \alt y \in \cY \land z \in \cZ} \land p \sqsubseteq x \Join q \ptag{Definition~\ref{def:program-operator}} \\
&\bimp \exists x, q \in \sP_f \colon x \in \cX \land (\exists y, z \in \sP_f \colon y \in \cY \land z \in \cZ \land q \sqsubseteq y \Join z) \land p \sqsubseteq x \Join q \ptag{Definition of $\downarrow_\sqsubseteq$} \\
&\bimp \exists x, y, z, q \in \sP_f \colon x \in \cX \land y \in \cY \land z \in \cZ \land q \sqsubseteq y \Join z \land p \sqsubseteq x \Join q \ptag{Definition of $\exists$} \\
&\imp \exists x, y, z, q \in \sP_f \colon x \in \cX \land y \in \cY \land z \in \cZ \land x \Join q \sqsubseteq x \Join (y \Join z) \land p \sqsubseteq x \Join q \ptag{Monotonicity of $\Join$} \\
&\imp \exists x, y, z \in \sP_f \colon x \in \cX \land y \in \cY \land z \in \cZ \land p \sqsubseteq x \Join (y \Join z) \ptag{transitivity of $\sqsubseteq$} \\
&\bimp \exists x, y, z \in \sP_f \colon x \in \cX \land y \in \cY \land z \in \cZ \land p \sqsubseteq (x \Join y) \Join z \ptag{Assumption, associativity of $\Join$, Lemma~\ref{lemma:sqsubseteq-inherit-properties}} \\
&\imp \exists x, y, z, q' \in \sP_f \colon x \in \cX \land y \in \cY \land z \in \cZ \land q' \sqsubseteq x \Join y \land p \sqsubseteq q' \Join z \ptag{Reflexivity and transitivity of $\sqsubseteq$, monotonicity of $\Join$} \\
&\bimp p \in (\cX \Join \cY) \Join \cZ \ptag{Definition~\ref{def:program-operator}}
\end{align*}
An analogous argument proves $\cX \Join (\cY \Join \cZ) \supseteq (\cX \Join \cY) \Join \cZ$, proving associativity of the $\Join$ operator on programs.
\end{proof}

We remark that the previous lemma has as antecedent that a partial string operator is commutative (or associative) only up to isomorphism. In contrast, the consequent of the lemma asserts that programs are in fact \emph{equal} when composed with the corresponding program composition operator. This equality is due to the fact that programs are downward-closed with respect to an ordering which disregards the identity of events.

\begin{definition}
A \defn{semigroup} is an algebraic structure consisting of a set together with an associative binary operation.
\end{definition}

\begin{proposition}
\label{proposition:program-semigroups}
$\tuple{\bbP, \parallel}$ is a commutative semigroup and $\tuple{\bbP, ;}$ is a semigroup.
\end{proposition}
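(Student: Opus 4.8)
The plan is to obtain both assertions as immediate corollaries of the lifting machinery already developed, in particular Lemma~\ref{lemma:program-concurrent-sequential-property}, by feeding it the corresponding partial-string-level facts. First I would record that $\bbP$ is closed under $\parallel$ and $;$ (Definition~\ref{def:program-operator} makes each of $\cX \parallel \cY$ and $\cX ; \cY$ a downward-closed subset of $\sP_f$, hence a program), so that $\parallel$ and $;$ are genuine binary operations on $\bbP$ and it only remains to check the algebraic laws.

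For associativity I would invoke Proposition~\ref{proposition:partial-string-associativity}, which states $(x \Join y) \Join z \cong x \Join (y \Join z)$ for all $x, y, z \in \sP_f$ and for $\Join$ either $\parallel$ or $;$. This is exactly the hypothesis of the associativity half of Lemma~\ref{lemma:program-concurrent-sequential-property}, so that lemma yields $\cX \Join (\cY \Join \cZ) = (\cX \Join \cY) \Join \cZ$ for all $\cX, \cY, \cZ \in \bbP$, again uniformly in the bow-tie. Reading this off for $\Join = {;}$ shows $\tuple{\bbP, ;}$ is a semigroup, and reading it off for $\Join = \parallel$ shows $\tuple{\bbP, \parallel}$ is a semigroup; the commutativity assertion for $\parallel$ then follows by applying the commutativity half of Lemma~\ref{lemma:program-concurrent-sequential-property} to Proposition~\ref{proposition:partial-string-concurrent-commutativity} (which gives $x \parallel y \cong y \parallel x$ for all finite partial strings), obtaining $\cX \parallel \cY = \cY \parallel \cX$ for all $\cX, \cY \in \bbP$. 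Note we do \emph{not} claim $\tuple{\bbP, ;}$ is commutative, consistent with sequential composition being genuinely non-symmetric.

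I do not expect any real obstacle: the proposition is essentially a bookkeeping consequence of the preceding lemmas, and the proof is a two-line citation. The only thing to be slightly careful about is that the hypotheses of Lemma~\ref{lemma:program-concurrent-sequential-property} are stated with the bow-tie placeholder and quantify over all of $\sP_f$; since Propositions~\ref{proposition:partial-string-concurrent-commutativity} and~\ref{proposition:partial-string-associativity} are phrased in precisely that generality, the instantiations go through without any extra work, and no appeal to antisymmetry of $\sqsubseteq$ on $\sP_f$ (or to finiteness beyond what those propositions already use) is required here.
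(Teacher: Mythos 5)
Your proposal is correct and matches the paper's own proof essentially verbatim: the paper likewise obtains associativity by applying Lemma~\ref{lemma:program-concurrent-sequential-property} to Proposition~\ref{proposition:partial-string-associativity}, and commutativity of $\parallel$ by the analogous application to Proposition~\ref{proposition:partial-string-concurrent-commutativity}. The extra remarks about closure of $\bbP$ under the operators and about the quantification over $\sP_f$ versus $\sP$ are harmless and accurate.
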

\begin{proof}
Let $\cX, \cY, \cZ \in \bbP$ be programs. By modus ponens, Lemma~\ref{lemma:program-concurrent-sequential-property}~and~Proposition~\ref{proposition:partial-string-associativity}, $\cX \Join (\cY \Join \cZ) = (\cX \Join \cY) \Join \cZ$. Similarly, $\cX \parallel \cY = \cY \parallel \cX$.
\end{proof}

The fact that $\parallel$ is a commutative program operator has the following weak principle of sequential consistency as consequence~\cite{HvSMSVZOH2014}:

\begin{proposition}
For all programs $\cX, \cY \in \bbP$,  $(\cX ; \cY) \cup (\cY ; \cX) \subseteq \cX \parallel \cY$.
\end{proposition}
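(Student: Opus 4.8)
The plan is to derive this immediately from the already-established program-level exchange law (Proposition~\ref{proposition:program-exchange-law}), the identity laws (Proposition~\ref{proposition:program-operator-one}), and the commutativity of $\parallel$ on programs (Proposition~\ref{proposition:program-semigroups}), exactly mirroring the pattern used in Corollary~\ref{corollary:exchange-law} at the partial-string level.

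First I would show $\cX ; \cY \subseteq \cX \parallel \cY$, which is just Proposition~\ref{proposition:frame-i}. Next I would show $\cY ; \cX \subseteq \cX \parallel \cY$: by Proposition~\ref{proposition:frame-i} we have $\cY ; \cX \subseteq \cY \parallel \cX$, and by commutativity of $\parallel$ on programs (Proposition~\ref{proposition:program-semigroups}) $\cY \parallel \cX = \cX \parallel \cY$, so $\cY ; \cX \subseteq \cX \parallel \cY$. Then, since $\bbP$ is a lattice ordered by $\subseteq$ with join $\cup$ (Proposition~\ref{proposition:program-lattice}), the least upper bound of the two sets $\cX ; \cY$ and $\cY ; \cX$ is below $\cX \parallel \cY$, i.e. $(\cX ; \cY) \cup (\cY ; \cX) \subseteq \cX \parallel \cY$.

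Alternatively, to stay closer in spirit to the derivation of Corollary~\ref{corollary:exchange-law}, I could obtain $\cX ; \cY \subseteq \cX \parallel \cY$ and $\cY ; \cX \subseteq \cX \parallel \cY$ directly from the exchange law by padding with $1$: for instance $\cY ; \cX \cong (1 \parallel \cY) ; (\cX \parallel 1) \subseteq (1 ; \cX) \parallel (\cY ; 1) \cong \cX \parallel \cY$ using Propositions~\ref{proposition:program-operator-one} and~\ref{proposition:program-exchange-law}, and similarly for $\cX ; \cY$ with the roles swapped and commutativity of $\parallel$ applied at the end. Either route works; I would present the shorter one using Frame~I directly.

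There is essentially no hard part here: the statement is a routine consequence of results already proved, and the only thing to be careful about is that the union on the left is genuinely the lattice join in $\bbP$ (so that ``both sides below'' implies ``join below''), which is exactly the content of Proposition~\ref{proposition:program-lattice}. The one-line proof I would write is: ``By Proposition~\ref{proposition:frame-i}, $\cX ; \cY \subseteq \cX \parallel \cY$. By Proposition~\ref{proposition:frame-i} again and the commutativity of $\parallel$ (Proposition~\ref{proposition:program-semigroups}), $\cY ; \cX \subseteq \cY \parallel \cX = \cX \parallel \cY$. Since $\cup$ is the join in the lattice $\bbP$ (Proposition~\ref{proposition:program-lattice}), $(\cX ; \cY) \cup (\cY ; \cX) \subseteq \cX \parallel \cY$.''
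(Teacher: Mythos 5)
Your proof is correct and is essentially identical to the paper's: the paper likewise combines Proposition~\ref{proposition:frame-i} (Frame~I) with the commutativity of $\parallel$ from Proposition~\ref{proposition:program-semigroups} to get both $\cX ; \cY \subseteq \cX \parallel \cY$ and $\cY ; \cX \subseteq \cX \parallel \cY$, then concludes by the definition of least upper bound. Your remark that $\cup$ really is the lattice join in $\bbP$ is a sensible point of care, and the alternative ``padding with $1$'' route you sketch is also valid but unnecessary here.
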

\begin{proof}
By Proposition~\ref{proposition:frame-i}~and~\ref{proposition:program-semigroups}, $(\cX ; \cY) \subseteq \cX \parallel \cY$ and $(\cY ; \cX) \subseteq \cX \parallel \cY$. Thus, by definition of least upper bound, $(\cX ; \cY) \cup (\cY ; \cX) \subseteq \cX \parallel \cY$. 
\end{proof}

The converse of the previous proposition does not generally hold, a good examplar of the fact that a set of partial strings is more expressive than a set of strings. This link to classical language theory is formalized as follows:

\begin{definition}
\label{def:language}
A \defn{string} is a finite partial string $s$ in $\sP_f$ such that $\preceq_s$ is a total order, i.e. for all $e, e' \in E_s$, $e \preceq_s e'$ or $e' \preceq_s e$. Let $\Gamma^\ast$ be the set of strings. For all programs $\cP$ in $\bbP$, define the \defn{language of $\cP$}, written $\mfrL_\cP$, to be the set of strings where each one refines at least one partial string in $\cP$; equivalently, $\mfrL_\cP \deq \set{s \in \Gamma^\ast \alt \exists p \in \cP \colon s \sqsubseteq p}$.
\end{definition}

In other words, $\mfrL_\cP$ can be seen as the linearizations of all the partial strings in a program according to the refinement ordering of Definition~\ref{def:partial-string-isomorphism}.

\begin{proposition}
\label{proposition:language}
For all programs $\cX, \cY \in \bbP$, if $\cX \subseteq \cY$, then $\mfrL_\cX \subseteq \mfrL_\cY$.
\end{proposition}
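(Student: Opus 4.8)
The plan is to unfold the definition of the language of a program (Definition~\ref{def:language}) and chase an arbitrary element. First I would fix programs $\cX, \cY \in \bbP$ and assume $\cX \subseteq \cY$. To show $\mfrL_\cX \subseteq \mfrL_\cY$, I would take an arbitrary string $s \in \mfrL_\cX$; by Definition~\ref{def:language} this means $s \in \Gamma^\ast$ and there exists $p \in \cX$ with $s \sqsubseteq p$.

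Next I would use the hypothesis $\cX \subseteq \cY$ to conclude $p \in \cY$, so that the same witness $p$ now shows $\exists p \in \cY \colon s \sqsubseteq p$. Since $s \in \Gamma^\ast$ as well, Definition~\ref{def:language} gives $s \in \mfrL_\cY$. As $s$ was arbitrary, this establishes $\mfrL_\cX \subseteq \mfrL_\cY$.

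There is essentially no obstacle here: the argument is a direct witness-passing using only the set-inclusion hypothesis and the existential form of the definition of $\mfrL_{(-)}$; in particular it does not even need transitivity or any structural property of $\sqsubseteq$. The only thing to be careful about is to keep the membership condition $s \in \Gamma^\ast$ alongside the existential, so that the conclusion genuinely lands back in $\mfrL_\cY$ rather than in some larger set. Compare Proposition~\ref{proposition:powerdomain}, which records the analogous characterization of $\subseteq$ on $\bbP$ itself; the present proof is the same pattern applied one level up, at the level of linearizations.
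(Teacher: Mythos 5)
Your proof is correct and is precisely the witness-passing argument the paper intends when it says the result is ``immediate from Definition~\ref{def:language}''; you have simply spelled out the details. Your observation that no structural property of $\sqsubseteq$ is needed is also accurate.
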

\begin{proof}
Immediate from Definition~\ref{def:language}.
\end{proof}

The converse of Proposition~\ref{proposition:language} does not generally hold. For example, $\mfrL_{\cX \parallel \cY} \subseteq \mfrL_{(\cX ; \cY) \cup (\cY ; \cX)}$ but $ \cX \parallel \cY \not\subseteq (\cX ; \cY) \cup (\cY ; \cX)$ for some programs $\cX$ and $\cY$. This shows that our notion of programs from Definition~\ref{def:program} strictly generalizes the concept of sets of strings, i.e. languages consisting of strings.

\begin{definition}
A \defn{monoid} is a semigroup with an identity element.
\end{definition}

\begin{proposition}
\label{proposition:progam-monoid}
$\tuple{\bbP, \parallel, 1}$ is a commutative monoid and $\tuple{\bbP, ;, 1}$ is a monoid.
\end{proposition}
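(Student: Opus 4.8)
The plan is to assemble this directly from two already-established facts, since a monoid is by definition just a semigroup equipped with a two-sided identity. First I would invoke Proposition~\ref{proposition:program-semigroups}, which gives that $\tuple{\bbP, \parallel}$ is a commutative semigroup and $\tuple{\bbP, ;}$ is a semigroup; that is, $\parallel$ and $;$ are associative on $\bbP$, with $\parallel$ additionally commutative. This already discharges the ``semigroup'' half of both claims.

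Next I would supply the identity element. By Proposition~\ref{proposition:program-operator-one}, for every program $\cX \in \bbP$ we have $\cX \Join 1 = 1 \Join \cX = \cX$, and by the ``bow tie'' convention of Definition~\ref{def:bowtie} this unfolds to $\cX \parallel 1 = 1 \parallel \cX = \cX$ and $\cX ; 1 = 1 ; \cX = \cX$. Hence $1$ is a two-sided identity for both program composition operators, and $1 \in \bbP$ since $1$ was already shown to be a program. Combining this with the semigroup structure from the previous paragraph yields, by the definition of monoid (and commutative monoid), that $\tuple{\bbP, \parallel, 1}$ is a commutative monoid and $\tuple{\bbP, ;, 1}$ is a monoid.

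I do not anticipate any real obstacle here: the proposition is a bookkeeping corollary that packages Proposition~\ref{proposition:program-semigroups} and Proposition~\ref{proposition:program-operator-one} into the named algebraic structures, so the ``proof'' is essentially a one-line citation of those two results together with the definitions of semigroup and monoid. The only thing to be mildly careful about is making explicit that all the heavy lifting --- in particular that associativity holds as genuine set equality on $\bbP$ rather than merely up to isomorphism --- has already been done in Lemma~\ref{lemma:program-concurrent-sequential-property} and Proposition~\ref{proposition:program-semigroups}, so nothing new needs to be verified.
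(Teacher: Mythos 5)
Your proposal is correct and matches the paper's own proof, which likewise combines Proposition~\ref{proposition:program-semigroups} (semigroup structure) with Proposition~\ref{proposition:program-operator-one} (identity element). No further comment is needed.
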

\begin{proof}
Let $\cX \in \bbP$ be a program. By Proposition~\ref{proposition:program-semigroups}, it remains to show that $1$ is the identity for the operator $\Join$ on programs, as shown in Proposition~\ref{proposition:program-operator-one}.
\end{proof}

Of particular interest are complete lattices under the natural order (formally, $x \leq y \deq x \vee y = y$ where $\vee$ denotes join) in which a binary operator distributes over arbitrary least upper bounds ($\bigvee$).

\begin{definition}
\label{def:quantale}
A \defn{quantale} is a complete lattice $Q$ equipped with a semigroup structure $\tuple{Q, \cdot}$ satisfying both complete distributive laws
\begin{align*}
x \cdot \left(\bigvee S\right) = \bigvee \set{y \cdot x \alt y \in S}\\
\left(\bigvee S\right) \cdot x = \bigvee \set{x \cdot y \alt y \in S}
\end{align*}
where $S \subseteq Q$ and $x \in Q$. If $Q$ is a quantale whose semigroup is also a monoid structure $\tuple{Q, \cdot, 1}$, then $Q$ is called a \defn{unital quantale}.
\end{definition}

Quantales appear in different guises in computer science. For example, the powerset of strings over alphabet $\Gamma$ is a unital quantale $\tuple{\powerset{\Gamma^\ast}, \subseteq, \cup, 0, 1, \cdot}$ where $\subseteq$ is the inclusion order on sets, $\cup$ is set union, $0 \deq \emptyset$ is the empty set, $1 \deq \set{\varepsilon}$ is the singleton set with the empty string $\varepsilon$, and the concatenation of two sets of strings, $A$ and $B$, is defined pair-wise, i.e. $A \cdot B \deq \set{a \cdot b \alt a \in A \land b \in B}$, whence $\tuple{\powerset{\Gamma^\ast}, 1, \cdot}$ forms a monoid. Based on the notion of quantales, we succinctly characterize the complete lattice of programs as follows:

\begin{proposition}
\label{proposition:quantale-program-algebra}
The algebraic structure $\tuple{\bbP, \subseteq, \cup, 0, 1, ;, \parallel}$ consists of two unital quantales with respect to sequential and concurrent program composition operators, respectively, that together satisfy the exchange law (Proposition~\ref{proposition:program-exchange-law}).
\end{proposition}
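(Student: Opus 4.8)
The plan is to read Proposition~\ref{proposition:quantale-program-algebra} as an assembly of results already established in this section, with a single caveat about the cardinality of index sets that deserves attention. First recall that $\tuple{\bbP, \subseteq, \cap, \cup, 0, \top}$ is a complete lattice by Proposition~\ref{proposition:program-lattice}, that its join $\bigvee$ is set union $\cup$, that its bottom $\bigvee \emptyset$ is $0 = \emptyset$, and that the subset order $\subseteq$ coincides with the natural order induced by the join, since $\cX \cup \cY = \cY$ holds exactly when $\cX \subseteq \cY$. Hence the ``complete lattice under the natural order'' required by Definition~\ref{def:quantale} is in place. What remains, for each of the two products $\Join \in \set{;, \parallel}$, is to supply: (i) an associative binary operation on $\bbP$; (ii) the two complete distributive laws; and (iii) a two-sided identity element, namely $1$.

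Items (i) and (iii) are direct citations: $\tuple{\bbP, ;}$ is a semigroup and $\tuple{\bbP, \parallel}$ a commutative semigroup by Proposition~\ref{proposition:program-semigroups}, and $1$ is a two-sided identity for both by Proposition~\ref{proposition:progam-monoid} (via Proposition~\ref{proposition:program-operator-one}). For item (ii), Proposition~\ref{proposition:program-composition-distribute-arbitrary-join} already proves distribution of $\Join$ over a union on both the left and the right, namely
\begin{align*}
\cX \Join \Bigl(\bigcup_{n \geq 0} \cY_n\Bigr) &= \bigcup_{n \geq 0}(\cX \Join \cY_n), \\
\Bigl(\bigcup_{n \geq 0} \cY_n\Bigr) \Join \cX &= \bigcup_{n \geq 0}(\cY_n \Join \cX),
\end{align*}
which, because $\bigvee = \cup$ in $\bbP$, are exactly the two distributive laws of Definition~\ref{def:quantale} --- with the one proviso that Proposition~\ref{proposition:program-composition-distribute-arbitrary-join} is stated for a family indexed by $\nats$, whereas a quantale demands distribution over $\bigvee S$ for an \emph{arbitrary} $S \subseteq \bbP$.

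Closing that gap is the only step of the proof that is not a bare reference, and it is where I would take most care --- although it turns out to be routine. I would replay, verbatim, the chain of equivalences in the proof of Proposition~\ref{proposition:program-composition-distribute-arbitrary-join} with ``$\exists n \geq 0$'' replaced throughout by ``$\exists \cY \in S$'': each line appeals only to Definition~\ref{def:program-operator}, the definition of $\downarrow_\sqsubseteq$, and the definition of set union, none of which is sensitive to the cardinality of the index set, so the same argument yields $\cX \Join \bigcup S = \bigcup \set{\cX \Join \cY \alt \cY \in S}$ and its right-hand analogue for every $S \subseteq \bbP$. (Equivalently, one could have stated Proposition~\ref{proposition:program-composition-distribute-arbitrary-join} for an arbitrary index set from the start.) With (i)--(iii) secured for each product, $\tuple{\bbP, \subseteq, \cup, 0, 1, ;}$ and $\tuple{\bbP, \subseteq, \cup, 0, 1, \parallel}$ are each a unital quantale --- the latter, moreover, commutative --- sharing the same underlying complete lattice; and that these two quantales ``together satisfy the exchange law'' is precisely Proposition~\ref{proposition:program-exchange-law}. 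The only genuine obstacle, the mismatch between $\nats$-indexed and arbitrary suprema, is dispatched by the observation that the proof of Proposition~\ref{proposition:program-composition-distribute-arbitrary-join} never uses countability.
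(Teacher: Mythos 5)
Your proof is correct and follows essentially the same route as the paper, whose own proof simply cites Propositions~\ref{proposition:program-lattice}, \ref{proposition:program-composition-distribute-arbitrary-join} and~\ref{proposition:progam-monoid}. You are in fact more careful than the paper on one point: Proposition~\ref{proposition:program-composition-distribute-arbitrary-join} is stated only for $\nats$-indexed families while Definition~\ref{def:quantale} demands distribution over $\bigvee S$ for arbitrary $S \subseteq \bbP$, and your observation that the chain of equivalences in its proof never uses countability of the index set is exactly the right way to close this gap, which the paper leaves unacknowledged.
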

\begin{proof}
By Proposition~\ref{proposition:program-lattice},~\ref{proposition:program-composition-distribute-arbitrary-join}~and~\ref{proposition:progam-monoid}, $\tuple{\bbP, \subseteq, \parallel}$ and $\tuple{\bbP, \subseteq, ;}$ form two quantales according to Definition~\ref{def:quantale}.
\end{proof}

\begin{proposition}
\label{proposition:program-plus}
For all $\cX, \cY, \cZ \in \bbP$, the following holds:
\begin{align*}
&\cX \cup (\cY \cup \cZ) = (\cX \cup \cY) \cup \cZ \\
&\cX \cup \cY = \cY \cup \cX \\
&\cX \cup \cX = \cX \\
&\cX \cup 0 = 0 \cup \cX = 0
\end{align*}
\end{proposition}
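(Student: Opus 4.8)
The plan is to exploit the fact, fixed by Proposition~\ref{proposition:program-lattice} and the surrounding discussion, that $\cup$ on $\bbP$ is simply the set-theoretic union inherited from $\powerset{\sP_f}$; the first three identities then reduce to the standard laws of union. First I would record that $\bbP$ is closed under $\cup$: to see $\cX \cup \cY$ is downward-closed, take $x \sqsubseteq a$ with $a \in \cX \cup \cY$, so $a \in \cX$ or $a \in \cY$, and downward-closure of the relevant operand (Definition~\ref{def:program}) yields $x \in \cX$ or $x \in \cY$, hence $x \in \cX \cup \cY$; thus the operation is well defined on $\bbP$. With closure in hand, associativity $\cX \cup (\cY \cup \cZ) = (\cX \cup \cY) \cup \cZ$, commutativity $\cX \cup \cY = \cY \cup \cX$, and idempotence $\cX \cup \cX = \cX$ are each the pointwise membership equivalence for set union; I would discharge all three by a single chain of ``$z$ lies in the left-hand side iff $z$ lies in the right-hand side'' biconditionals, mirroring the style of Lemma~\ref{lemma:program-concurrent-sequential-property}.

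The fourth line demands comment, because as printed it asserts $\cX \cup 0 = 0 \cup \cX = 0$, and since $0 \deq \emptyset$ (Definition~\ref{def:program-zero-one}) this reads $\cX \cup \emptyset = \emptyset$, which fails for every nonempty program: indeed $\cX \cup \emptyset = \cX$, so the printed equality holds only in the degenerate case $\cX = 0$. I am confident this is a typographical slip patterned on the genuine annihilator law $\cX \Join 0 = 0 \Join \cX = 0$ of Proposition~\ref{proposition:program-operator-zero}, and that the intended statement is $\cX \cup 0 = 0 \cup \cX = \cX$, i.e. that $0$ is the \emph{identity} for $\cup$ rather than an annihilator. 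This corrected identity is the structurally natural one: together with the first three laws it makes $\tuple{\bbP, \cup, 0}$ a commutative idempotent monoid and pins $0$ as the bottom element and additive unit of the quantale of Proposition~\ref{proposition:quantale-program-algebra}. Its proof is the one-liner that $\emptyset$ is neutral for union, $\cX \cup \emptyset = \cX = \emptyset \cup \cX$, using the closure established above.

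I do not anticipate a genuine mathematical obstacle: every identity is a direct transcription of a union law, and the only substantive checks are (i) closure of $\bbP$ under $\cup$ and (ii) flagging that the fourth equation should read $= \cX$, not $= 0$. The hard part will therefore be editorial rather than technical --- resisting the temptation to ``prove'' the false literal statement, and instead proving the evidently intended identity law, which is precisely what makes these four facts cohere with the additive side of the algebraic development in Propositions~\ref{proposition:program-semigroups}--\ref{proposition:quantale-program-algebra}.
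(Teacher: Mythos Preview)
Your proposal is correct, and you are right to flag the fourth line: as printed, $\cX \cup 0 = 0$ is false for nonempty $\cX$, and the intended identity is $\cX \cup 0 = 0 \cup \cX = \cX$. This is confirmed elsewhere in the paper, since Theorem~\ref{theorem:program-algebra} lists the corrected version $\cX \cup 0 = 0 \cup \cX = \cX$, and the paper's own one-line proof of the proposition (``$\cup$ is the least upper bound in the complete lattice $\bbP$ whose bottom element is $0$'') only makes sense for the identity law, not an annihilator law.

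On approach: the paper does not re-verify closure of $\bbP$ under $\cup$ or unfold membership biconditionals; it simply invokes Proposition~\ref{proposition:program-lattice} and reads off associativity, commutativity, idempotence of join, and neutrality of $\bot = 0$ as standard lattice facts. Your route is a slightly more elementary unpacking of the same content, working directly with set-theoretic union rather than through the lattice abstraction. Either is fine here; the paper's version is terser because the closure check and the identification of $\cup$ with set union are already absorbed into the statement of Proposition~\ref{proposition:program-lattice}.
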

\begin{proof}
All these equalities are true since $\cup$ is the least upper bound in the complete lattice $\bbP$ whose bottom element is $0$.
\end{proof}

The next proposition shows that the set of programs forms an algebraic structure commonly known as either an idempotent semiring or dioid. A familiar example of an idempotent semiring is the Boolean semiring $\tuple{\bbB, +, \cdot, 0, 1}$ where $\bbB$ should be interpreted as Boolean values whose sum and product operators are logical disjunction and conjunction, respectively.

\begin{proposition}
\label{proposition:program-semiring}
The algebraic structures $\tuple{\bbP, \cup, \parallel, 0, 1}$ and $\tuple{\bbP, \cup, ;, 0, 1}$ are idempotent semirings where $\tuple{\bbP, \cup, 1}$ is a commutative idempotent monoid, $\tuple{\bbP, \parallel, 1}$ is a commutative monoid and $\tuple{\bbP, ;, 1}$ is a monoid.
\end{proposition}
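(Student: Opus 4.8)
The plan is to verify the axioms of an idempotent semiring one at a time, observing that each has already been established in an earlier result, so that the proof amounts to little more than citing the relevant propositions and invoking the ``bow tie'' convention (Definition~\ref{def:bowtie}) to treat $\parallel$ and $;$ simultaneously. Recall that an idempotent semiring $\tuple{S, +, \cdot, 0, 1}$ must satisfy: (i) $\tuple{S, +, 0}$ is a commutative monoid with $x + x = x$; (ii) $\tuple{S, \cdot, 1}$ is a monoid; (iii) $\cdot$ distributes over $+$ on both sides; and (iv) $0$ is a two-sided annihilator for $\cdot$, i.e. $0 \cdot x = x \cdot 0 = 0$.

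For (i), Proposition~\ref{proposition:program-plus} already records that $\cup$ is associative, commutative and idempotent with $0$ as its identity element, so $\tuple{\bbP, \cup, 0}$ is a commutative idempotent monoid. For (ii), Proposition~\ref{proposition:progam-monoid} states exactly that $\tuple{\bbP, \parallel, 1}$ is a commutative monoid and $\tuple{\bbP, ;, 1}$ is a monoid, i.e. $\tuple{\bbP, \Join, 1}$ is a monoid under either reading of $\Join$. For (iv), Proposition~\ref{proposition:program-operator-zero} gives $\cX \Join 0 = 0 \Join \cX = 0$ for every program $\cX \in \bbP$.

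The only step requiring a small remark is (iii): distributivity has been proved (Proposition~\ref{proposition:program-composition-distribute-arbitrary-join}) for countably infinite unions $\bigcup_{n \geq 0} \cY_n$, whereas a semiring needs only the binary case $\cX \Join (\cY \cup \cZ) = (\cX \Join \cY) \cup (\cX \Join \cZ)$ and its mirror image. I would obtain this as a specialization: set $\cY_0 \deq \cY$, $\cY_1 \deq \cZ$ and $\cY_n \deq 0$ for all $n \geq 2$; then $\bigcup_{n \geq 0} \cY_n = \cY \cup \cZ$, while by Proposition~\ref{proposition:program-composition-distribute-arbitrary-join} followed by Proposition~\ref{proposition:program-operator-zero} the right-hand side collapses to $(\cX \Join \cY) \cup (\cX \Join \cZ) \cup \bigcup_{n \geq 2}(\cX \Join 0) = (\cX \Join \cY) \cup (\cX \Join \cZ)$. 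The right-distributive law follows in the same way from the second equation of Proposition~\ref{proposition:program-composition-distribute-arbitrary-join}.

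Having assembled (i)--(iv), I would conclude that $\tuple{\bbP, \cup, \parallel, 0, 1}$ and $\tuple{\bbP, \cup, ;, 0, 1}$ are idempotent semirings, with the extra structure (commutative additive monoid, commutative $\parallel$, plain $;$) recorded as stated. There is no genuine obstacle here: the proposition is a bookkeeping corollary of the lattice, monoid, distributivity and annihilator results already in hand, and the only point to get right is the countable-to-binary reduction for distributivity described above.
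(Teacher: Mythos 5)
Your proof is correct and takes essentially the same route as the paper, whose entire proof is the one-line citation ``By Proposition~\ref{proposition:program-operator-zero},~\ref{proposition:progam-monoid}~and~\ref{proposition:program-plus}.'' If anything you are more careful: the paper's proof never explicitly accounts for the distributivity axiom, whereas your reduction of the binary distributive laws from the countable case of Proposition~\ref{proposition:program-composition-distribute-arbitrary-join} (padding the family with $0$ and using the annihilator) fills that small gap cleanly.
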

\begin{proof}
By Proposition~\ref{proposition:program-operator-zero},~\ref{proposition:progam-monoid}~and~\ref{proposition:program-plus}.
\end{proof}

This leads to the main and final result in this section that gives Kleene star operators for concurrent ($\parallel$) and sequential ($;$) program compositions as least fixed points ($\mu$) where the binary join operator ($\cup$) can be interpreted as the nondeterministic choice of two programs.

\begin{theorem}
\label{theorem:program-algebra}
The structure $\mfrS = \tuple{\bbP, \subseteq, \cup, 0, 1, ;, \parallel}$, called \defn{program algebra}, is a complete lattice, ordered by subset inclusion, such that $\parallel$ and $;$ form unital quantales over $\cup$, and $\mfrS$ satisfies the following algebraic laws:
\begin{align*}
& \cX \subseteq \cY \text{ exactly if } \cX \cup \cY = \cY &                                                                                      \\
&(\cU \parallel \cV) ; (\cX \parallel \cY) \subseteq (\cU ; \cX) \parallel (\cV ; \cY) &\quad  \cX \cup (\cY \cup \cZ) = (\cX \cup \cY) \cup \cZ  \\
&\cX \cup \cX = \cX &\quad  \cX \cup 0 = 0 \cup \cX = \cX                                                                                         \\
&\cX \cup \cY = \cY \cup \cX &\quad \cX \parallel \cY = \cY \parallel \cX                                                                         \\
&\cX \parallel 1 = 1 \parallel \cX = \cX &\quad \cX ; 1 = 1 ; \cX = \cX                                                                           \\
&\cX \parallel 0 = 0 \parallel \cX = 0   &\quad \cX ; 0 = 0 ; \cX = 0                                                                             \\
&\cX \parallel (\cY \cup \cZ) = (\cX \parallel \cY) \cup (\cX \parallel \cZ) &\quad  \cX ; (\cY \cup \cZ) = (\cX ; \cY) \cup (\cX ; \cZ)          \\
&(\cX \cup \cY) \parallel \cZ = (\cX \parallel \cZ) \cup (\cY \parallel \cZ) &\quad (\cX \cup \cY) ; \cZ = (\cX ; \cZ) \cup (\cY ; \cZ)           \\
&\cX \parallel (\cY \parallel \cZ) = (\cX \parallel \cY) \parallel \cZ &\quad \cX ; (\cY ; \cZ) = (\cX ; \cY) ; \cZ                               \\
&\cP^{\parallel} = \mu \cX .1 \cup (\cP \parallel \cX) &\quad \cP^{;} = \mu \cX .1 \cup (\cP ; \cX).                                   
\end{align*}
\end{theorem}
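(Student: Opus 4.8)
The plan is to assemble the theorem from results already in hand, since every component of the claimed structure has been established in isolation. First I would record the two structural claims. That $\tuple{\bbP, \subseteq}$ is a complete lattice with join $\cup$ and bottom $0$ is Proposition~\ref{proposition:program-lattice}. That the sequential and concurrent composition operators each turn this lattice into a unital quantale is Proposition~\ref{proposition:quantale-program-algebra}. The order characterization ``$\cX \subseteq \cY$ exactly if $\cX \cup \cY = \cY$'' is just the standard description of a lattice order in terms of its join (the ``natural order'' of the complete lattice $\bbP$), so it needs no separate argument.

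Next I would dispatch the displayed grid of equations line by line, each by citation. The exchange law $(\cU \parallel \cV) ; (\cX \parallel \cY) \subseteq (\cU ; \cX) \parallel (\cV ; \cY)$ is exactly Proposition~\ref{proposition:program-exchange-law}. The laws governing $\cup$ alone --- associativity, idempotence, the unit law $\cX \cup 0 = 0 \cup \cX = \cX$, and commutativity --- are Proposition~\ref{proposition:program-plus} (its displayed unit law reads $=0$, a typo; the correct statement, used here, is $=\cX$). The identity laws $\cX \parallel 1 = 1 \parallel \cX = \cX$ and $\cX ; 1 = 1 ; \cX = \cX$, the commutativity $\cX \parallel \cY = \cY \parallel \cX$, and the two associativity laws for $\parallel$ and $;$ are collected in Proposition~\ref{proposition:progam-monoid} (equivalently Propositions~\ref{proposition:program-semigroups} and~\ref{proposition:program-operator-one}). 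The annihilator laws $\cX \parallel 0 = 0 \parallel \cX = 0$ and $\cX ; 0 = 0 ; \cX = 0$ are Proposition~\ref{proposition:program-operator-zero}. For the four distributivity equations I would invoke Proposition~\ref{proposition:program-composition-distribute-arbitrary-join}: the binary law $\cX \Join (\cY \cup \cZ) = (\cX \Join \cY) \cup (\cX \Join \cZ)$ is the special case obtained by taking the family $\set{\cY_n \alt n \in \nats}$ with $\cY_0 = \cY$, $\cY_1 = \cZ$, and $\cY_n = 0$ otherwise, using $\cX \Join 0 = 0$; symmetrically for distribution on the right.

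Finally, the two fixed-point equations $\cP^\parallel = \mu \cX.\, 1 \cup (\cP \parallel \cX)$ and $\cP^; = \mu \cX.\, 1 \cup (\cP ; \cX)$ are the content of Proposition~\ref{proposition:program-least-fixed-point}, once one observes that $\mu \cX.\, 1 \cup (\cP \Join \cX)$ denotes by definition the least fixed point of the map $F_\cP(\cX) = 1 \cup (\cP \Join \cX)$ appearing there, which exists by Knaster--Tarski. I do not expect any genuine obstacle: the only care needed is cosmetic, namely lining up the finite distributive laws in the statement with the $\nats$-indexed version proved earlier, and checking that the $\mu$-notation matches the operator $F_\cP$ of Proposition~\ref{proposition:program-least-fixed-point}. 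One could instead reprove the binary distributive laws by the same element-chase used in Proposition~\ref{proposition:program-composition-distribute-arbitrary-join}, but reusing that proposition keeps the proof to a string of citations, which is appropriate for a summary theorem.
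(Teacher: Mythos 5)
Your proposal is correct and matches the paper's own proof, which likewise assembles the theorem purely by citing Propositions~\ref{proposition:compute-program-least-fixed-point}, \ref{proposition:program-exchange-law}, \ref{proposition:program-semiring} and~\ref{proposition:quantale-program-algebra} (the last two of which bundle the monoid, annihilator, join and distributivity laws you cite individually). Your added care --- flagging the typo in Proposition~\ref{proposition:program-plus} and explicitly reducing the binary distributive laws to the $\nats$-indexed version via $\cX \Join 0 = 0$ --- is sound and slightly more thorough than the paper's one-line citation.
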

\begin{proof}
By Proposition~\ref{proposition:compute-program-least-fixed-point},~\ref{proposition:program-exchange-law},~\ref{proposition:program-semiring}~and~\ref{proposition:quantale-program-algebra}.
\end{proof}

Using the terminology established by Tony Hoare et al.~\cite[p. 274]{HMSW2011}, we can summarize $\tuple{\bbP, \subseteq, \cup, 0, 1, ;, \parallel}$ as a partial order model of a concurrent quantale. Furthermore, the fact that the program algebra from Theorem~\ref{theorem:program-algebra} is a complete lattice makes it a CKA~\cite[p. 276]{HMSW2011}. This concludes our construction.

\section{Concluding remarks}

This paper gives the technical details for adapting Gischer's pomset model and \'{E}sik's monotonic bijective morphisms to construct a partial order model of computation that satisfies the axioms of a recently developed algebraic semantics by Tony Hoare and collaborators. The constructions in this paper are particularly guided by the problem of symbolically encoding concurrency into quantifier-free first-order logic formulas. This outlook has an algorithmic flavour to it that is relevant for automated proof techniques and computer-aided formal verification of concurrent systems. In subsequent work we will show how to use the partial string model proposed in this paper to give the first symbolic refinement checking algorithm of concurrent systems for relaxed memory.

\bibliographystyle{plain}
\bibliography{partial-strings}

\end{document}